\documentclass{article}

\usepackage{amsmath}
\usepackage{tikz}
\usetikzlibrary{decorations.pathreplacing,angles,quotes}
\usetikzlibrary{patterns}
\usepackage{amsthm}
\usepackage{amssymb}
\usepackage{mathrsfs}

\usepackage{enumitem}

\usepackage[utf8]{inputenc}
\usepackage[T1]{fontenc}

\usepackage{xspace}

\usepackage{xcolor}
\usepackage{hyperref} 

\newtheorem{theorem}{Theorem}[section]
\newtheorem{lemma}[theorem]{Lemma}
\newtheorem{proposition}[theorem]{Proposition}
\newtheorem{corollary}[theorem]{Corollary}
\newtheorem{remark}[theorem]{Remark}
\newtheorem{definition}[theorem]{Definition}

\newcommand\pref{\ensuremath{\mathrm{Pref}}}
\newcommand\comp{\ensuremath{\rho}}
\newcommand\compsup{\ensuremath{\comp_{\sup}}}
\newcommand\compinf{\ensuremath{\comp_{\inf}}}
\newcommand\dic{\ensuremath{\mathrm{Dic}}}
\newcommand\bin{\ensuremath{\{0,1\}}}
\newcommand\f{\ensuremath{\mathscr{F}}}
\newcommand\p{\ensuremath{\mathscr{P}}}
\newcommand\tree{\ensuremath{\mathscr{T}}}
\newcommand\ew{\ensuremath{\lambda}}
\newcommand\F{\ensuremath{\mathcal{F}}}
\newcommand\DB{\ensuremath{\mathrm{DB}}}
\newcommand\ga[2]{\ensuremath{g_{#1}^{#2}}}
\newcommand\pstar{\ensuremath{(\star)}\xspace}
\newcommand\occ{\ensuremath{\mathrm{Occ}}}
\newcommand\pr{\ensuremath{\mathrm{Pr}}}
\newcommand\nn{\ensuremath{\mathbb{N}}}

\newcommand{\footremember}[2]{
    \footnote{#2}
    \newcounter{#1}
    \setcounter{#1}{\value{footnote}}
}

\title{Lempel-Ziv: a ``one-bit catastrophe''\\ but not a tragedy}
\author{Guillaume Lagarde\footremember{a}{Univ Paris Diderot,
    Sorbonne Paris Cit\'{e}, IRIF, UMR 7089 CNRS, F-75205 Paris,
    France.  Email: \texttt{guillaume.lagarde@irif.fr}.} \and Sylvain
  Perifel\footremember{b}{Univ Paris Diderot, Sorbonne Paris
    Cit\'{e}, IRIF, UMR 7089 CNRS, F-75205 Paris, France.  Email:
    \texttt{sylvain.perifel@irif.fr}.}}

\begin{document}

\maketitle

\begin{abstract}
  The so-called ``one-bit catastrophe'' for the compression algorithm LZ'78
  asks whether the compression ratio of an infinite word can change when a
  single bit is added in front of it. We answer positively this open question
  raised by Lutz and others: we show that there exists an infinite word $w$
  such that $\compsup(w)=0$ but $\compinf(0w)>0$, where $\compsup$ and
  $\compinf$ are respectively the $\limsup$ and the $\liminf$ of the
  compression ratios $\comp$ of the prefixes (Theorem~\ref{thm:infinite}).

  To that purpose we explore the behaviour of LZ'78 on finite words and show
  the following results:
  \begin{itemize}
  \item There is a constant $C>0$ such that, for any finite word $w$ and any
    letter $a$, $\comp(aw)\leq C\sqrt{\comp(w)\log|w|}$. Thus, sufficiently
    compressible words ($\comp(w)=o(1/\log|w|)$) remain compressible with a
    letter in front (Theorem~\ref{thm:upperbound});
  \item The previous result is tight up to a multiplicative constant for any
    compression ratio $\comp(w)=O(1/\log|w|)$ (Theorem~\ref{thm:lowerbound}).
    In particular, there are infinitely many words $w$ satisfying
    $\comp(w)=O(1/\log|w|)$ but $\comp(0w)=\Omega(1)$.
  \end{itemize}
\end{abstract}

\section{Introduction}

Suppose you compressed a file using your favorite compression
algorithm, but you realize there were a typo that makes you add a
single bit to the original file. Compress it again and you get a much
larger compressed file, for a one-bit difference only between the
original files. Most compression algorithms fortunately do not have
this strange behaviour; but if your favorite compression algorithm is
called LZ'78, one of the most famous and studied of them, then this
surprising scenario might well happen… In rough terms, that is what we
show in this paper, thus closing a question advertised by Jack
Lutz under the name ``one-bit catastrophe'' and explicitly stated 
for instance in papers of Lathrop and
Strauss~\cite{LathropStrauss}, Pierce II and
Shields~\cite{Pierce2000}, as well as more recently by
L\'opez-Vald\'es~\cite{Lopez}.

\subsection*{Ziv-Lempel algorithms}

In the paper~\cite{LZ78} where they introduce their second compression
algorithm LZ'78, Ziv and Lempel analyse its performance in terms of
finite-state lossless compressors and show it achieves the best possible
compression ratio. Together with its cousin algorithm LZ'77~\cite{LZ77}, this
generic lossless compressor has paved the way to many dictionary coders, some
of them still widely used in practice today. For instance, the
\texttt{deflate} algorithm at the heart of the open source compression program
\texttt{gzip} uses a combination of LZ'77 and Huffman coding; or the image
format GIF is based on a version of LZ'78. As another example, methods for
efficient access to large compressed data on internet based on Ziv-Lempel
algorithms have been proposed~\cite{Hoobin}.

Besides its pratical interest, the algorithm LZ'78 was the starting point of a
long line of theoretical research, triggered by the optimality result among
finite-state compressors proved by Ziv and Lempel. In recent work, for
instance, a comparison of pushdown finite-state compressors and LZ'78 is made
in~\cite{MMP11}; the article~\cite{KKNPS17} studies Lempel-Ziv and Lyndon
factorisations of words; or the efficient construction of absolutely normal
numbers of~\cite{LM16} makes use of the Lempel-Ziv parsing.

Some works of bioinformatics have also focussed on Ziv-Lempel algorithms,
since their compression scheme makes use of repetitions in a sequence in a way
that proves useful to study DNA sequences (see e.g.~\cite{ZHZC09}), or to
measure the complexity of a discrete signal~\cite{AboyHAA06} for instance.

Actually, both in theory and in practice, Ziv-Lempel algorithms are
undoubtedly among the most studied compression algorithms and we have chosen
only a very limited set of references: we do not even claim to be
exhaustive in the list of fields where LZ'77 or LZ'78 play a role.

\subsection*{Robustness}

Yet, the robustness of LZ'78 remained unclear: the question of whether the
compression ratio of a sequence could vary by changing a single bit appears
already in~\cite{LathropStrauss}, where the authors also ask how LZ'78 will
perform if a bit is added in front of an optimally compressible word. Since
the Hausdorff dimension of complexity classes introduced by Lutz~\cite{Lutz03}
can be defined in terms of compression (see~\cite{LM13}), this question is
linked to finite-state and polynomial-time dimensions as~\cite{Lopez} shows. As
a practical illustration of the issue the (lack of) robustness can cause, let
us mention that the \texttt{deflate} algorithm tries several starting points
for its parsing in order to improve the compression ratio.

In this paper, we show the existence of an infinite sequence $w$ which is
compressible by LZ'78, but the addition of a single bit in front of it makes
it incompressible (the compression ratio of $0w$ is non-zero, see
Theorem~\ref{thm:infinite}), thus we settle the ``one-bit catastrophe''
question. To that end, we study the question over finite words, which enable
stating more precise results. For a word $w$ and a letter $a$, we first prove
in Theorem~\ref{thm:upperbound} that the compression ratio $\comp(aw)$ of $aw$ cannot
deviate too much from the compression ratio $\comp(w)$ of $w$:
$$\comp(aw)\leq 3\sqrt{2}\sqrt{\comp(w)\log|w|}.$$
In particular, $aw$ can only become incompressible ($\comp(aw)=\Theta(1)$) if
$w$ is already poorly compressible, namely $\comp(w)=\Omega(1/\log n)$. This
explains why the one-bit catastrophe cannot be ``a tragedy'' as we point out
in the title.

However, our results are tight up to a constant factor, as we show in
Theorem~\ref{thm:lowerbound}: there are constants $\alpha,\beta>0$ such that,
for any $l(n)\in[90^2\log^2n, \sqrt{n}]$, there are infinitely many words $w$
satisfying
$$\comp(w)\leq\alpha\frac{\log|w|}{l(|w|)}\quad\text{whereas}\quad
\comp(0w)\geq\beta\frac{\log|w|}{\sqrt{l(|w|)}}.$$
In particular, for $l(n)=90^2\log^2n$, these words satisfy
$$\comp(w)\leq\frac{1}{\log|w|}\quad\text{and}\quad
\comp(0w)\geq\frac{\beta}{90}$$
(this is the one-bit catastrophe over finite words). But actually the story
ressembles much more a tragedy for well-compressible words. Indeed, for
$l(n)=\sqrt{n}$ we obtain:
$$\comp(w)\leq\alpha\frac{\log|w|}{\sqrt{|w|}}\quad\text{whereas}\quad
\comp(0w)\geq\beta\frac{\log|w|}{|w|^{1/4}},$$
that is to say that the compression ratio of $0w$ is much worse than that of
$w$ (which in that case is optimal). To give a concrete idea, the bounds given
by our Theorem~\ref{thm:toy} for words of size 1~billion ($|w|=10^9$) yield a
compression for $w$ of size at most $d\log d\leq 960{,}000$ (where
$d=1.9\sqrt{|w|}$), whereas for $0w$ the compression size is at least
$d'\log d'\geq 3{,}800{,}000$ (where $d'=0.039|w|^{3/4}$).\footnote{Actually,
  throughout the paper we preferred readability over optimality and thus did
  not try to get the best possible constants; simulations show that there is a
  lot of room for improvement, since already for small words the difference is
  significant (using notations introduced in Sections~\ref{sec:notations}
  and~\ref{sec:toy}, for $w=\pref(x)$ with $x\in\DB(12)$, $|w|\simeq 8.10^6$
  and $w$ is parsed in about $4100$ blocks, whereas $0w$ is parsed in more
  than $200{,}000$ blocks).}

This ``catastrophe'' shows that LZ'78 is not robust with respect to the
addition or deletion of bits. Since a usual good behaviour of functions used in
data representation is a kind of ``continuity'', our results show that, in this
respect, LZ'78 is not a good choice, as two words that differ in a single
bit can have images very far apart.

\subsection*{Organization of the paper}

In Section~\ref{sec:notations} we introduce all the notions related to
LZ'78 and state our main results (Section~\ref{sec:results}).
Section~\ref{sec:upperbound} is devoted to the proof of the upper
bound (the ``not a tragedy'' part), whereas the rest of the paper is
about lower bounds. In Section~\ref{sec:toy} we explicitly give a
word, based on de Bruijn sequences, whose compression ratio is optimal
but the addition of a single bit deteriorates the compression ratio as
much as the aforementioned upper bounds allows to. That is a
particular case of the result of Section~\ref{sec:general} but we
include it anyway for three reasons: it illustrates the main ideas
without obscuring them with too many technical details; the
construction is more explicit; and the bounds are better.

In Section~\ref{sec:general} we prove our main theorem on finite words
(Theorem~\ref{thm:lowerbound}). It requires the existence of a family of ``de
Bruijn-style'' words shown in Section~\ref{sec:family-de-bruijn} thanks to the
probabilistic method. Finally, Section~\ref{sec:infinite} uses the previous
results to prove the ``original'' one-bit catastrophe, namely on infinite
words (Theorem~\ref{thm:infinite}).

\section{Lempel-Ziv, compression and results}
\label{sec:notations}

Before turning to the description of LZ'78 algorithm, let us recall standard
notations on words.

\subsection{Basic notations}

The \emph{binary alphabet} is the set $\bin$. A \emph{word} $w$ is an
element of $\{0,1\}^\star$, that is, a finite ordered sequence of
letters $0$ or $1$, whose \emph{length} is denoted by $|w|$. The empty
word is denoted by $\ew$.  For a word $w = x_0\cdots x_{n-1}$ (note
that the indices begin at zero), where $x_i \in \{0,1\}$, $w[i..j]$
will denote the \emph{substring} $x_i\cdots x_j$ of $w$ (or $\ew$ if
$j<i$); $w[i]$ or $w_i$ will denote the letter $x_i$; and $w_{\leq i}$
(respectively $w_{< i}$) will denote $w[0..i]$ (resp. $w[0..i-1]$). We
say that a word $m$ is a \emph{factor} of $w$ if $m$ is any substring
$w[i..j]$. In the particular case of $i = 0$ (respectively $j = n-1$),
$m$ is also called a \emph{prefix} (resp. a \emph{suffix}) of
$w$. The set of factors of $w$ is denoted by $\f(w)$, and its set of
prefixes $\p(w)$. By extension, for a set $M$ of words, $\f(M)$ will denote
$\cup_{w \in M} \f(w)$ and similarly for $\p(M)$. If $u$ and $w$ are
two words, we denote by $\occ_w(u)$ the number of occurrences of the
factor $u$ in $w$.

The ``length-lexicographic order'' on words is the lexicographic order where
lengths are compared first.

An \emph{infinite word} is an element of $\{0,1\}^\nn$. The same notations as
for finite words apply.

All logarithms will be in base $2$. The size of a finite set $A$ is written
$|A|$.

\subsection{LZ'78}

\subsubsection{Notions relative to LZ}

A \emph{$k$-partition} (or just \emph{partition}) of a word $w$ is a sequence
of $k$ non-empty words $m_1,\dots, m_k$ such that
$w = m_1.m_2.\cdots.m_k$. The \emph{LZ-parsing} (or just \emph{parsing})
of a word $w$ is the unique partition of $w=m_1\cdots m_k$ such that:
\begin{itemize}
\item $m_1,\dots,m_{k-1}$ are all distinct\footnote{The last word $m_k$ might be
    equal to another $m_i$.};
\item $\forall i\leq k$, $\p(m_i)\subseteq \{m_1,\dots,m_i\}$.
\end{itemize}
The words $m_1,\dots, m_k$ are called \emph{blocks}. The \emph{predecessor} of
a block $m_i$ is the unique $m_j$, $j<i$, such that $m_i=m_ja$ for a letter
$a$. The compression algorithm LZ'78 parses the word $w$ and encodes each
block $m_i$ as a pointer to its predecessor $m_j$ together with the letter $a$
such that $m_i=m_ja$. For instance, the word $w=00010110100001$ is parsed as
$$\begin{array}{l|c|c|c|c|c|c|c}
  \text{Blocks} & 0 & 00 & 1 & 01 & 10 &100 & 001\\ \hline
  \text{Block number} & 0 & 1 & 2 & 3 & 4 & 5 & 6
\end{array}$$
and thus encoded as
$$(\ew,0);(0,0);(\ew,1);(0,1);(2,0);(4,0);(1,1).$$

The \emph{dictionary} of $w$ is the set
$\dic(w)=\{m_1,\dots, m_k\}$ (in the example, $\{0,1,00,01,10,001,100\}$).
Remark that, by definition, $\{\ew\}\cup\dic(w)$ is prefix-closed.

The \emph{parsing tree} of $w$ is the unique rooted binary tree $\tree(w)$
whose $(k+1)$ vertices are labeled with $\ew,m_1,\dots,m_k$, such that the
root is $\ew$ and if a vertex $m_i$ has a left child, then it is $m_i0$, and
if it has a right child, then it is $m_i1$.\footnote{Note that, in order to
  recover the parsing from the parsing tree, the vertices must also be labeled
  by the order of apparition of each block, but we do not need that in the
  sequel.} See Figure~\ref{fig:parsingtree}. 
Remark also that the depth of a vertex is equal to
the size of the corresponding block.

\begin{figure}[h]\centering
\tikzstyle{noeud}=[draw,circle,fill,draw=none,minimum size=5pt,inner sep=0pt]
\tikzstyle{noeudr}=[draw,circle,fill=green!70!blue!50,draw=none]
\tikzstyle{noeudg}=[draw,circle,fill=blue!100!green!60,draw=none]

\begin{tikzpicture}[scale=1, every node/.style={scale=1}]

  \node[noeud,label=above:$\lambda$] (Empty) at (0,0) {};

  \node[noeud,label=left:$0$] (R0) at (-1,-1) {};
  \node[noeud,label=right:$1$] (R1) at (1,-1) {};

  \node[noeud,label=left:$00$] (Q0) at (-1.4,-2) {};
  \node[noeud,label=right:$01$] (Q1) at (-0.6,-2) {};
  \node[noeud,label=right:$10$] (Q2) at (0.6,-2) {};

  \node[noeud,label=left:$001$] (Z0) at (-1,-3) {};
  \node[noeud,label=right:$100$] (Z1) at (0.2,-3) {};

  \draw[] (Empty) -- (R0);
  \draw[] (Empty) -- (R1);
  \draw[] (R0) -- (Q0);
  \draw[] (R0) -- (Q1);
  \draw[] (R1) -- (Q2);

  \draw[] (Q2) -- (Z1);
  \draw[] (Q0) -- (Z0);

  \end{tikzpicture}
  \caption{Parsing tree of $00010110100001$.}
  \label{fig:parsingtree}
\end{figure}
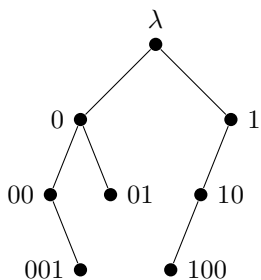

By abuse of language, we say that a block $b$ ``increases'' or ``grows'' in
the parsing of a word $w$ when we consider one of its successors, or when we
consider a path from the root to the leaves that goes through $b$. Indeed,
going from $b$ to its successor amounts to add a letter at the end of $b$
(hence the ``increase'').

\subsubsection{Compression ratio}

As in the example above, given a word $w$ and its LZ-parsing $m_1\cdots m_k$,
the \emph{LZ-compression} of $w$ is the ordered list of $k$ pairs $(p_i,a_i)$,
where $p_i$ is the binary representation of the unique integer $j<i$ such that
$m_j = m_i[0..(|m_i|-2)]$, and $a_i$ the last letter of $m_i$ (that is, the
unique letter such that $m_i = m_ja_i$). When the LZ-compression is given, one
can easily reconstruct the word $w$.
\begin{remark}\label{rem:extremal}
  \begin{itemize}
  \item If $x$ is a word, we define $\pref(x)$ the concatenation of all its
    prefixes in ascending order, that is,
    $$\pref(x)=x_0.x_0x_1.x_0x_1x_2.\cdots.x_0\cdots x_{n-2}x_{n-1}.$$
    Then the parsing of the word $w=\pref(x)$ is exactly the prefixes of $x$,
    thus the size of the blocks increases each time by one: this is the
    optimal compression. In that case, the number of blocks is
    $$k=|x|=\sqrt{2}\sqrt{|w|}-O(1).$$
    Actually, it is easy to see that this optimal compression is attained only
    for the words $w$ of the form $\pref(x)$.

    In Section~\ref{sec:general} we will need the concatenation of all
    prefixes of $x$ starting from a size $p+1$, denoted by $\pref_{>p}(x)$,
    that is,
    $$\pref_{>p}(x)=x_0x_1\cdots x_p.x_0x_1\cdots x_{p+1}.\cdots.x_0\cdots
    x_{n-1}.$$
  \item On the other hand, if $w$ is the concatenation, in length-lexicographic
    order, of all words of size $\leq n$ ($w=0.1.00.01.10.11.000.001\dots$),
    then it has size
    $$|w|=\sum_{i=1}^n i2^i=(n-1)2^{n+1}+2,$$
    and its parsing consists of all the words up to size $n$, therefore that
    is the worst possible case and the number of blocks is
    $$k=2^{n+1}-2=\frac{|w|}{\log|w|}+O\left(\frac{|w|}{\log^2|w|}\right).$$
    (And that is clearly not the only word achieving this worst compression.)
  \end{itemize}
\end{remark}
The number of bits needed in the LZ-compression is
$\Theta(\sum_{i=1}^k (|p_i| + 1))=\Theta(k\log k)$. As the two previous
extremal cases show, $k\log k=\Omega(\sqrt{|w|}\log|w|)$ and $k\log k=O(|w|)$.
\begin{definition}
  The \emph{compression ratio} of a word $w$ is
  $$\comp(w)=\frac{|\dic(w)|\log |\dic(w)|}{|w|}.$$
\end{definition}
As Remark~\ref{rem:extremal} shows,
$$\comp(w)=\Omega\left(\frac{\log|w|}{\sqrt{|w|}}\right)
\quad\text{and}\quad
\comp(w)\leq 1+O\left(\frac{1}{\log|w|}\right).$$

A sequence of words $(w_n)$ is said LZ-compressible if $\comp(w_n)$ tends to
zero (i.e. $k_n\log k_n=o(|w_n|)$), and consistently it will be considered
\emph{LZ-incompressible} if $\liminf_{n\to\infty}\comp(w_n)>0$ (in other
terms, $k_n \log k_n=\Omega(|w_n|)$).

Actually, the $(\log k)$ factor is not essential in the analysis of the
algorithm, therefore we drop it in our definitions (moreover, most of the time
we will focus directly on the size of the dictionary rather than the
compression ratio).
\begin{definition}
  The \emph{size of the LZ-compression} of $w$ (or \emph{compression size}, or
  also \emph{compression speed} when speaking of a sequence of words) is
  defined as the size of $\dic(w)$, that is, the number of blocks in the
  LZ-parsing of $w$.
\end{definition}
Remark that $|\dic(w)| = \Omega(\sqrt{|w|})$ and
$|\dic(w)| = O(|w|/\log(|w|))$. We can now restate the definition of
incompressibility of a sequence of words in terms of compression speed instead
of the number of bits in the LZ-compression.
\begin{definition}
  A sequence of words $(w_n)$ is said \emph{incompressible} iff
  $$|\dic(w_n)| = \Theta\left(\frac{|w_n|}{\log(|w_n|)}\right).$$
\end{definition}

In those definitions, we have to speak of sequences of finite words since the
asymptotic behaviour is considered. That is not needed anymore for infinite
words, of course, but then two notions of compression ratio are defined,
depending on whether we take the $\liminf$ or $\limsup$ of the compression
ratios of the prefixes.
\begin{definition}\label{def:infinite-compression-ratio}
  Let $w\in\{0,1\}^\nn$ be an infinite word.
  $$\compinf(w)=\liminf_{n\to\infty}\comp(w_{<n})\quad\text{and}\quad
  \compsup(w)=\limsup_{n\to\infty}\comp(w_{<n}).$$
\end{definition}

\subsection{One-bit catastrophe and results}
\label{sec:results}

The one-bit catastrophe question is originally stated only on infinite
words. It asks whether there exists an infinite word $w$ whose
compression ratio changes when a single letter is added in front of
it. More specifically, a stronger version asks whether there exists an
infinite word $w$ compressible (compression ratio equal to $0$) for
which $0w$ is not compressible (compression ratio $>0$). At
Section~\ref{sec:infinite} we will answer positively that question:
\begin{theorem}\label{thm:infinite}
  There exists $w\in\{0,1\}^\nn$ such that
  $$\compsup(w)=0\quad\text{and}\quad\compinf(w)\geq\frac{1}{6\,075}.$$
\end{theorem}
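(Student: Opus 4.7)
The plan is to build $w$ as an infinite concatenation $w = u_1 u_2 u_3 \cdots$ of the ``bad'' finite words provided by Theorem~\ref{thm:lowerbound} applied with $l(n) = 90^2\log^2 n$, so that each $u_i$ satisfies $\comp(u_i) \le 1/\log|u_i|$ and $\comp(0u_i) \ge \beta/90$. I pass to a subsequence so that the lengths $N_i = |u_i|$ grow sufficiently fast (say $N_{i+1} \ge 2^{N_i}$), and set $L_i = N_1 + \cdots + N_i$. Throughout I will exploit that each $u_i$ coming from the construction of Section~\ref{sec:general} is built from prefixes of a de Bruijn-style word, hence is itself well-compressible at \emph{every} one of its prefixes, not only at length $N_i$.

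For $\compsup(w) = 0$, consider any prefix $w_{<n}$ and locate $i$ with $L_{i-1} < n \le L_i$. Extending an LZ-parse by a new segment with a non-empty starting dictionary can only decrease the block count, so $|\dic(w_{<n})|$ is bounded by $\sum_{j<i}|\dic(u_j)|$, plus an $O(\sqrt{n - L_{i-1}})$ contribution from the partial $u_i$-segment (using the prefix-wise compressibility of $u_i$), plus $O(i)$ boundary corrections. The super-exponential growth of $N_i$ makes $\sum_{j<i}|\dic(u_j)|$ negligible against $N_i/\log^2 N_i$; combining the three pieces yields $|\dic(w_{<n})|\log|\dic(w_{<n})| = o(n)$, whence $\comp(w_{<n}) \to 0$.

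For $\compinf(0w) \ge 1/6075$, the goal is $|\dic(0w_{<n})|\log|\dic(0w_{<n})| \ge n/6075$ for every sufficiently large $n$. The plan is to show that once the LZ-parse of $0w$ has crossed the segment $u_i$, it has produced at least $c\, N_i/\log N_i$ blocks; since $n$ is of order $N_i$ by the fast growth, this is exactly what is needed. The finite bound $\comp(0u_i)\ge \beta/90$ gives $\Omega(N_i/\log N_i)$ blocks for the standalone parse of $0u_i$. In our setting the parse of the $u_i$-segment starts from a prior dictionary $D_i$ of size $O(L_{i-1}/\log L_{i-1}) = o(N_i^\epsilon)$, built while parsing $0 u_1\cdots u_{i-1}$. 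I plan to argue that such a tiny $D_i$ cannot reduce the block count of the $u_i$-segment by more than a constant factor, so $\Omega(N_i/\log N_i)$ fresh blocks are still produced; the specific constant $1/6075$ then comes out of propagating $\beta/90$ through the resulting conversion factors and the compression-ratio-to-block-count conversion.

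The main obstacle is precisely this ``context-robust'' step, because the finite theorem only gives the bad-word property for the standalone parse of $0u_i$, whereas we need it for $u_i$ parsed with a non-empty prior dictionary. To close this gap I plan to revisit the proof of Theorem~\ref{thm:lowerbound} and verify that the lower bound on $|\dic(0u_i)|$ is driven by de Bruijn-style factors that cannot all be pre-loaded by a dictionary of size $o(N_i^{1-\epsilon})$; an alternative abstract route is to prove a general ``dictionary-preloading'' lemma stating that the presence of a prior dictionary $D$ decreases the block count of a parse by at most $O(|D|)$ blocks, which would reduce the problem directly to the finite statement, the extra losses being absorbed in the explicit constant.
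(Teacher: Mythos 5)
Your high-level plan (concatenating finite ``bad'' words of rapidly growing sizes, bounding $\compsup$ by the negligibility of earlier segments, and trying to transfer the lower bound on $\compinf$) is natural and correctly isolates where the difficulty lies, but there is a genuine gap that the proposal does not close and that cannot be closed by the two routes you suggest. The words $w$ produced by Theorem~\ref{thm:lowerbound} contain \emph{gadgets} whose presence, content and positions are determined \emph{adaptively} by the LZ-parsing of the standalone $0w$ (this is the whole point of the algorithm in Section~\ref{sec:construction}). Once you concatenate $u_i$ after $0u_1\cdots u_{i-1}$, the parsing of the $u_i$-segment no longer matches the standalone parsing of $0u_i$, the gadgets sit at the wrong places relative to the new parsing, and the entire lower-bound machinery of Section~\ref{sec:proofs} (which reasons about the gadgets ``absorbing'' the red parsing at the right moments) ceases to apply. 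This is a structural failure of the black-box strategy, not a matter of losing constants. Your fallback, a ``dictionary-preloading lemma'' saying a prior dictionary $D$ can decrease the block count by at most $O(|D|)$, is unproved and is not obvious; the paper proves only the complementary upper-bound direction $|\dic(zw)|\leq |\dic(z)|+3\sqrt{|w|\cdot|\dic(w)|}$, and the lower-bound direction would need a new argument. Moreover, even granting such a lemma, it would yield the bound only at segment-ends $n=L_i$, whereas $\compinf$ requires a bound for \emph{every} large $n$, including $n$ well inside $u_i$; this again forces you to look inside the construction rather than use Theorem~\ref{thm:lowerbound} as a black box.

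The paper avoids all this by not concatenating independently produced finite witnesses: it chooses an infinite sequence of families $(F_i)_{i\geq0}$ \emph{jointly} so that the global independence properties (P1$'$) and (P2$'$) hold simultaneously across all families (Lemma~\ref{lemma:infinite-family}), defines $w$ directly as $w_{\F}$ with the gadget-insertion algorithm run on the actual parsing of $0w$, and then re-runs the lower-bound argument of Section~\ref{sec:general} chain by chain. Property~(P2$'$) is exactly what makes each chain's red parsing insensitive to everything parsed before it, playing the role of the ``cannot be pre-loaded'' claim you wanted but never established. If you want to rescue your plan, the cleanest fix is precisely the paper's: do not draw the $u_i$ independently from Theorem~\ref{thm:lowerbound}, but choose the underlying de Bruijn-style families once and for all with a global (P2$'$)-type condition, and let the gadgets be inserted with respect to the parsing of the whole infinite word. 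With your super-exponential size growth this would also change the resulting numerical constant; the paper's geometric choice $l_i=l_0 2^i$ is tuned so that the compression ratio stays near the same value throughout, which is what produces $1/6075$ (with $\gamma=10$ and $\epsilon=1/54$ in Lemma~\ref{lem:compinf}).
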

Remark that the $\liminf$ is considered for the compression ratio of $0w$ and
the $\limsup$ for $w$, which is the hardest possible combination as far as
asymptotic compression ratios are concerned.

But before proving this result, most of the work will be on finite words (only
in Section~\ref{sec:infinite} will we show how to turn to infinite words). Let
us therefore state the corresponding results on finite words. Actually, on
finite words we can have much more precise statements and therefore the
results are interesting on their own (perhaps even more so than the infinite
version).

In Section~\ref{sec:upperbound}, we show that the compression ratio of $aw$
cannot be much more than that of $w$. In particular, all words
``sufficiently'' compressible (compression speed $o(|w|/\log^2|w|)$) cannot
become incompressible when a letter is added in front (in some sense, thus,
the one-bit catastrophe cannot happen for those words, see
Remark~\ref{rk:catastrophe}).
\begin{theorem}
\label{thm:upperbound}
  For all word $w\in\{0,1\}^\star$ and any letter $a\in\{0,1\}$,
  $$|\dic(aw)|\leq 3\sqrt{|w|.|\dic(w)|}.$$
\end{theorem}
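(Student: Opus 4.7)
Let $d = |\dic(w)|$ and $d' = |\dic(aw)|$. I would prove $d' \leq 3\sqrt{|w|\,d}$ by partitioning the blocks of $aw$ at a length threshold $L$ and bounding the two parts by different arguments before optimizing $L$. The two regimes exploit complementary constraints: prefix-closure of $\dic(aw)$ controls the short blocks, while the block structure of $w$ controls the long ones.

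For short blocks (length at most $L$), the basic estimate is that $\dic(aw)$ is prefix-closed, hence contained in the infinite binary trie, so there are at most $2^{L+1}-1$ elements of depth at most $L$. Since this bound does not involve $d$ at all, I would strengthen it by noting that each short block is also a distinct factor of $aw$, and factors of $aw$ are, up to the exceptional one starting at position $0$, factors of $w$; a careful count via the LZ decomposition $w = m_1 m_2 \cdots m_d$ should give a sharper estimate of order $d \cdot L$ in the relevant range, by considering separately the within-block factors and the factors straddling one of the $d-1$ boundaries.

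For long blocks (length greater than $L$), I would partition them by the block $m_j$ of $w$ containing their starting position, after shifting by one to account for the prepended letter $a$. A long block of $aw$ either lies entirely inside the interval of a single $m_j$, consuming at least $L$ of its letters, or it straddles one of the $d-1$ boundaries between consecutive blocks of $w$. The first case contributes at most $\sum_j L_j/L = |w|/L$ long blocks, and the second at most $O(d)$ (one straddler per boundary, using that distinct dictionary elements cannot re-use the same straddling position in the sequential parsing).

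Combining the two estimates produces a bound of the form $d' \leq C_1\, d\, L + C_2\, |w|/L$ for absolute constants, so choosing $L \approx \sqrt{|w|/d}$ balances the dominant terms and yields $d' = O(\sqrt{|w|\,d})$, with the concrete constant $3$ appearing after careful bookkeeping. The main obstacle I foresee is the short-block step: prefix-closure alone is insufficient, since it ignores $d$ entirely and would only give the trivial $|w|/\log|w|$ bound after optimization; and compressible words can still admit many distinct factors, so the sharper count has to rely on the specific distinctness structure inside $\dic(aw)$ and on how short dictionary entries of $aw$ sit within the $d$ blocks of $w$, rather than on a purely combinatorial bound on factor complexity.
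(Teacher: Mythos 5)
Your decomposition is the paper's proof with the two splits interleaved in the other order, and the difficulty you flag in the short-block step is exactly the one that matters, so let me confirm both the diagnosis and the fix. Counting distinct short \emph{factors} of $w$ that straddle a block boundary cannot give $O(dL)$: with $d-1$ boundaries and up to $L-1$ offsets per boundary the purely combinatorial bound is $\Theta(dL^2)$, and this is essentially attained already for $w=\pref(x)$ with $x$ a de Bruijn word (there $dL^2\approx|w|$, which would ruin the optimization). The resolution you gesture at in your final sentence should be promoted to the primary decomposition: the blocks of $aw$ that straddle a boundary of $w$ --- what the paper calls junction blocks --- are pairwise disjoint intervals of $aw$, so at most one of them covers any given boundary, hence there are at most $d-1$ of them \emph{independently of their length}. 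Once these are set aside, every remaining (offset) block lies entirely inside some $m_j$, and the per-size bound $\le d$ for those comes from the parsing tree of $w$ (Lemma~\ref{lem:factortree}): a length-$i$ factor of a block is a length-$i$ subpath of $\tree(w)$, determined by its bottom endpoint, which must sit at depth $\ge i$. Your long-block count is correct as written, and so is the choice $L=\sqrt{|w|/d}$; one then gets $dL+|w|/L+d\le 3\sqrt{d|w|}$ using $d\le|w|$. The paper's packaging --- remove the $\le d$ junctions first, then apply a single threshold lemma (Lemma~\ref{lem:worstcase}) to the offset blocks, which now partition a subword of $w$ by words from a family with $\le d$ members per length --- is the cleaner arrangement of the same argument, precisely because it never needs a bound on straddling factors at all.
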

\begin{remark}
  When stated in terms of compression ratio, using the fact that
  $|\dic(w)|\geq \sqrt{|w|}$, this result reads as follows:
  $$\comp(aw)\leq 3\sqrt{2}\sqrt{\comp(w)\log|w|}.$$
\end{remark}

We also show in Section~\ref{sec:toy} that this result is tight up to a
multiplicative constant, since Theorem~\ref{thm:toy} implies the following
result.
\begin{theorem}
\label{thm:toy-lowerbound}
  For an infinite number of words $w\in\{0,1\}^\star$,
  $$|\dic(0w)|\geq \frac{1}{35}\sqrt{|w|.|\dic(w)|}.$$
\end{theorem}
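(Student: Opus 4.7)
The plan is to exhibit an explicit infinite family of witnesses via de Bruijn sequences; the statement will follow from the stronger Theorem~\ref{thm:toy} alluded to in the introduction. Concretely, for each integer $n$, I would take $x$ to be a de Bruijn sequence of order $n$ (so $|x|\sim 2^n$ and every length-$n$ binary word occurs as a factor of $x$), and set $w=\pref(x)$. By Remark~\ref{rem:extremal}, the parsing of $w$ is exactly the chain of prefixes of $x$, so $|\dic(w)|=|x|\sim 2^n$ and $|w|\sim 2^{2n-1}$; in particular $\sqrt{|w|\cdot|\dic(w)|}=\Theta(2^{3n/2})$, and it is enough to prove $|\dic(0w)|=\Omega(2^{3n/2})$ for this family.

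The core quantitative idea is to upper bound the maximum length $D$ of a block in the parsing of $0w$: since the total length of all blocks equals $|w|+1$, we get $|\dic(0w)|\geq (|w|+1)/D$, and the conclusion reduces to showing $D=O(\sqrt{|x|})=O(2^{n/2})$. Prepending a $0$ is exactly what forces this bound. In the parsing of $w$ itself, blocks align perfectly with the ``prefix sections'' $y_i=x_0\cdots x_{i-1}$, producing blocks as long as $\Theta(\sqrt{|w|})$. After the initial $0$, the parser is permanently shifted by one with respect to this natural structure: any block of length $s$ starting at some position $p$ in $0w$ is a factor of $0w$ of length $s$ whose successive prefixes were parsed as earlier blocks, and such a factor stretches across several sections $y_i$ at offsets that no longer match any prefix of $x$.

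To make the above rigorous, I would track how a candidate long block traverses the sections $y_{i_0}, y_{i_0+1},\dots, y_{i_1}$ and how each of its prefixes could have been produced as a block to the left of $p$. Using the de Bruijn property (each length-$n$ binary word occurs exactly once in $x$), a factor of $0w$ of length $\ell\gg\sqrt{|x|}$ is essentially pinned down by very little data (its starting offset inside a single section), so the chain of predecessor blocks needed to realize it would have to appear at very specific earlier positions; a direct count then shows that the parser can only produce such a block under rigid synchronization conditions that the $0$-shift breaks. Quantifying this obstruction is what should deliver $D=O(2^{n/2})$.

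The step I expect to be the main obstacle is precisely this rigidity argument: turning the informal ``misalignment'' picture into a hard combinatorial bound on how far any block can grow, while keeping constants small enough to land on the explicit $1/35$ of the statement. Everything else is bookkeeping: once the block-length bound is in hand, $|\dic(0w)|\geq (|w|+1)/D$ combined with the exact counts $|\dic(w)|\sim 2^n$ and $|w|\sim 2^{2n-1}$ gives $|\dic(0w)|\geq\frac{1}{35}\sqrt{|w|\cdot|\dic(w)|}$ for infinitely many $n$, hence for infinitely many $w$.
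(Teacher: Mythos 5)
Your strategy has two significant gaps, one quantitative and one structural.

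\textbf{The maximum block length is not $O(\sqrt{|x|})$.} You propose to bound the longest red block $D$ by $O(\sqrt{|x|}) = O(|w|^{1/4})$ and conclude via $|\dic(0w)|\geq (|w|+1)/D$. This cannot work: even in the paper's construction (with gadgets), red blocks starting at a small offset $i\leq\gamma k$ can reach size $\Theta(s/2)=\Theta(\sqrt{|w|})$, not $O(|w|^{1/4})$. The crude bound $|w|/D$ would then give only the trivial $\Omega(\sqrt{|w|})$. The paper's Proposition~\ref{prop:smallblocks} bounds offset-$i$ blocks by roughly $2\sqrt{s}$ only when $i>\gamma k$; the actual argument is that in every green prefix of length $\geq 2s/3$ there is an offset-$i$ block with $i$ in a narrow window around $2s/3$, and \emph{from that position on} all red blocks are small. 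So the conclusion comes from showing a constant \emph{fraction} of $w$ is covered by large-offset (hence short) red blocks, not from a global cap on block length.

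\textbf{The gadget-free case is an open conjecture.} You take $w=\pref(x)$ directly. The paper explicitly states (beginning of Section~\ref{sec:toy}) that it conjectures the result holds for plain $w=\pref(x)$ with $x$ a de Bruijn sequence, but was unable to prove it. The obstruction is exactly the scenario your informal ``misalignment'' picture does not rule out: the red parsing might happen to ``lock on'' to a single small offset $i_0$, creating an offset-$i_0$ block in a majority of the green prefixes, whence a chain of offset-$i_0$ blocks grows to size $\Theta(s)$ and absorbs most of $w$ in $\Theta(\sqrt{|w|})$ long blocks. Lemma~\ref{lemma:violations} shows at most one such bad offset can exist, and the gadgets $g_i^j$ are inserted adaptively to destroy it. Without them — or some substitute for them — the rigidity you invoke from the de Bruijn property is simply not established, and the paper's authors consider that a genuine open problem. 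Your proposal therefore skips the central technical device of the proof rather than replacing it with an alternative argument.
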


More generally, we prove in Section~\ref{sec:general} our main result:
\begin{theorem}
  \label{thm:lowerbound}
  Let $l:\nn\to\nn$ be a function satisfying
  $l(n)\in[(90\log n)^2,\sqrt{n}]$. Then for an infinite number of words $w$:
  $$|\dic(w)|\leq\frac{3+\sqrt{3}}{2}\cdot\frac{|w|}{l(|w|)}\;\text{ and }\;
  |\dic(0w)|\geq\frac{1}{54}\cdot\frac{|w|}{\sqrt{l(|w|)}}.$$
\end{theorem}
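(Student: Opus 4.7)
The strategy is to produce, for each admissible $l$, a witness word $w$ built from a ``de Bruijn-style'' word $x$ whose existence is established in Section~\ref{sec:family-de-bruijn}. A natural candidate is $w = \pref_{>p}(x)$ for appropriately chosen parameters $p$ and $n = |x|$, tuned so that $|w|$ matches the intended length and $|\dic(w)|$ matches the target $\leq \frac{3+\sqrt{3}}{2}\, |w|/l$. The intuition is that $\pref_{>p}(x)$ is a concatenation of increasing prefixes of $x$ starting from length $p+1$: once the dictionary is primed, the LZ-parsing uses each such prefix as a single block, yielding few long blocks; prepending a $0$ shifts the alignment by one letter, destroys this ``prefix-of-$x$'' structure, and the combinatorial richness guaranteed by the de Bruijn-style property of $x$ then forces almost every block of $0w$ to be short.

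First I would fix the parameters: given $l = l(|w|)$ in the allowed range, choose $n$ and $p$ so that $\sum_{i=p+1}^n i$ matches $|w|$ and $n - p$ (the number of super-blocks in the $\pref_{>p}$ decomposition) matches $\Theta(|w|/l)$. The lower constraint $l \geq (90 \log n)^2$ is precisely the regime in which the probabilistic construction of Section~\ref{sec:family-de-bruijn} succeeds, producing an $x \in \bin^n$ whose substring statistics are close enough to uniform for the downstream arguments.

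Second, for the upper bound on $|\dic(w)|$, I would show that after a warm-up cost of $O(p/\log p)$ LZ-blocks spent on the first super-block $x_0 \cdots x_p$, the dictionary contains every prefix of $x$ of length at most $p$; this invariant is then maintained, so each subsequent super-block $x_0 \cdots x_{p+i}$ is parsed as a single LZ-block, being the already-present prefix $x_0 \cdots x_{p+i-1}$ followed by the letter $x_{p+i}$. This gives $|\dic(w)| = (n-p) + O(p/\log p)$, which the parameter choice converts into the stated bound.

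Third, for the lower bound on $|\dic(0w)|$, the key claim is that no block in the LZ-parsing of $0w$ can extend to length much more than $\sqrt{l}$. Two features of the construction conspire to force this: every super-block of $\pref_{>p}(x)$ begins with the letter $x_0$, so $w$ contains $n-p$ occurrences of $x_0$ at explicitly known positions, and an LZ-block of $0w$ crossing such a seam can only match a pattern that the dictionary permits in very restricted ways; moreover, the de Bruijn-style property of $x$ guarantees that the substrings of $w$ of lengths up to $\Theta(\sqrt{l})$ are essentially all distinct binary words, so that after processing $\Omega(|w|/\sqrt{l})$ letters the pool of fresh candidates of this length is exhausted and the parser is forced to end each block promptly. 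Summing yields average block length $O(\sqrt{l})$ and hence $|\dic(0w)| \geq \frac{1}{54} \cdot \frac{|w|}{\sqrt{l}}$. The main obstacle is exactly this step: turning the ``shifting-destroys-the-prefix-structure'' intuition into a clean counting bound on block lengths in $0w$, while correctly tracking the dynamically-growing dictionary and ruling out long accidental matches between shifted positions in $0w$ and the dictionary already built up. The concrete constant $\tfrac{1}{54}$ carries no structural meaning and simply records the slack in this balancing.
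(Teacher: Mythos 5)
Your proposal misses two pieces of the construction that are essential for the theorem to hold over the full parameter range, and as a result the argument does not go through.

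First, a single word $x$ cannot supply the witness. If $w=\pref_{>p}(x)$ with $|x|=s$, then $|w|\approx(s-p)(s+p)/2$ and, under your intended tuning, $|\dic(w)|\approx s-p$; matching $|\dic(w)|\approx|w|/l$ forces $s+p=\Theta(l)$, hence $|w|=O(l^2)$. But the theorem ranges over $l(|w|)$ as small as $(90\log|w|)^2$, for which $l^2=\Theta(\log^4|w|)\ll|w|$, so no single $x$ of this calibre can produce a $w$ of the required length. The paper's actual construction (Section~\ref{sec:general}) concatenates $2^p$ independent ``chains'' built from a whole \emph{family} $F=\{x^1,\dots,x^{2^p}\}$ of de Bruijn-style words of size $l$, with $p=\log(n/l^2)$; the pairwise independence condition (P2) is what makes the chains behave separately. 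This family is not an optional refinement---it is what lets $|w|$ reach $n$ while $|\dic(w)|$ stays at $\Theta(n/l)$.

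Second, and independently, your key claim that ``no block in the LZ-parsing of $0w$ can extend to length much more than $\sqrt{l}$'' is false for the raw word $\pref_{>p}(x)$, and this is precisely why the paper introduces gadgets. The issue is the offset-$i$ blocks for \emph{small} $i$ (i.e., red blocks that start near the beginning of a green block): nothing in the de Bruijn-style statistics prevents the parsing of $0w$ from repeatedly producing an offset-$i$ block at the same small $i$ in consecutive green blocks, so that a branch of the red parsing tree starting at position $i$ keeps growing and eventually covers a constant fraction of each super-block, reaching length $\Theta(l)$ rather than $\Theta(\sqrt{l})$. The paper's Lemma~\ref{lemma:violations} shows at most one such bad $i$ can exist, and the gadget-insertion algorithm (with its two-phase analysis in Lemma~\ref{lemma:number-violations}) is designed specifically to ``kill'' these violations and cap their number at roughly $l/2$; only then does the argument split cleanly into ``small $i$: bounded by violation count'' (Lemma~\ref{lemma:violation-size}) and ``large $i$: bounded by the factor-counting argument'' (Lemmas~\ref{lemma:fewjunction1} and~\ref{lemma:size-offset}). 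Your proposal has only the second half. A smaller but real defect is the warm-up/synchronisation step: parsing the first super-block $x_0\cdots x_p$ costs $\Theta(\sqrt{p})$ blocks (not $O(p/\log p)$), and there is no reason the parsing of that super-block ends exactly at position $p$, so the invariant ``each subsequent super-block is one LZ-block'' does not hold automatically; in the paper this is arranged by the choice of the offsets $q_j$, exploiting the fact that earlier chains already populate the dictionary with short prefixes.
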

This shows that the upper bound is tight (up to a multiplicative
constant) for any possible compression speed. This also provides an
example of compressible words that become incompressible when a letter
is added in front (see Remark~\ref{rk:catastrophe}), thus showing the
one-bit catastrophe for finite words.

\begin{remark}\label{rk:catastrophe}
  In particular:
  \begin{itemize}
  \item Theorem~\ref{thm:upperbound} implies that, if an increasing sequence
    of words $(w_n)$ satisfies $|\dic(w_n)|=o(|w_n|/\log^2|w_n|)$, then for
    any letter $a\in\{0,1\}$, $aw_n$ remains fully compressible
    ($|\dic(aw_n)|=o(|w_n|/\log|w_n|)$);
  \item however, by Theorem~\ref{thm:lowerbound}, there is an increasing
    sequence of words $(w_n)$ such that
    $|\dic(w_n)|=\Theta(|w_n|/\log^2|w_n|)$ (compressible) but
    $|\dic(0w_n)|=\Theta(|w_n|/\log |w_n|)$ (incompressible), which is the
    one-bit catastrophe on finite words;
  \item the following interesting case is also true: there is an increasing
    sequence of words $(w_n)$ such that $|\dic(w_n)|=\Theta(\sqrt{|w_n|})$
    (optimal compression) but $|\dic(0w_n)|=\Theta(|w_n|^{3/4})$. This special
    case is treated extensively in Theorem~\ref{thm:toy}.
  \end{itemize}
\end{remark}

\subsection{Parsings of $w$ and $aw$}
We will often compare the parsing of a word $w$ and the parsing of $aw$
for some letter~$a$: let us introduce some notations (see Figure~\ref{fig:blocks}).
\begin{itemize}
\item The blocks of $w$ will be called the \emph{green blocks}.
\item The blocks of $aw$ will be called the \emph{red blocks} and are split
  into two categories\footnote{Except the first block of $aw$, which is the
    word $a$ and which is just called a red block.}:
  \begin{itemize}
  \item The \emph{junction blocks}, which are red blocks that overlap two or
    more green blocks when we align $w$ and $aw$ on the right (that is, the
    factor $w$ of $aw$ is aligned with the word $w$, see
    Figure~\ref{fig:blocks}).
  \item The \emph{offset-$i$ blocks}, starting at position $i$ in a
    green block and completely included in it. If not
    needed, the parameter $i$ will be omitted.
  \end{itemize}
\end{itemize}

\begin{figure}[h]\centering
\begin{tikzpicture}[scale=1, every node/.style={scale=1}]
  \draw [fill=green!70!blue!50,rounded corners=0.5mm] (-0.17,-0.17) rectangle (0.17,0.17);
  \draw [fill=green!70!blue!50,rounded corners=0.5mm] (-0.17+0.4,-0.17) rectangle (2*0.4+0.17,0.17);
  \draw [fill=green!70!blue!50,rounded corners=0.5mm] (-0.17+3*0.4,-0.17) rectangle (5*0.4+0.17,0.17);
  \draw [fill=green!70!blue!50,rounded corners=0.5mm] (-0.17+6*0.4,-0.17) rectangle (6*0.4+0.17,0.17);
  \draw [fill=green!70!blue!50,rounded corners=0.5mm] (-0.17+7*0.4,-0.17) rectangle (8*0.4+0.17,0.17);
  \draw [fill=green!70!blue!50,rounded corners=0.5mm] (-0.17+9*0.4,-0.17) rectangle (11*0.4+0.17,0.17);

  \draw [fill=red!50,rounded corners=0.5mm] (-0.4-0.17,-1-0.17) rectangle (-0.4+0.17,-1+0.17);
  \draw [fill=red!50,rounded corners=0.5mm] (-0.17,-1-0.17) rectangle (0.4+0.17,-1+0.17);
  \draw [fill=red!50,rounded corners=0.5mm] (2*0.4-0.17,-1-0.17) rectangle (2*0.4+0.17,-1+0.17);
  \draw [fill=red!50,rounded corners=0.5mm] (3*0.4-0.17,-1-0.17) rectangle (4*0.4+0.17,-1+0.17);
  \draw [fill=red!50,rounded corners=0.5mm] (5*0.4-0.17,-1-0.17) rectangle (7*0.4+0.17,-1+0.17);
  \draw [fill=red!50,rounded corners=0.5mm] (8*0.4-0.17,-1-0.17) rectangle (10*0.4+0.17,-1+0.17);
  \draw [fill=red!50,rounded corners=0.5mm] (11*0.4-0.17,-1-0.17) rectangle (11*0.4+0.17,-1+0.17);

  \draw[dotted] (2*0.4-0.17,-1+0.18) -- (2*0.4-0.17,-0.18);
  \draw[dotted] (2*0.4+0.17,-1+0.18) -- (2*0.4+0.17,-0.18);
  \fill[gray!20,opacity=0.60] (2*0.4-0.17,-1+0.18) rectangle  (2*0.4+0.17,0.15);

  \draw[dotted] (3*0.4-0.17,-1+0.18) -- (3*0.4-0.17,-0.18);
  \draw[dotted] (4*0.4+0.17,-1+0.18) -- (4*0.4+0.17,-0.18);
  \fill[gray!20,opacity=0.60] (3*0.4-0.17,-1+0.18) rectangle  (4*0.4+0.17,0.15);

  \draw[dotted] (8*0.4-0.17,-1+0.18) -- (8*0.4-0.17,-0.18);
  \draw[dotted] (10*0.4+0.17,-1+0.18) -- (10*0.4+0.17,-0.18);
  \fill[gray!20,opacity=0.60] (8*0.4-0.17,-1+0.18) rectangle  (10*0.4+0.17,0.15);

  \node[align=center] (A) at (0,0) {0};
  \node[align=center] (A) at (0.4,0) {0};
  \node[align=center] (A) at (0.8,0) {1};
  \node[align=center] (A) at (1.2,0) {0};
  \node[align=center] (A) at (1.6,0) {1};
  \node[align=center] (A) at (2,0) {0};
  \node[align=center] (A) at (2.4,0) {1};
  \node[align=center] (A) at (2.8,0) {0};
  \node[align=center] (A) at (3.2,0) {0};
  \node[align=center] (A) at (3.6,0) {0};
  \node[align=center] (A) at (4,0) {1};
  \node[align=center] (A) at (4.4,0) {1};

  \node[align=center] (A) at (-0.4,-1) {0};
  \node[align=center] (A) at (0,-1) {0};
  \node[align=center] (A) at (0.4,-1) {0};
  \node[align=center] (A) at (0.8,-1) {1};
  \node[align=center] (A) at (1.2,-1) {0};
  \node[align=center] (A) at (1.6,-1) {1};
  \node[align=center] (A) at (2,-1) {0};
  \node[align=center] (A) at (2.4,-1) {1};
  \node[align=center] (A) at (2.8,-1) {0};
  \node[align=center] (A) at (3.2,-1) {0};
  \node[align=center] (A) at (3.6,-1) {0};
  \node[align=center] (A) at (4,-1) {1};
  \node[align=center] (A) at (4.4,-1) {1};

  \node[align=center] (ST) at (-0.3,-2) {offset-$1$};

  \node[align=center] (ST2) at (1.4,-2) {offset-$0$};

  \node[align=center] (ST3) at (3.6,-2) {junction};

  \draw[->,>=latex] (ST) to[bend right] (0.8,-1.2);
  \draw[->,>=latex] (ST2) to (1.4,-1.2);
  \draw[->,>=latex] (ST3) to (3.6,-1.2);

\end{tikzpicture}
\caption{The green blocks of $w$ and red blocks of $0w$ for
  $w = 001010100011$.}
\label{fig:blocks}
\end{figure}
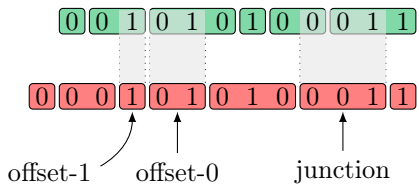

\section{Upper bound}
\label{sec:upperbound}

This section is devoted to the proof of Theorem~\ref{thm:upperbound} giving an
upper bound on the compression ratio of $aw$, for any letter $a$, as a
function of the compression ratio of the word $w$. In their 1998
paper~\cite{LathropStrauss}, Lathrop and Strauss ask the following question: ``Consider
optimally compressed sequences: Will such sequences compress reasonably well
if a single bit is removed or added to the front of the sequence?'' We give a
positive and quantified answer: indeed, a word $w$ compressed optimally has a
compression speed $O(\sqrt{n})$, thus by Theorem~\ref{thm:upperbound}, the
word $aw$ has a compression speed $O(n^{3/4})$. (And we shall complete this
answer with the matching lower bound in the next section.)

The first lemma bounds the size of the partition of a word $w$ if the
partitioning words come from a family with a limited number of words of same
size. In its application, the partition will be a subset of the LZ-parsing, and
Lemma~\ref{lem:factortree} below will give the required bound on the number of
factors of a given size.
\begin{lemma}
\label{lem:worstcase}
Let $\F$ be a family of distinct words such that for each $i$, the
number of words of size $i$ in $\F$ is bounded by a constant
$N$. Suppose that a word $w$ is partitioned into different words of
$\F$. Then the number of words used in the partition is at most
$2\sqrt{N|w|}$.
\end{lemma}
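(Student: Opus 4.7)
The plan is to turn the problem into a pure counting question about the minimum possible total length of $k$ distinct words drawn from a family where each length class has at most $N$ words. Since the partition of $w$ uses distinct words from $\F$, if we denote the number of words used by $k$, then $|w|$ is exactly the sum of their lengths, so a lower bound on this sum in terms of $k$ and $N$ will immediately yield an upper bound on $k$ in terms of $|w|$ and $N$.

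First, I would list the sizes of the partitioning words in non-decreasing order: $s_1\leq s_2\leq\cdots\leq s_k$. The key observation is that because $\F$ contains at most $N$ words of each length, among any $j$ chosen distinct words from $\F$ one cannot fit all of them into the first $\lceil j/N\rceil - 1$ length classes. Hence
$$s_j\geq \lceil j/N\rceil\geq j/N$$
for every $j\in\{1,\ldots,k\}$. This is the crucial pigeonhole-type step and is really the whole content of the lemma.

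Next, I would sum this lower bound:
$$|w|=\sum_{j=1}^k s_j\geq \sum_{j=1}^k\frac{j}{N}=\frac{k(k+1)}{2N}\geq\frac{k^2}{2N}.$$
Rearranging gives $k\leq\sqrt{2N|w|}\leq 2\sqrt{N|w|}$, which is the desired bound (with a small amount of slack; the authors presumably prefer the clean constant $2$).

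I do not anticipate a serious obstacle: the argument is a short greedy/pigeonhole computation, and the only subtlety is to notice that the worst case for making $k$ large is exactly when the partition uses all available words of length $1$, then all of length $2$, and so on, which is precisely what the sorted-sizes inequality $s_j\geq \lceil j/N\rceil$ captures. The usefulness of the lemma will come later, through Lemma~\ref{lem:factortree}, when the ``family $\F$'' is taken to be the set of factors of the parsing tree of $w$, for which the per-length count $N$ can be controlled.
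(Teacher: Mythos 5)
Your proof is correct, and it takes a genuinely different route from the paper's while sharpening the constant. The paper defines $m(i)$ to be the number of partition words of size $i$, then splits the count $\sum_i m(i)$ at the threshold $i=\sqrt{|w|/N}$: the large sizes are bounded by $|w|=\sum_i i\,m(i)\geq\sqrt{|w|/N}\sum_{i\geq\sqrt{|w|/N}}m(i)$, and the small sizes by $m(i)\leq N$, each piece contributing at most $\sqrt{N|w|}$, for a total of $2\sqrt{N|w|}$. You instead sort the sizes $s_1\leq\cdots\leq s_k$, note the pigeonhole bound $s_j\geq\lceil j/N\rceil$, and sum to get $|w|\geq k^2/(2N)$, hence $k\leq\sqrt{2N|w|}$. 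Your version has two advantages: it yields $\sqrt{2N|w|}$ rather than $2\sqrt{N|w|}$, and it makes explicit that the extremal configuration is ``fill up each size class from $1$ upward,'' which is exactly the example the paper's own remark uses to show near-tightness (up to a factor $\sqrt{2}$). The paper's threshold argument is perhaps a touch more robust to generalization but is slightly lossy. Both fit the lemma's stated bound, and your observation that $\F$ is later taken (via Lemma~\ref{lem:factortree}) to be factors of green blocks with $N$ controlled by $|\dic(w)|$ matches the paper's use.
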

\begin{proof}
  Let $m(i)$ be the number of words of size $i$ occurring in the partition of
  $w$, and $k$ the size of the largest words used. We want to prove that
  $$\sum_{i=1}^k m(i)\leq 2\sqrt{N|w|}.$$
  We have:
  $$|w|=\sum_{i=1}^k im(i) \geq \sum_{i\geq\sqrt{\frac{|w|}{N}}} im(i)\geq
  \sqrt{\frac{|w|}{N}}\sum_{i\geq\sqrt{\frac{|w|}{N}}} m(i)$$
  hence
  $$\sum_{i\geq\sqrt{\frac{|w|}{N}}} m(i)\leq \sqrt{N|w|}.$$
  On the other hand, since $m(i)\leq N$:
  $$\sum_{i<\sqrt{\frac{|w|}{N}}} m(i) < N\sqrt{\frac{|w|}{N}}=\sqrt{N|w|}.$$
\end{proof}
\begin{remark}
  Note that if, for all $i\geq 1$, $\F$ contains exactly $\min(2^i,N)$ words
  of size $i$, the concatenation of all the words of $\F$ up to size $s$ gives
  a word $w$ of size
  $$|w|=\sum_{i=1}^{\log N}i2^i+\sum_{i>\log N}^s iN\leq 2N\log N+(s-\log N)(s+\log N + 1)N/2$$
  partitioned into $m$ blocks, where
  $$m=\sum_{i=1}^{\log N}2^i+\sum_{i>\log N}^s N\geq (s-\log N)N.$$
  Thus $m\geq \sqrt{2}\sqrt{N|w|}$ if $s>>\log N$. This shows the optimality
  of Lemma~\ref{lem:worstcase} up to a factor $\sqrt{2}$.
\end{remark}

We now come to the lemma bounding the number of factors of a given size in a
word $w$ as a function of its LZ-parsing.
\begin{lemma}
\label{lem:factortree}
 Let $T$ be the parsing tree of a word $w$. Then the number of different
  factors of size $i$ in the blocks of $w$ is at most $|T|-i$ (that is,
  $|\f(\dic(w))\cap\{0,1\}^i|\leq |T|-i$).
\end{lemma}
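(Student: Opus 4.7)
The plan is to translate size-$i$ factors appearing inside dictionary blocks into objects attached to vertices of $T$, using the fact that $\{\ew\}\cup\dic(w)$ is prefix-closed.

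The central observation is that every factor $u$ of size $i$ occurring in some block $m_j$ is the length-$i$ suffix of some block. Indeed, if $u=m_j[p..p+i-1]$, then the prefix $m_j[0..p+i-1]$ of $m_j$ belongs to $\{\ew\}\cup\dic(w)$ by prefix-closedness, so it is a vertex of $T$ at depth $p+i\geq i$, and its last $i$ letters are precisely $u$. Consequently, sending each vertex of $T$ of depth $\geq i$ to the length-$i$ suffix of its label defines a surjection onto $\f(\dic(w))\cap\{0,1\}^i$, and therefore the number of distinct size-$i$ factors is bounded by the number of vertices of $T$ of depth at least $i$.

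To finish, I would argue that $T$ has at least $i$ vertices of depth strictly less than $i$, so that the number of vertices of depth $\geq i$ is at most $|T|-i$. This is vacuous if $T$ has no vertex of depth $\geq i$ (since then there are no size-$i$ factors at all); otherwise, prefix-closedness forces any deep vertex to have an ancestor at each of the depths $0,1,\dots,i-1$, which already produces $i$ distinct vertices of depth $<i$.

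There isn't really a hard step here; the only delicate points are verifying that the ``suffix'' vertex one attaches to a factor actually lies in $T$ (which is exactly what prefix-closedness of the dictionary gives), and noticing that one should not try to bound the number of \emph{ancestors per depth} more cleverly---a single chain of $i$ ancestors is already enough for the slack $|T|-i$.
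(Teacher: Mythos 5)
Your proof is correct and follows essentially the same route as the paper: the paper identifies each size-$i$ factor with a length-$i$ subpath of a root-to-block path, which (since the path to any vertex in a tree is unique) amounts precisely to your map sending a vertex at depth $\geq i$ to the length-$i$ suffix of its label. You additionally spell out the step showing there are at least $i$ vertices of depth $<i$ (hence at most $|T|-i$ at depth $\geq i$), which the paper leaves implicit.
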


\begin{proof}
  A factor of size $i$ in a block $b$ corresponds to a subpath of size $i$ in
  the path from the root to $b$ in the parsing tree. The number of such
  subpaths is bounded by the number of vertices at depth at least $i$.
\end{proof}
Actually, below we will use Lemma~\ref{lem:factortree} sub-optimally since we
will ignore the parameter $i$ and use the looser bound $(|T|-1)$.

Let us turn to the proof of Theorem~\ref{thm:upperbound}, the main result
of the present section.
\begin{proof}[Proof of Theorem~\ref{thm:upperbound}]
  Let $D = \dic(aw)$ be the set of red blocks. We partition $D$ into
  $D_1$ and $D_2$, where $D_1$ is the set of junction blocks together
  with the first red block (consisting only of the letter $a$), and
  $D_2$ is the set of offset blocks.

\begin{itemize}
\item Bound for $D_1$: The number of junction blocks is less than the
  number of green blocks, therefore
  $|D_1| \leq |\dic(w)|\leq \sqrt{|\dic(w)|.|w|}$ (recall that $|\dic(w)|\leq|w|$).
\item Bound for $D_2$: Consider $\tilde{w}$ the word $w$ where all the
  junction blocks have been replaced by the empty word $\lambda$. We know that
  $\tilde{w}$ is partitioned into different words by $D_2$. But
  $D_2 \subset \F$, where $\F = \f(\dic(w))$ (the set of factors contained in
  the green blocks). By Lemma \ref{lem:factortree}, the number of words of
  size $i$ in $\F$ is bounded by $|\tree(w)|-i$, which is at most $|\dic(w)|$.
  Finally, Lemma \ref{lem:worstcase} tells us that the number of words in any
  partition of $\tilde{w}$ by words of $\F$ is bounded by
  $2\sqrt{|\dic(w)|.|\tilde{w}|} \leq 2\sqrt{|\dic(w)|.|w|}$.
\end{itemize}
In the end, $|D| = |D_1| + |D_2| \leq 3\sqrt{|w|.|\dic(w)|}$.
\end{proof}

\begin{remark}
  Instead of a single letter, we can add a whole word $z$ in front of~$w$.
  With the same proof, it is easy to see that
  $$|\dic(zw)|\leq |\dic(z)|+3\sqrt{|w|.|\dic(w)|}.$$
  
  Alternately, if we remove the first letter of $w=aw'$ (or any prefix) we get
  the same upper bound:
  $$|\dic(w')|\leq 3\sqrt{|aw'|.|\dic(aw')|}.$$
\end{remark}

\section{``Weak catastrophe'' for the optimal compression ratio}
\label{sec:toy}

Before the proof of Theorem \ref{thm:lowerbound}, we first present a
``weak catastrophe'', namely the third item of Remark~\ref{rk:catastrophe}
in which the compression speed of a sequence changes from $O(\sqrt{n})$
(optimal compression) to $\Omega(n^{3/4})$ when a letter is added in front,
thus matching the upper bound of Theorem~\ref{thm:upperbound}.
\begin{theorem}\label{thm:toy}
  For an infinite number of words $w$:
  $$|\dic(w)|\leq 1.9\sqrt{|w|}\;\text{ and }\;
  |\dic(0w)|\geq 0.039|w|^{3/4}.$$ 
\end{theorem}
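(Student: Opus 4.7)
The plan is to take $w=\pref(x)$ with $x\in\DB(n)$ a linear de Bruijn sequence of order~$n$, so that $|x|=2^n+n-1$ and $|w|=|x|(|x|+1)/2$. By Remark~\ref{rem:extremal}, the parsing of~$w$ is exactly the sequence of prefixes of~$x$, giving $|\dic(w)|=|x|=(1+o(1))\sqrt{2|w|}\le 1.9\sqrt{|w|}$ for $n$ sufficiently large. Thus the upper bound is essentially free, and the real work is the lower bound $|\dic(0w)|\ge 0.039\,|w|^{3/4}$.

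For this, I would use the junction/offset decomposition of red blocks introduced in Section~\ref{sec:notations}. The key observation is that every offset-$i$ block sits inside one green block, which is a prefix of~$x$, so the block is a factor of~$x$ starting at position~$i$. Since red blocks are pairwise distinct, the $m_i$ offset-$i$ blocks are distinct prefixes of the suffix $x[i..]$; their lengths are $m_i$ distinct positive integers, contributing total length at least $m_i(m_i+1)/2$. Summing over~$i$ and bounding by the total length of~$0w$ gives the constraint $\sum_i m_i^2 \le 2(|w|+1)=O(|x|^2)$. To convert this into a \emph{lower} bound on $M=\sum_i m_i$, I would exploit the de Bruijn property of~$x$: because every length-$n$ binary word appears exactly once as a factor of~$x$, inside any green block $G_k$ the parser quickly exhausts the ``fresh'' factors available at each offset, so the average red-block length within a sufficiently large $G_k$ stays $O(\sqrt{|x|})$. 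This produces $\Omega(\sqrt{k})$ red blocks per such $G_k$, and summing over the $\Omega(|x|)$ green blocks of length $\ge|x|/2$ yields $M=\Omega(|x|^{3/2})=\Omega(|w|^{3/4})$. Tracking the constants carefully through this estimate gives the explicit $0.039$.

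The principal obstacle I anticipate is controlling the contribution of junction blocks, which overlap two or more green blocks and are therefore \emph{not} factors of~$x$: the offset-counting argument above does not constrain them, and a priori they could absorb a large fraction of~$|0w|$. My plan is to observe that each junction block must cross at least one of the $|x|-1$ boundaries between consecutive green blocks, and to combine this with a depth bound in the parsing tree $\tree(0w)$ at the moment each junction is created, so that junction blocks account for only $O(|x|^{3/2})$ characters altogether---a lower-order contribution compared to $|w|\approx|x|^2/2$. Making this step quantitative, together with the claim that the average red-block length inside each large green block is $O(\sqrt{|x|})$, is where the technical effort concentrates and where the explicit constant~$0.039$ gets extracted.
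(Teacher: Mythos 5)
Your choice of $w=\pref(x)$ directly, without modification, is precisely what the paper's authors state they were \emph{unable} to make work: they write explicitly that ``although we conjecture that the result should hold for $w=\pref(x)$ whenever $x$ is a de Bruijn sequence beginning with $0$, we were not able to show it directly,'' and instead insert small ``gadget'' words between the prefixes of $x$. The reason is exactly the step you gloss over. Your key claim --- that inside every sufficiently long green block the average red-block length is $O(\sqrt{|x|})$ --- is not forced by the de Bruijn property alone. Consider a fixed small offset $i_0$. The offset-$i_0$ red blocks across different green blocks are all factors of $x$ starting at position $i_0$, hence have pairwise distinct lengths, so they form a single growing chain in $\tree(0w)$. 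The de Bruijn property is perfectly consistent with this chain growing by one letter at nearly every green block: one could then have $\Theta(s)$ blocks of average length $\Theta(s)$ at offset $i_0$ (with $s=|x|$), covering a constant fraction of $|w|\approx s^2/2$ with only $O(s)=O(\sqrt{|w|})$ red blocks. That would make the overall lower bound collapse to $\Omega(\sqrt{|w|})$, not $\Omega(|w|^{3/4})$. Your counting inequality $\sum_i m_i^2\le 2|0w|$ is an \emph{upper} bound and cannot by itself exclude this scenario.

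The paper handles exactly this danger by separating offsets $i\le\gamma k$ from $i>\gamma k$. For large $i$, Property~$(\star)$ of de Bruijn words does kill long chains (Lemma~\ref{lem:junctions} and Proposition~\ref{prop:smallblocks}), which is roughly what your intuition describes. For small $i$, however, the paper shows via Lemma~\ref{lemma:violations} that at most one offset $i_0\le\gamma k$ can be violated more than $s/2$ times, and then actively ``kills'' $i_0$ by inserting gadgets that resynchronize the red parsing (Lemma~\ref{lemma:fewviolations}); the parsing of $w$ itself is barely disturbed (Lemma~\ref{lem:toy-upper-bound}, still giving $|\dic(w)|\le 3\sqrt{2/5}\sqrt{|w|}<1.9\sqrt{|w|}$). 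You would need either this gadget mechanism, or a genuinely new argument ruling out the one bad small offset; neither is present in your sketch. Your treatment of junction blocks is also underspecified, but that is a secondary issue --- the paper controls those through Lemma~\ref{lem:junctions} using $(\star)$, which is in the spirit of what you propose.
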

\begin{remark}
  The ``true'' values of the constants that we will get below are as follows:
  $$|\dic(w)|\leq 3\sqrt{\frac{2}{5}}\sqrt{|w|}\;\text{ and }\;
  |\dic(0w)|\geq \frac{1}{36}\left(\frac{8}{5}\right)^{3/4}|w|^{3/4}-o(|w|^{3/4}).$$
\end{remark}

Observe that this weak catastrophe is a special case of Theorem
\ref{thm:lowerbound} (with better constants, though). The aim of this section
is twofold: first, it will be a constructive proof, whereas the main theorem
will use the probabilistic method; second, this section will set up the main
ideas and should help understand the general proof.

A main ingredient in the construction is de Bruijn sequences, that we
introduce shortly before giving the overview of the proof.

\subsection{De Bruijn sequences}
A \emph{de Bruijn sequence} of order $k$ (or $\DB(k)$ in short,
notation that will also designate the set of all de Bruijn sequences
of order $k$) is a word $x$ of size $2^k + k-1$ in which every word of
size $k$ occurs exactly once as a substring. For instance,
\texttt{0001011100} is an example of a $\DB(3)$. Such words exist for
any order $k$ as they are, for instance, Eulerian paths in the regular
directed graph whose vertices are words of size $(k-1)$ and where
there is an arc labeled with letter $a$ from $u$ to $v$ iff
$v=u[1..k-2]a$.

Given any $x\in\DB(k)$, the following well-known (and straightforward) property
holds:
\begin{center}
  \pstar Any word $u$ of size at most $k$ occurs exactly $2^{k-|u|}$
  times in $x$.
\end{center}
(In symbols, $\occ_x(u)=2^{k-|u|}$.) Thus, a factor of size $l\leq k$ in $x$
will identify exactly $2^{k-l}$ positions in $x$ (the $i$-th position is the
beginning of the $i$-th occurence of the word).

The use of de Bruijn sequences is something common in the study of this kind
of algorithms: Lempel and Ziv themselves use it in~\cite{LZ76}, as well as
later~\cite{LathropStrauss} and~\cite{Pierce2000} for example.

\subsection{Overview of the proof}

Recall that a word $w$ is optimally compressed iff it is of the form
$w=\pref(x)$ for some word $x$ (Remark~\ref{rem:extremal}). Thus we are
looking for an $x$ such that $0\pref(x)$ has the worst possible compression
ratio. In Section~\ref{sec:upperbound} the upper bound on the dictionary size
came from the limitation on the number of possible factors of a given size: it
is therefore natural to consider words $x$ where the number of factors is
maximal, that is, de Bruijn sequences.

Although we conjecture that the result should hold for $w=\pref(x)$ whenever
$x$ is a de Bruijn sequence beginning with $0$, we were not able to show it
directly. Instead, we need to (possibly) add small words, that we will call
``gadgets'', between the prefixes of $x$.

For some arbitrary $k$, we fix $x\in\DB(k)$ and start with the word
$w=\pref(x)$ of size $n$. The goal is to show that there are
$\Omega(n^{3/4})$ red blocks (i.e that the size of the dictionnary for
$0w$ is $\Omega(n^{3/4})$): this will be achieved by showing that a
significant (constant) portion of the word $0w$ is covered by
``small'' red blocks (of size $O(n^{1/4})$). Let $s=|x|$, so that
$n=\Theta(s^2)$. More precisely, we show that, in all the prefixes $y$
of $x$ of size $\geq 2s/3$, at least the last third of $y$ is covered
by red blocks of size $O(\sqrt{s})=O(n^{1/4})$.

This is done by distinguishing between red blocks starting near the beginning of a
green block (offset-$i$ for $i\leq\gamma k$) and red blocks starting at
position $i>\gamma k$:
\begin{itemize}
\item For the first, what could happen is that by coincidence the
  parsing creates most of the time an offset-$i$ red block (called
  $i$-violation in the sequel), which therefore would increase until
  it covers almost all the word $w$. To avoid this, we introduce
  gadgets: we make sure that this happens at most half of the time
  (and thus cannot cover more than half of $w$). More precisely,
  Lemma~\ref{lemma:fewviolations} shows that at most half of the
  prefixes of $x$ can contain offset-$i$ blocks for any fixed
  $i\leq\gamma k$. This is due to the insertion of gadgets that
  ``kill'' some starting positions $i$ if necessary, by
  ``resynchronizing'' the parsing at a different position.
\item On the other hand, red blocks starting at position $i>\gamma k$ are
  shown to be of small size by Proposition~\ref{prop:smallblocks}. This is
  implied by Lemma~\ref{lem:junctions} claiming that, due to the structure of
  the $\DB(k)$ (few repetitions of factors), few junction red blocks can go up
  to position $(i-1)$ and precede an offset-$i$ block.
\end{itemize}
Since all large enough prefixes of $x$ have a constant portion containing only
red blocks of size $O(n^{1/4})$, the compression speed is $\Omega(n^{3/4})$
(Theorem~\ref{thm:toy}).

Gadgets must satisfy two conditions:
\begin{itemize}
\item they must not disturb the parsing of $w$;
\item the gadget $g_i$ must ``absorb'' the end of the red block ending
  at position $(i-1)$, and ensures that the parsing restarts at a
  controlled position different from $i$.
\end{itemize}
The insertion of gadgets in $w$ is not trivial because we need to ``kill''
positions without creating too many other bad positions, that is why gadgets
are only inserted in the second half of $w$. Moreover, gadget insertion
depends on the parsing of $0w$ and must therefore be adaptative, which is the
reason why we give an algorithm to describe the word $w$.

Let us summarize the organisation of the lemmas of this section:
\begin{itemize}
\item Lemma~\ref{lemma:violations} is necessary for the algorithm: it shows
  that, in $0\pref(x)$, there can be at most one position $i$ such that the
  number of $i$-violations is too high.
\item Lemma~\ref{lem:toy-upper-bound} shows that the parsing of $w$ is not
  disturbed by gadgets and therefore the compression speed of $w$ is
  $O(\sqrt{n})$.
\item Lemma~\ref{lemma:fewviolations} shows that gadgets indeed remove
  $i$-violations as required, for $i\leq\gamma k$.
\item Lemma~\ref{lem:junctions} uses the property of the $\DB(k)$ to prove
  that junction blocks cannot create too many $i$-violations if $i>\gamma k$.
\item Finally, Proposition~\ref{prop:smallblocks} uses
  Lemma~\ref{lem:junctions} to show that the offset-$i$ red blocks are small
  if $i$ is large.
\end{itemize}

\subsection{Construction and first properties}

Let $\gamma$ be any constant greater than or equal to $3$. Let $x$ be a
$\DB(k)$ beginning by $01$. We denote its size by $s=2^k+k-1$. Suppose for
convenience that $k$ is odd, so that $s$ is even.\footnote{This is to avoid
  dealing with the fractional part of $s/2$, but the construction also works
  in the case where $k$ is even.} For $i\in[0,s-1]$, let $w_i=x_{\leq i}$, so
that $\pref(x)=w_0.w_1\dots w_{s-1}$.

The word $w$ that we will construct is best described by an algorithm. It will
merely be $\pref(x)$ in which we possibly add ``gadgets'' (words) between some
of the $w_j$ in order to control the parsings of $w$ and $0w$. The letter in
front that will provoke the ``catastrophe'' is the first letter of $w$, that
is, $0$.

The gadgets $g_i^j$ (for $i\in[0,\gamma k]$ and $j\geq 0$) are defined as follows
(where $\bar x_i$ denotes the complement of $x_i$):
\begin{itemize}
\item $g_0^j=10^j$;
\item and for $i>0$, $g_i^j=x_{<i}.\bar x_i.1^j$.
\end{itemize}

Recall that the green blocks are those of the parsing of $w$, whereas the red
ones are those of the parsing of $0w$. We call ``regular'' the green blocks
that are not gadgets (they are of the form $w_j$ for some $j$). For
$i\in[0,s-1]$, we say that a regular green block in $w$ is $i$-violated
if there is an offset-$i$ (red) block in it. Note that gadgets do not count in
the definition of a violation.

\begin{lemma}\label{lemma:violations}
  For $i\in[0,s-1]$, let $l_i$ be the number of $i$-violated blocks in
  $\pref(x)=w_0.w_1\dots w_{s-1}$. Then for all $i\neq i'$,
  $l_i+l_{i'}\leq s$.

  In particular, there can be at most one $i$ such that the number of
  $i$-violated blocks is $>s/2$.
\end{lemma}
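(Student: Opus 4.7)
My plan is to fix $i<i'$ without loss of generality and analyze how $i$-violated and $i'$-violated green blocks can coexist, exploiting two facts specific to the setup: every green block $w_j=x_{\leq j}$ is a prefix of the same word $x$, and the red blocks in the parsing of $0\pref(x)$ are pairwise distinct.

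Set $A=\{j : w_j \text{ is } i\text{-violated}\}$ and $B=\{j : w_j \text{ is } i'\text{-violated}\}$, so that $l_i=|A|$ and $l_{i'}=|B|$. The first observation is that for each $j\in A$, the offset-$i$ red block sitting inside $w_j$ is necessarily the factor $x_{[i,\,i+\ell_j-1]}$ for some $\ell_j\geq 1$, simply because $w_j$ is a prefix of $x$. Since distinct red blocks are distinct \emph{words}, the map $j\mapsto\ell_j$ is \emph{injective} on $A$.

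The second observation is a geometric compatibility constraint: if $w_j$ is simultaneously $i$- and $i'$-violated, then the offset-$i$ block occupies positions $i,\dots,i+\ell_j-1$ inside $w_j$ and may not overlap the starting position $i'$ of the offset-$i'$ block. This forces $\ell_j\leq i'-i$. I would then partition $A=A_{\text{short}}\sqcup A_{\text{long}}$ at the threshold $i'-i$. Injectivity of $\ell$ yields $|A_{\text{short}}|\leq i'-i$, and the compatibility constraint yields $A_{\text{long}}\cap B=\emptyset$.

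To close the count, I would note that both $A_{\text{long}}$ and $B$ live inside $\{i',i'+1,\dots,s-1\}$: for $B$ because a $w_j$ containing an offset-$i'$ block must have size $>i'$; for $A_{\text{long}}$ because $\ell_j>i'-i$ combined with $\ell_j\leq j+1-i$ forces $j\geq i'$. Hence $|A_{\text{long}}|+|B|=|A_{\text{long}}\cup B|\leq s-i'$, and summing gives
\[
l_i+l_{i'}=|A_{\text{short}}|+|A_{\text{long}}|+|B|\leq (i'-i)+(s-i')=s-i\leq s.
\]
The main subtlety I expect is keeping the two kinds of injectivity clearly separated—distinctness-of-words for the bound on $A_{\text{short}}$ versus disjointness-of-positions for the bound on $A_{\text{long}}\cup B$. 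Once these are cleanly stated, the final arithmetic is immediate and the ``at most one bad $i$'' corollary follows just by observing that two values $>s/2$ cannot coexist.
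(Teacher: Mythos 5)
Your proof is correct, and it rests on the same two facts the paper uses — offset-$i$ blocks are prefixes of the fixed word $x[i..]$ so distinct ones have distinct lengths, and a block that is both $i$- and $i'$-violated forces the offset-$i$ block to have length at most $i'-i$ — but your bookkeeping is genuinely different. The paper splits by \emph{position in the word}: it counts the $i$-violations $a$ in $w_0\cdots w_{i'-1}$ separately from the both/only counts $b,c,d$ in $w_{i'}\cdots w_{s-1}$, derives $a+b\leq i'-i$ (the distinct-lengths argument, phrased as ``the block has already increased $a$ times''), and combines with the trivial $b+c+d\leq s-i'$. You instead split $A$ by the \emph{length of the offset-$i$ block} into $A_{\mathrm{short}}$ and $A_{\mathrm{long}}$, get $|A_{\mathrm{short}}|\leq i'-i$ directly from injectivity of $j\mapsto\ell_j$, and then use disjointness of $A_{\mathrm{long}}$ and $B$ together with the fact that both sit in $\{i',\dots,s-1\}$ to bound $|A_{\mathrm{long}}|+|B|\leq s-i'$. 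Your decomposition has the small advantage of making the distinct-lengths argument fully explicit (no appeal to the order in which offset-$i$ blocks appear), at the price of an extra step (the disjointness $A_{\mathrm{long}}\cap B=\emptyset$); the paper's version keeps both inequalities one-line but is terser about why $a+b\leq i'-i$. Either way the final arithmetic and the corollary are identical.
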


\begin{proof}
  Let $i$ and $i'$ be such that $0\leq i<i'<s$.

  Consider the red blocks starting at position $i$ and $i'$ in any green
  block.

  No green block in $w_0\dots w_{i'-1}$ is $i'$-violated since they are too
  small to contain position $i'$. Let $a$ be the number of $i$-violated blocks
  in $w_0\dots w_{i'-1}$. In $w_{i'}\dots w_{s-1}$, let $b$ be the number of
  green blocks that are both $i$-violated and $i'$-violated, and let $c$
  (respectively $d$) be the number of $i$-violated (resp. $i'$-violated)
  blocks that are not $i'$-violated (resp. $i$-violated) blocks.

  The number of $i$-violations is $l_i=a+b+c$ and the number of
  $i'$-violations is $l_{i'}=b+d$. But $b+c+d \leq s -i'$ and $b\leq i'-i-a$
  (since a red block starting at position $i$ can only be increased $(i'-i)$
  times before it overlaps position $i'$, and it has already increased $a$
  times in the first $i'$ green blocks), so that
  $l_i + l_{i'} = (b + c + d) + (a + b) \leq (s-i') + (i'-i) \leq s$.
  Therefore, $l_i$ or $l_{i'}$ has to be $\leq s/2$.
\end{proof}

The algorithm constructing $w$, illustrated in Figure~\ref{fig:algo}, is as
follows.
\begin{enumerate}
\item If the number of $i$-violations in $w_0.w_1\dots w_{s-1}$ is
  $\leq s/2$ for all $i\in[0,\gamma k]$, then output
  $w=w_0.w_1\dots w_{s-1}$.
\item Otherwise, let $i$ be the (unique by Lemma~\ref{lemma:violations}) integer
  in $[0,\gamma k]$ for which the number of $i$-violations is $>s/2$. Let
  $c=0$ (counter for the number of inserted gadgets) and $d=s/2+1$ (counter
  for the place of the gadget to be inserted).
\item For all $j\in[0,s-1]$, let $z_j=w_j$.
\item While the number of $i$-violations in $z_0.z_1\dots z_{s-1}$ is $\geq d$, do:
  \begin{enumerate}
  \item let $j$ be such that $w_j$ is the $d$-th $i$-violated green block;
  \item $z_j\leftarrow g_i^{c}w_j$ (we add the gadget $g_i^c$ before the block
    $w_j$);
  \item $c\leftarrow c+1$;
  \item if $w_j$ is still $i$-violated, then $d\leftarrow d+1$.
  \end{enumerate}
\item Return $w=z_0.z_1\dots z_{s-1}$.
\end{enumerate}

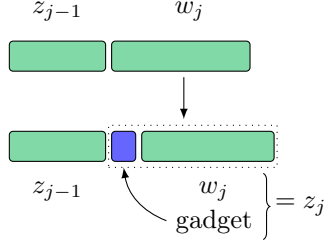
\begin{figure}[h]
  \centering
  \begin{tikzpicture}[scale=0.8, every node/.style={scale=1}]
    \node[] (W1) at (0.8,1) {$z_{j-1}$};
    \node[] (W2) at (3,1) {$w_{j}$};

  \draw [fill=green!70!blue!50,rounded corners=0.5mm] (0,0) rectangle (1.6,0.5);
  \draw [fill=green!70!blue!50,rounded corners=0.5mm] (1.7,0) rectangle (4,0.5);

  \draw [fill=green!70!blue!50,rounded corners=0.5mm] (0.0,-1.5) rectangle (1.6,-1);
  \draw [fill=blue!100!green!60,rounded corners=0.5mm] (1.7,-1.5) rectangle (2.1,-1);
  \draw [fill=green!70!blue!50,rounded corners=0.5mm] (2.2,-1.5) rectangle (4.4,-1);

  \draw [dotted] (1.65,-1.6) rectangle (4.45,-0.9);

    \draw[->,>=latex] (2.9,-0.1) -- (2.9,-0.8);

    \node[] (W1') at (0.8,-2) {$z_{j-1}$};
    \node[] (W2') at (3.4,-2) {$w_{j}$};
    \node[] (G) at (3.4,-2.5) {gadget};
    \draw[->,>=latex] (G) to [bend left] (1.9,-1.6);
    \draw[decoration={brace,raise=5pt},decorate]  (4,-1.7) --
    node[right=6pt] {$ = z_j$} (4,-2.8);
  \end{tikzpicture}
  \caption{Illustration of Step 4(b) of the algorithm.}
  \label{fig:algo}
\end{figure}

\begin{figure}[h]\centering
\tikzstyle{noeud}=[draw,circle,fill,draw=none]
\tikzstyle{noeudr}=[draw,circle,fill=green!70!blue!50,draw=none]
\tikzstyle{noeudg}=[draw,circle,fill=blue!100!green!60,draw=none]

\begin{tikzpicture}[scale=0.8, every node/.style={scale=0.4}]
    \draw [fill=green!70!blue!50,rounded corners=0.5mm] (-20,-4) rectangle (-19.5,-3.5);
  \draw [fill=green!70!blue!50,rounded corners=0.5mm] (-19.4,-4) rectangle (-18.8,-3.5);
  \draw [fill=green!70!blue!50,rounded corners=0.5mm] (-17.5,-4) rectangle (-16.5,-3.5);

  \draw [dotted] (-18.7, -3.8) -- (-17.6,-3.8);

  \draw [fill=blue!100!green!60,rounded corners=0.5mm] (-16.4,-4) rectangle (-16.2,-3.5);

  \draw [fill=green!70!blue!50,rounded corners=0.5mm] (-16.1,-4) rectangle (-15,-3.5);

  \draw [fill=blue!100!green!60,rounded corners=0.5mm] (-14.9,-4) rectangle (-14.5,-3.5);

  \draw [fill=green!70!blue!50,rounded corners=0.5mm] (-14.4,-4) rectangle (-13.2,-3.5);
  \draw [fill=green!70!blue!50,rounded corners=0.5mm] (-13.1,-4) rectangle (-11.8,-3.5);

  \draw [fill=blue!100!green!60,rounded corners=0.5mm] (-11.7,-4) rectangle (-11.2,-3.5);

  \draw [dotted] (-11.1, -3.8) -- (-9.9,-3.8);

  \draw [fill=green!70!blue!50,rounded corners=0.5mm] (-9.8,-4) rectangle (-7,-3.5);

  \draw [<-,>=latex] (-19.7,-3.4) to [bend left] (-19.1,-3.4);

  \draw [<-,>=latex] (-17,-3.4) to [bend left] (-15.7,-3.4);

  \draw [<-,>=latex] (-15.3,-3.4) to [bend left] (-13.9,-3.4);

  \draw [<-,>=latex] (-13.7,-3.4) to [bend left] (-12.5,-3.4);

  \draw [<-,>=latex] (-16.2,-4.1) to [bend right] (-14.8,-4.1);
  \draw [<-,>=latex] (-14.6,-4.1) to [bend right] (-11.4,-4.1);

 \begin{scope}[shift={(-5,-1)},scale=0.6]

  \node[noeud] (Empty) at (0,0) {};

  \node[noeudr] (R0) at (-1,-1) {};
  \node[noeudr] (R1) at (-1,-2) {};
  \node[noeudr] (R2) at (-1,-4) {};
  \node[noeudr] (R2') at (-1,-3) {};
  \node[noeudr] (R4) at (-1,-7) {};
  \node[noeudr] (R5) at (-1,-8) {};

  \node[noeudg] (G1) at (0,-3) {};
  \node[noeudg] (G2) at (0,-4) {};
  \node[noeudg] (G3) at (0,-6) {};

  \draw[] (Empty) -- (R0);
  \draw[] (R0) -- (R1);
  \draw[] (R1) -- (R2');
  \draw[] (R2') -- (R2);
  \draw[dotted] (R2) -- (R2');
  \draw[dotted] (R2) -- (R4);
  \draw[] (R4) -- (R5);

  \draw[] (R1) -- (G1);
  \draw[] (G1) -- (G2);
  \draw[dotted] (G2) -- (G3);
  \end{scope}

  \end{tikzpicture}
  \caption{Left: Form of the word $w$. The blocks in green are
    the regular blocks, the blocks in blue are the gadgets. The arcs
    represent the relation of paternity. Right: The shape of the
    parsing tree of $w$.}
  \label{fig:algo2}
\end{figure}
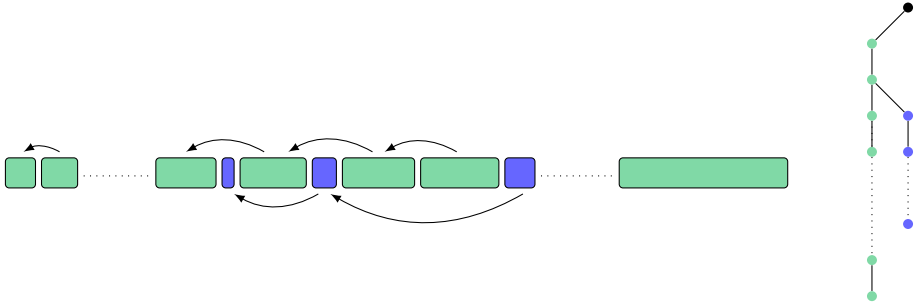

Some parts of the algorithm might seem obscure, in particular the role
of the counter $d$. The proof of the following properties should help
understand this construction, but let us first explain the intuition
behind the algorithm.  Below (Proposition~\ref{prop:smallblocks}) we
will have a generic argument (i.e. true without gadgets) to deal with
the $i$-violations for $i>\gamma k$, therefore for now we only care of
$i$-violations for $i\leq \gamma k$. They are not problematic if there
are at most (roughly) $s/2$ of them. Thanks to
Lemma~\ref{lemma:violations}, there is therefore at most one $i_0$
which can be problematic. To guarantee the upper bound of (roughly)
$s/2$ for the number of $i_0$-violations, every time it is necessary
we insert between two regular green blocks one gadget to kill the
$(s/2+1)$-th, $(s/2+2)$-th, etc., $i_0$-violations. But gadgets are
guaranteed to work as expected only if at least $1+(\gamma+1)k$ of
them have already been inserted (see Lemma~\ref{lemma:fewviolations}),
hence the counter $d$ is useful to avoid inserting two gadgets in
front of the same regular block.

From now on, we call $w$ the word output by the algorithm. We first evaluate
the size of $w$. Its minimal size is obtained when no gadgets are added during
the algorithm:
$$|w|\geq \frac{s(s+1)}{2}.$$
On the other hand, if $s/2$ gadgets $g_{\gamma k}^c$ of size $\gamma k+1+c$
are added, we obtain an upper bound on $|w|$:
$$|w|\leq \frac{s(s+1)}{2}+\sum_{c=0}^{s/2-1}(\gamma k+1+c)=\frac{5s^2}{8}+o(s^2).$$

Let us show that the word $w$ is nearly optimally compressible (upper
bound).
\begin{lemma}\label{lem:toy-upper-bound}
  The compression speed of $w$ is at
  most $$3\sqrt{\frac{2}{5}}\sqrt{|w|}.$$
\end{lemma}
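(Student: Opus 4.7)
The plan is to analyze the LZ parsing of $w$ directly. Since $w = z_0 z_1 \cdots z_{s-1}$ where each $z_j$ is either the prefix $w_j$ of $x$ or a gadget-prefixed block $g_i^c w_j$, I aim to show that the greedy parsing places block boundaries exactly at the junctions between these pieces, with each piece contributing exactly one new block to $\dic(w)$. Combining this with a length estimate for $w$ and an upper bound $G \leq s/2$ on the number of inserted gadgets will yield the claim.

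First, I would prove by induction on $j \in \{0, \ldots, s-1\}$ that after parsing $z_0 \cdots z_{j-1}$, the dictionary consists of exactly $\{w_0, \ldots, w_{j-1}\}$ together with the gadgets inserted so far, and that $z_j$ contributes one or two new blocks accordingly. The crux is to verify the greedy longest-prefix matches: when beginning a regular block $w_j = w_{j-1}x_j$, the longest dictionary match is $w_{j-1}$; when beginning a gadget $g_i^c = x_{<i}\bar x_i 1^c$ (for $i > 0$), the longest match is $g_i^{c-1}$, or $w_{i-1}$ when $c = 0$; the case $i = 0$ is analogous using the gadget chain $1, 10, 100, \ldots$ Both claims follow from a common letter-mismatch argument: since $\bar x_i \neq x_i$, no $w_k$ with $k \geq i$ can extend past position $i$ inside a gadget, and symmetrically no gadget can be a prefix of any $w_k$. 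Hence the parsing produces exactly $s + G$ blocks, where $G$ is the number of inserted gadgets.

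Second, to bound $G \leq s/2$, let $V_t$ denote the number of $i$-violations after $t$ iterations of the main loop and $d_t$ the current value of $d$. Under the (forthcoming) guarantee that gadget insertions never create new $i$-violations, each iteration either strictly decreases $V$ (when the $d$-th violated block is killed, leaving $d$ unchanged) or strictly increments $d$ (when the block remains violated). Hence $d_t - V_t$ grows by at least one per iteration; it begins at $s/2 + 1 - V_0 \geq 1 - s/2$ and, at termination, satisfies $d_t - V_t \geq 1$, so the loop runs at most $V_0 - s/2 \leq s/2$ times.

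Third, combining these facts and using $|g_i^c| \geq c+1$ in all cases,
\[
|w| \geq \frac{s(s+1)}{2} + \sum_{c=0}^{G-1}(c+1) \geq \frac{s^2 + G^2}{2},
\]
while $|\dic(w)| = s + G$. Therefore
\[
\frac{|\dic(w)|}{\sqrt{|w|}} \leq \sqrt{2}\,\frac{s+G}{\sqrt{s^2+G^2}}.
\]
Setting $t = G/s \in [0, 1/2]$, the function $(1+t)/\sqrt{1+t^2}$ is increasing on $[0, 1]$ and is thus bounded by $(3/2)/\sqrt{5/4} = 3/\sqrt{5}$ on $[0, 1/2]$, which yields $|\dic(w)| \leq 3\sqrt{2/5}\sqrt{|w|}$. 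The main obstacle will be the first step: carefully verifying the letter-mismatch argument across all subcases ($i = 0$ versus $i > 0$, $c = 0$ versus $c \geq 1$, and the transition from the end of a gadget to the start of the following regular block) so that no unexpected longer match disturbs the intended block structure.
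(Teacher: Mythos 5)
Your Steps~1 and~3 are sound and give a more explicit version of what the paper does: the paper simply notes that the gadgets $(g_i^c)_c$ form a chain of prefixes none of which is a prefix of $x$, so the parsing tree consists of the main path $w_0,\dots,w_{s-1}$ with a secondary chain of length $G$ hanging off a vertex at depth $\leq i$; your letter-mismatch induction establishes the same dictionary $\{w_0,\dots,w_{s-1}\}\cup\{g_i^0,\dots,g_i^{G-1}\}$ in more detail. Your Step~3 is the same ``worst case'' computation as the paper's, phrased as an explicit optimization over $t=G/s\in[0,1/2]$.

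The gap is in Step~2. You bound $G$ via the potential $d_t-V_t$ and explicitly condition on ``the (forthcoming) guarantee that gadget insertions never create new $i$-violations.'' No such guarantee exists, and it is not true in general: inserting $g_i^c$ between $z_{j-1}$ and $w_j$ alters the red parsing from the gadget onward, which can move block boundaries in $w_j,w_{j+1},\dots$ and create \emph{new} offset-$i$ red blocks in those later green blocks. (The paper's own Lemma~\ref{lemma:fewviolations} is consistent with this: its final bound on the number of $i$-violations is $s/2+(1+\gamma)k+1$, strictly more than $s/2$, precisely because insertions may fail or shift things; it never asserts that $V$ is non-increasing.) Without monotonicity of $V$ the quantity $d_t-V_t$ need not increase, and your bound on the number of loop iterations does not follow.

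The correct and simpler argument for $G\leq s/2$ is the one the paper uses implicitly: the index $j$ before which the $t$-th gadget is inserted is strictly increasing in $t$. Indeed, the gadget is inserted after $z_{j-1}$, so the red parsing of $z_0\cdots z_{j-1}$ — and hence the violation status of $w_0,\dots,w_{j-1}$ — is untouched; there are exactly $d_t-1$ violated blocks among them, so whether $d$ is kept (because $w_j$ was killed) or incremented (because $w_j$ remains violated), the next sought block lies at an index $>j$. Since the first insertion index is that of the $(s/2+1)$-th violated block, hence $\geq s/2$, and indices are bounded by $s-1$, at most $s/2$ gadgets are inserted. Substituting this into your Step~3 closes the proof.
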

\begin{proof}
  If the algorithm stops at step 1, then $w=\pref(x)$ and it is compressed
  optimally (see Remark~\ref{rem:extremal}): the compression speed is
  $$\sqrt{2}\sqrt{|w|}+O(1).$$

  Otherwise, we add at most one gadget for each $w_j$, and only for $j>s/2$.
  Therefore, there are at most $s/2$ gadgets. Remark that, for the $i$ fixed
  in the algorithm, the gadgets $(g_i^j)_j$ are prefixes one of each other,
  and none of them are prefixes of $x$. Thus the parsing tree of $w$ consists
  of one main path of size $s$ (corresponding to $w_0,w_1,\dots,w_{s-1}$),
  together with another path of size $\leq s/2$ (corresponding to the gadgets
  $(g_i^j)_j$) starting from a vertex of the main path. See Figure~\ref{fig:algo2}.

  The worst case for the compression speed is when the second path is of size
  $s/2$ and starts at the root. Then the size of $w$ is
  $$|w|\geq  \frac{s(s+1)}{2}+\frac{(s/2)(1+s/2)}{2}\geq \frac{5s^2}{8}$$
  and the size of the dictionary is $3s/2$, yielding the compression speed
  stated in the lemma.
\end{proof}

Let us now turn to the lower bound on the compression speed of $0w$. The next
lemma shows that, for $i\leq\gamma k$, there are not too many $i$-violations
thanks to the gadgets.

\begin{lemma}\label{lemma:fewviolations}
  For all $i\in[0,\gamma k]$, the number of $i$-violations in $w$ is at most
  $$s/2+(1+\gamma)k+1.$$
\end{lemma}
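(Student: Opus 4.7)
The plan is to split on the outcome of the algorithm, and in the nontrivial case to treat $i=i_0$ and $i\neq i_0$ separately.

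If the algorithm terminates at Step~1, no gadget is inserted, $w=\pref(x)$, and for every $i\in[0,\gamma k]$ the number of $i$-violations is at most $s/2$, which is well below the claimed bound. Otherwise, the algorithm enters Step~4 with the unique $i_0\in[0,\gamma k]$ whose original violation count exceeds $s/2$ (uniqueness is Lemma~\ref{lemma:violations}). The key technical fact I would establish is that once $c$ is large enough (concretely $c>(1+\gamma)k$), the gadget $g_{i_0}^{c}$ behaves as designed: it absorbs the red block that was about to become an offset-$i_0$ block in the following regular $w_j$, and it resynchronizes the red parsing at a position of $w_j$ different from $i_0$ (and indeed different from every $i\in[0,\gamma k]$). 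This rests on two structural observations: the prefix $x_{<i_0}\bar x_{i_0}$ of $g_{i_0}^{c}$ differs in its last letter from $x_{\leq i_0}$, so the red block just before the gadget cannot be extended past this mismatch; and the tail $1^c$ can be taken longer than any previously produced red block, forcing a new red block to close inside the gadget rather than crossing into $w_j$.

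Granting this, the two subcases follow. For $i=i_0$: the counter $d$ starts at $s/2+1$ and is incremented only when a gadget fails, but by the intended-behaviour claim failures are confined to the transient phase $c\leq(1+\gamma)k$; hence $d\leq s/2+(1+\gamma)k+1$ at termination, and since the while-loop ends when the number of $i_0$-violations drops strictly below $d$, we obtain the bound. For $i\neq i_0$ in $[0,\gamma k]$: Lemma~\ref{lemma:violations} gives an original count $l_i\leq s-l_{i_0}<s/2$, and by the intended-behaviour claim a gadget creates at most one spurious $i$-violation, and only during the same transient phase of at most $(1+\gamma)k+1$ steps, giving the same bound.

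The main obstacle is the local analysis proving the intended-behaviour claim for $g_{i_0}^{c}$ with $c>(1+\gamma)k$: one has to track precisely how the currently open red block extends into the gadget, how it closes on the $1^c$ tail, and at what offset the parsing resynchronizes at the start of $w_j$. The structural ingredients are the pairwise distinctness of the $g_{i_0}^{c}$ across different $c$ (forced by the varying $1$-tails, which ensures they do not collapse with previously seen red blocks) together with the fact that $x_{\leq i_0}$ is not a prefix of $x_{<i_0}\bar x_{i_0}$.
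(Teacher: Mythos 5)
There is a genuine gap, and it lives exactly where you flag it as the ``main obstacle'': your intended-behaviour claim for the gadget is false as stated. Once $c\geq k+1$, the paper shows that the tail $\bar x_{i_0}1^c$ of $g_{i_0}^c$ is parsed as a single red block ending at the gadget boundary, so the red parsing resynchronizes at position $0$ of the following regular block $w_j$. Position $0$ lies in $[0,\gamma k]$, so the resynchronization is \emph{not} at a position ``different from every $i\in[0,\gamma k]$''. In particular, the gadget can still fail to kill the $i_0$-violation: the offset-$0$ red block in $w_j$ may end at exactly position $i_0-1$, leaving an offset-$i_0$ block behind it. The paper's key additional step, which you omit, is that this can happen at most $i_0$ more times, because the offset-$0$ block grows by one each time it is seen and therefore eventually covers position $i_0$. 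The bound $(\gamma+1)k+1 = (k+1)+\gamma k$ is built from exactly these two phases: at most $k+1$ failures before the tail is cleanly parsed, and at most $i_0\leq\gamma k$ failures afterwards. Your version, which pushes the entire transient to $c>(1+\gamma)k$ and then declares the gadget infallible, collapses these two mechanisms into one and would not survive the local analysis you defer.

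Two smaller issues follow from this. First, you do not treat the case $i_0=0$ separately, but the paper must: the gadget $g_0^c=10^c$ works by a different mechanism (it is shown inductively to already be in the dictionary of $0w$, so it is absorbed into a junction block and the next offset block starts strictly past position $0$), with \emph{zero} failures. Second, for $i\neq i_0$ your claim that ``a gadget creates at most one spurious $i$-violation, and only during the transient phase'' is wrong: in Case 2 every cleanly-parsed gadget creates a fresh $0$-violation. The correct argument for $i\neq i_0$ is simply Lemma~\ref{lemma:violations} (which the paper observes remains valid in the presence of gadgets): since at termination $l_{i_0}\geq s/2$, one gets $l_i\leq s-l_{i_0}\leq s/2$ for all $i\neq i_0$, with no need to track spurious violations gadget by gadget.
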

\begin{proof}
  If no gadgets have been added during the algorithm, then for all
  $i\in[0,\gamma k]$, the number of $i$-violations in $w$ is $\leq s/2$.

  Otherwise, first remark that Lemma~\ref{lemma:violations} remains valid even
  when the gadgets are added. We need to distinguish on the type ($i=0$ or
  $i>0$) of the most frequent violations in $w$.
  \begin{itemize}
  \item Case 1: the most frequent violations are $0$-violations. In that case,
    we claim that whenever a gadget is inserted before a block $w_i$, the
    $0$-violation in $w_i$ disappears. It is enough to prove that whenever a
    gadget $\ga{0}{j}$ is added, it was already in the dictionary of $0w$, so
    that the next word in the dictionary will begin by $\ga{0}{j}0$ and the
    parsing will overlap position $0$ of the next green block.

    We proceed by induction: for $j=0$, $\ga{0}{0} = 1$, and this word is the
    third block in the parsing of $0w$, because $x$ starts with $01$. For
    $j>0$: when $\ga{0}{j-1}$ was parsed, by induction it was already in the
    dictionary, so that the block added in the dictionary of $0w$ starts with
    $\ga{0}{j-1}0=\ga{0}{j}$.

    After at most $s/2$ iterations of the while loop, there is no more
    $(s/2+1)$-th $0$-violation: the number of $0$-violations is exactly $s/2$.
    Observe that violations for $i>0$ have been created, but by
    Lemma~\ref{lemma:violations}, for each $i>0$, the number of $i$-violations
    remains $\leq s/2$.
  \item Case 2: the most frequent violations are $i$-violations for some $i > 0$.
    In that case, the first few times when a gadget is inserted, it may fail
    to kill the corresponding $i$-violation. But we claim that the number of
    such fails cannot be larger than $(\gamma+1)k+1$ (equivalently, in
    the algorithm the counter $d$ remains $\leq s/2+(\gamma+1)k+2$).

    Indeed, since we add a gadget only before an $i$-violation, the
    parsing splits the gadget $\ga{0}{j}= x_{<i} \bar x_i.1^j$ between
    $x_{< i}$ and $\bar x_i.1^j$. Furthermore, by induction,
    $\bar x_i.1^j$ is not split by the parsing. But for the gadget
    $\ga{i}{k+1}$, $\bar x_i.1^{k+1}$ is parsed in exactly one block
    because this factor does not appear anywhere in $w$ before
    $\ga{i}{k+1}$. From that moment on, each $i$-violation creates
    through the gadget a $0$-violation. The number of blocks that are
    both $0$-violated and $i$-violated is at most $i$ (due to the
    growth of the block at position $0$). Thus, at most $i$ more
    gadgets may fail to kill position $i$. The total number of ``failing''
    gadgets is $\leq k+1+i\leq (\gamma+1)k+1$.
  \end{itemize}
\end{proof}

\subsection{The weak catastrophe}
\label{sec:weak}

This section is devoted to the proof of the lower bound: the compression speed
of $0w$ is $\Omega(|w|^{3/4})$. Thanks to Property~\pstar,
Lemma~\ref{lem:junctions} below bounds the number of junction blocks ending at
a fixed position $(i-1)$ by a decreasing function of $i$. The proof is quite
technical and requires to distinguish three categories among (red) junction
blocks:
\begin{itemize}
\item Type 1: junctions over consecutive factors $w_a$ and $w_{a+1}$ (no gadget
  between two regular green blocks);
\item Type 2: junctions starting in a gadget
  $g_j^{j'}$ and ending in the following regular green block;
\item Type 3: junctions starting in a regular green block and ending in the
  following gadget $g_j^{j'}$.
\end{itemize}
\begin{lemma}
  \label{lem:junctions}
  Let $i\geq 2k+3$. Let $uu'$ be a junction block of type 1 over $w_aw_{a+1}$
  ending at position $i-1$ in $w_{a+1}$, with $u$ being the suffix of $w_a$
  and $u'$ the prefix of $w_{a+1}$. Then $|u| \leq k - \log (i - 2k - 1)$.

  In particular, the number of such blocks is upper bounded by the number of
  words of size $\leq k - \log (i-2k-1)$, that is, $\frac{2^{k+1}}{i-2k-1}$.
\end{lemma}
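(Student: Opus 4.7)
The plan is to exploit the prefix-closed structure of LZ'78 dictionaries. Since $uu'$ is a block in $\dic(0w)$, every prefix of $uu'$ is also a block in $\dic(0w)$; writing $F_l := u \cdot x_0 x_1 \cdots x_{l-1}$ for $l \in \{0,1,\ldots,i\}$ (so $F_0 = u$ and $F_i = uu'$), this gives $i$ distinct blocks $F_0, F_1, \ldots, F_{i-1}$ sitting at $i$ pairwise disjoint positions in $0w$. The goal is to show that, for $l$ large enough, the number of positions in $w$ at which $F_l$ can appear as a factor is at most roughly $2^{k-|u|}$; combined with the distinctness of the $F_l$'s this will yield the claimed inequality on $|u|$.

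For any $l > k$, the tail $x_0 x_1 \cdots x_{l-1}$ of $F_l$ has length greater than $k$, so by the de Bruijn property it appears at most once in $x$, and since it is itself a prefix of $x$ this unique occurrence is at position $0$. Hence in $w$ the factor $x_0 x_1 \cdots x_{l-1}$ can only begin at the start of some regular green block $w_j$ with $j \geq l-1$. A parallel check shows that, for $l$ large enough, the gadgets $g_{i^*}^{j'} = x_{<i^*}\bar x_{i^*}1^{j'}$ neither begin with nor internally contain $x_0 x_1 \cdots x_{l-1}$ (the ``broken'' letter $\bar x_{i^*}$ followed by ones leaves no room for another long prefix of $x$), so gadget occurrences contribute only a bounded correction absorbed into the ``$-2k-1$'' slack of the statement.

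Consequently, for $l \geq 2k+1$, every occurrence of $F_l$ in $w$ places $u$ immediately before the start of some $w_j$, which forces the $|u|$ preceding characters to equal $u$. In the type~1 case (no gadget between $w_{j-1}$ and $w_j$), these characters are $x_{j-|u|}\cdots x_{j-1}$, i.e.\ $u$ must end at position $j-1$ in $x$; by Property~\pstar, for $|u| \leq k$ there are at most $2^{k-|u|}$ such $j$'s. The gadget-preceded case provides only a bounded extra contribution. Since the $F_l$'s for $l \in [2k+1, i-1]$ are $i - 2k - 1$ distinct blocks lying at pairwise distinct positions drawn from this bounded set, we obtain $i - 2k - 1 \leq 2^{k-|u|}$, i.e.\ $|u| \leq k - \log(i - 2k - 1)$. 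The ``in particular'' statement then follows at once: $u'$ is determined by $i$ (it is the length-$i$ prefix of $x$), so each such junction block is determined by its $u$, a binary word of length at most $k - \log(i-2k-1)$, of which there are at most $\sum_{\ell=0}^{k-\log(i-2k-1)} 2^{\ell} \leq 2^{k+1}/(i-2k-1)$.

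The hard part will be the clean accounting of gadget contributions. While only $O(s)$ gadgets are ever inserted and each has the rigid structure $x_{<i^*}\bar x_{i^*}1^{j'}$, one must verify (a) that $x_0 x_1 \cdots x_{l-1}$ for $l \geq 2k+1$ cannot begin or appear inside any gadget, and (b) that cases where the $|u|$ characters before $w_j$ come from a gadget's trailing $1^{j'}$ do not contribute more than $O(2^{k-|u|})$ extra valid positions — both of these are what the generous slack ``$-2k-1$'' in the statement is designed to absorb.
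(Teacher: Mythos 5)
Your high-level plan is the same as the paper's: use the prefix-closure of $\dic(0w)$ to produce $\Theta(i)$ predecessor blocks $F_l = u\cdot x_{<l}$, observe that for $l$ large enough the de Bruijn tail $x_{<l}$ pins $u$ to end at a green-block boundary, and then bound the number of such boundaries by $\occ_x(u)=2^{k-|u|}$ via Property~\pstar. That is exactly what the paper does, with $v$ there playing the role of your $x_{<2k}$.

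The genuine gap is the ``bounded extra contribution'' handed to the gadgets, which is not a slack you can hand-wave; it is the entire content of the proof. Concretely, two things go wrong. First, your specific claim that the gadgets ``neither begin with nor internally contain $x_0\cdots x_{l-1}$'' is false: a gadget $g_{i^*}^{j'}=x_{<i^*}\bar x_{i^*}1^{j'}$ with $i^*\geq l$ does begin with $x_{<l}$, and since $i^*$ can be as large as $\gamma k\geq 3k>2k+1$ this regime is nonempty. (It happens to be harmless because in that case $F_l$ is a type~3 junction and $u$ is still a suffix of the preceding regular block, but as stated your claim is wrong and you are not getting the conclusion from it.) Second, and more seriously, when $u$ itself sits inside a gadget (your ``gadget-preceded case'', the paper's type~2 junctions) the factor before $w_j$ is a suffix of $10^{j'}$ or of $\bar x_{i^*}1^{j'}$, not a factor of $x$, so \pstar\ says nothing. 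The paper neutralizes this by a separate structural observation — that the red parsing splits every gadget at the same place, hence the type~2 junctions have pairwise distinct left parts, hence at most one of them can contribute $u$ — and then subtracts that single unit, which is precisely why the final count is $i-2k-1$ rather than $i-2k$. You assert that the correction is ``bounded'' without giving any bound; a correction of size $\Theta(k)$ or $\Theta(s)$ would break the inequality $i-2k-1\leq 2^{k-|u|}$. You also never rule out $|u|>k$, which is needed before invoking \pstar; the paper gets it for free from $|M|\geq 3$, and you should make the analogous observation explicit. In short, the skeleton is right, but the five-way case analysis over regular blocks, gadgets, and junction types 1/2/3 — which is almost all of the paper's proof — is the part you left as ``the hard part'', and without it the bound does not follow.
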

\begin{proof}
  Let $v$ be the prefix of size $2k$ of $u'$ (which is also the prefix of
  $x$). 
  All the prefixes of $uu'$ of size $\geq |uv|$ have to be in the dictionary
  of $0w$: we call $M$ the set of these prefixes ($|M|=i-2k$). We claim that
  these blocks are junction blocks of type 1 or 3 only (except possibly for one
  of type 2), with only $u$ on the left side of the junction. Indeed, let us
  review all the possibilities:
  \begin{enumerate}
  \item $uv$ cannot be completely included in a regular block, otherwise
    $v[0..k-1]$ would appear both at positions $0$ and $p>0$ in $x$, which
    contradicts Property~\pstar;
  \item $uv$ cannot be completely included in a gadget:
    \begin{itemize}
    \item if the gadget is $g_0^j=10^j$, impossible because $v$ cannot have
      more than $k$ zeroes since it is a factor of $x$,
    \item if the gadget is $g_b^j=x_{<b}.\bar x_b.1^j$, by the red parsing of
      gadgets, either $uv$ is in $x_{<b}$ (impossible because $v$ would appear
      at a position $\geq |u|$ in $x$), or $uv$ is in $\bar x_b.1^j$ (impossible
      because $v$ cannot contain more than $k$ ones);
    \end{itemize}
  \item if $uv$ is a type 1 junction but not split between $u$ and $v$, it is
    impossible because the three possible cases lead to a contradiction:
    \begin{itemize}
    \item if $u$ goes on the right, then $v$ would appear at another position
      $p>0$ in $x$,
    \item if $v$ goes on the left by at least $k$, then $v[0..k-1]$ would
      again appear at two different positions in $x$,
    \item if $v$ goes on the left by less than $k$, then it goes on the right
      by more than $k$ and $v[k..2k-1]$ would again appear at two different
      positions in $x$;
    \end{itemize}
  \item if $uv$ is a type 2 junction but not split between $u$ and $v$, it is
    again impossible:
    \begin{itemize}
    \item if $u$ goes on the right, then $v$ would appear at another position
      $p>0$ in $x$,
    \item if $v$ goes on the left by at least $2$, then $v[0..1]$ would be
      either $00$ or $11$ (depending on the gadget), but we know it is
      $x_0x_1=01$,
    \item otherwise, $v$ goes on the right by $2k-1$, and
      $v[1..k]$ would appear at positions $0$ and $1$ in $x$;
    \end{itemize}
  \item if $uv$ is a type 3 junction, first remark that the gadget is of the
    form $g_b^j$ for $b>0$ (because, for gadgets of the form $g_0^j$, the red
    parsing starts at position $0$ of the gadget). If $uv$ is not split
    between $u$ and $v$, it is once again impossible:
    \begin{itemize}
    \item if $u$ goes on the right, the red parsing of the gadget stops after
      $x_{<b}$ and $v$ would appear in $x$ at a non-zero position,
    \item similarly, if $v$ goes on the left by less than $k$, then
      $v[k..2k-1]$ would again appear at two different positions in $x$,
    \item if $v$ goes on the left by at least $k$, then $v[0..k-1]$ would
      again appear at two different positions in $x$.
    \end{itemize}
  \end{enumerate}
  Remark finally that all parsings of type 2 junctions have different sizes on
  the left. Therefore, at most one can contain $u$ on the left. The claim is
  proved.

  Thus, at least $|M|-1$ regular green blocks have $u$ as suffix. Remark that,
  since $|M|\geq 3$, there are at least two such green blocks, therefore
  $|u|\leq k$. Hence by Property~\pstar we have:
  \begin{align*}
    |M|-1 &\leq 2^{k-|u|}\\
    i-2k-1 &\leq 2^{k-|u|}\\
    |u| &\leq k - \log(i-2k-1).
  \end{align*}
\end{proof}

As a consequence, in the next proposition we can bound the size of offset-$i$
blocks. Along with the role of gadgets, this will be a key argument for the
proof of Theorem~\ref{thm:toy}. The idea is the following: for a red block $u$
starting at a sufficiently large position $i$, roughly $|u|$ other red blocks
have to end at position $(i-1)$, and in the red parsing $\Omega(|u|^2)$
prefixes of these blocks must appear in different green blocks (and in the
dictionary), giving the bound $s= \Omega(|u|^2)$.
\begin{proposition}\label{prop:smallblocks}
  For any $i > \gamma k$, the size of an offset-$i$ block included in a
  regular green block is at most
  $$2\sqrt{s} + 5k + \frac{2^{k+1}}{i-2k-1}.$$
\end{proposition}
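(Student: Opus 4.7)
My plan is to bound $|u|=\ell$ by first showing that many $i$-violated regular green blocks must precede $b_*$, and then bounding their number using Lemmas~\ref{lem:worstcase} and~\ref{lem:junctions}.

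Let $b_1 < \cdots < b_J = b_*$ denote the $i$-violated regular greens in order of appearance in $w$, with offset-$i$ blocks of sizes $\ell_1, \ldots, \ell_J = \ell$. I would first show $\ell \leq J + k - 1$ by proving that these sizes form consecutive integers starting from some $\ell_1 \leq k$. The key point is Property~\pstar: for any $\ell_j \geq k$, the prefix $x[i..i+\ell_j]$ has its size-$k$ prefix occurring uniquely in $x$ at position $i$, and so cannot appear in a regular green at a different offset, nor inside a gadget (since gadgets have the form $g_{i'}^c$ with $i' \leq \gamma k < i$, whose $x_{<i'}$-part is too short to contain this prefix), nor as a junction content (ruled out by an analysis analogous to that in the proof of Lemma~\ref{lem:junctions}). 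Hence this prefix can only be added to the dictionary at the next $i$-violation, forcing $\ell_{j+1} = \ell_j + 1$. A similar Property~\pstar\ argument applied before the first $i$-violation gives $\ell_1 \leq k$, whence $\ell \leq J + k - 1$.

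Next, I would bound $J$ by counting, for each $b_j$, the unique red block ending at position $i-1$ that precedes its offset-$i$ block. This preceding block is of one of three types: (i) a type 1 junction across two consecutive regular greens; (ii) a type 2 junction originating in a preceding gadget; or (iii) an offset block internal to $b_j$. Type 1 junctions are bounded by $\frac{2^{k+1}}{i-2k-1}$ directly via Lemma~\ref{lem:junctions}. Type 2 junctions originate in a gadget $g_{i'}^c$ with $i' \leq \gamma k < i$, and an analysis in the spirit of Lemma~\ref{lem:junctions} (using the restricted structure of their content $x_{<i'}\bar x_{i'}1^c$) shows they contribute only an $O(k)$ additive term, absorbed into the final $5k$. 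For type (iii), each such offset block has content $x[i'..i-1]$ for some $i' \in [0,i-1]$, so the family of possible contents has exactly one word per size; a counting argument in the spirit of the proof of Lemma~\ref{lem:worstcase}, applied with $N=1$ to an appropriate subword of $w$ of length $O(s)$ effectively partitioned by these blocks, yields the $2\sqrt{s}$ bound. Summing, $J \leq 2\sqrt{s} + 4k + \frac{2^{k+1}}{i-2k-1}$, and therefore $\ell \leq J + k - 1 \leq 2\sqrt{s} + 5k + \frac{2^{k+1}}{i-2k-1}$.

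The main technical obstacle I anticipate is the type-(iii) bound: precisely identifying the word of length $O(s)$ that is effectively partitioned by these distinct preceding offset blocks, so that the Lemma~\ref{lem:worstcase} style counting yields exactly $2\sqrt{s}$ (rather than, say, $\sqrt{2s}$ or $2\sqrt{si}$). The additive $5k$ term has to absorb the type-(ii) contribution, the gap $\ell_1 \leq k$, and the handful of indices $j$ for which $\ell_j < k$ so that the Property~\pstar\ argument does not directly apply.
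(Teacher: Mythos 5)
Your decomposition (offset-$i$ violations $\Rightarrow$ blocks ending at $i-1$ $\Rightarrow$ split into type-1 junctions via Lemma~\ref{lem:junctions}, type-2 junctions, and internal offset blocks) mirrors the paper's argument up to the point where you must bound the last category, the set the paper calls $M$. But the route you propose for that last step does not work, and it is exactly the step where the paper's real idea lives.

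The type-(iii) blocks do not partition any subword of $w$: they are scattered in distinct regular green blocks and separated by arbitrary other material. Lemma~\ref{lem:worstcase} with $N=1$, applied to the ``concatenation'' of these blocks, would require their \emph{total} length to be $O(s)$, and there is no reason for that. Since they all end at position $i-1$ with distinct sizes in $[1,i]$, a collection of $J'$ of them can have total length $\Theta(J'\cdot i)$; with $i$ up to $\Theta(s)$, plugging this into Lemma~\ref{lem:worstcase} yields only $J'=O(s^{3/4})$, far from the desired $O(\sqrt s)$. In short, ``one word per size'' is the right observation, but there is no word of length $O(s)$ being partitioned, so Lemma~\ref{lem:worstcase} is not the tool. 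What the paper actually does is split on $i$: for $i\le 2\sqrt s$ the trivial bound $|M|\le i$ already gives $2\sqrt s$; for $i>2\sqrt s$ it runs a genuinely different, \emph{quadratic} counting argument on $\p(M)$: words of $\p(M)$ straddling the interval $x[i-A-2k..i-A]$ number at least $\bigl(|M|-A-2k\bigr)A$ and must lie in pairwise distinct regular green blocks (by Property~\pstar\ applied to the sub-factor $x[i-A-2k..i-A-k-1]$), hence $\bigl(|M|-A-2k\bigr)A\le s$, and optimizing $A$ gives the $2\sqrt s$. This is not a partition bound but a ``many prefixes in distinct green blocks'' bound, and it is the missing ingredient in your proposal.

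A secondary, minor issue: the claim $\ell_1\le k$ and ``consecutive sizes'' from that point on is a bit too strong. Ruling out that a prefix $x[i..i+t]$ sneaks into the dictionary via a junction requires the prefix to have length at least $2k+1$ (so you can locate a subword of length $k$ wholly on one side of the junction and wholly inside a regular green block); this is why the paper works with predecessors of size $\ge 2k+1$. Your constant gets absorbed into $5k$ anyway, so this is not where the proof fails, but $\ell_1\le 2k$ rather than $\ell_1\le k$ is what the Property-\pstar\ case analysis actually yields.
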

\begin{proof}
  Let $u$ be an offset-$i$ block of size $\geq 2k$.

  We claim that the red blocks predecessors of $u$ of size at
  least $2k+1$
  have to start at position $i$ in regular green blocks. Indeed, let $v$ be a
  prefix of size $\geq 2k+1$ of $u$; let us analyse as before the
  different cases:
  \begin{itemize}
  \item If $v$ is included in a regular green block, then it has to start at
    position $i$ by Property~\pstar;
  \item $v$ cannot be included in a gadget since it would lead to a contradiction:
    \begin{itemize}
    \item in gadgets of type $g_0^j$, $v$ would contain $0^{2k}$,
    \item in gadgets of type $g_a^j=x_{<a}\bar x_a 1^j$ (for
      $a\in]0,\gamma k]$), either $v$ goes into $x_{<a}$ by at least $k$ and
      $v[0..k-1]$ would appear at two positions in $x$, or $v$ goes into $\bar
      x_a 1^j$ by at least $k+2$ and $v$ would contain $1^{k+1}$;
    \end{itemize}
  \item If $v$ is included in a junction block of type 1, then $v$ starts at
    position $i$ in the left regular block, otherwise either $v[0..k-1]$ would
    be in the left regular block at a position different from $i$, or
    $v[k+1..2k]$ would be in the right regular block at a position
    $\leq \gamma k<i$;
  \item $v$ cannot be included in a junction block of type 2: indeed, by the
    red parsing, the left part of the junction (included in a gadget) is
    either $10^j$ or $a1^j$ for some letter $a\in\{0,1\}$, thus $v$ cannot go
    on the left by $\geq k+2$ and hence has to go on the right by at least $k$
    leading to a contradiction with Property~\pstar;
  \item If $v$ is included in a junction block of type 3, then $v$ starts at
    position $i$ in the left regular block, otherwise either $v[0..k-1]$ would
    be in the left (regular) block at a position different from $i$, or, by the
    red parsing, the gadget is of type $g_a^j$ (for $a>0$) and $v[k+1..2k]$
    would be included in $x_{<a}$ at a position $\leq \gamma k<i$.
  \end{itemize}
 Thus, at least
  $|u| - 2k$ red blocks end at position $i-1$.
  
  By Lemma \ref{lem:junctions}, at most $\frac{2^{k+1}}{i-2k-1}$ of them are
  junctions of type 1. Note furthermore that, as shown during the proof of
  Lemma~\ref{lemma:fewviolations}, if $a\geq k+1$, the $a$-th junction of type
  2 stops at position $0$ or $1$, hence at most $k$ of the blocks ending at
  position $i-1$ are junctions of type 2.
  Finally, there is, by definition, no junction of type 3. Therefore, there
  are at least $|u|-3k-\frac{2^{k+1}}{i-2k-1}$ offset blocks ending at
  position $i-1$. We call $M$ the set of such blocks. See
  Figure~\ref{fig:antichains}. Remark that $|u|-3k-\frac{2^{k+1}}{i-2k-1}$ is
  a lower bound on the number of offset blocks ending at position $i-1$. But
  the number of such blocks is at most $i$. Therefore
  $$|u|-3k-\frac{2^{k+1}}{i-2k-1} \leq i$$
  We distinguish two cases in the proof:

  \textbf{First case:} $i\in [\gamma k +1, 2\sqrt{s}]$.  
  Then
  $$|u|-3k-\frac{2^{k+1}}{i-2k-1} \leq i \leq 2\sqrt{s}$$
so that
  $$|u| \leq 2\sqrt{s} + 3k + \frac{2^{k+1}}{i-2k-1} \leq 2\sqrt{s} + 5k + \frac{2^{k+1}}{i-2k-1}.$$

\textbf{Second case:} $i > 2\sqrt{s}$.

All the words in $M$ are in the dictionary and are of different size,
since two offset blocks ending at the same position and of same size
would be identical, which is not possible in the LZ-parsing. The words
of $\p(M)$ (the set of prefixes of the words in $M$) are also in the
dictionary. Let
  $$A = \frac{|u| - 5k - \frac{2^{k+1}}{i-2k-1}}{2}.$$

  Observe that $i-A-2k \geq \frac{i}{2} - k$ as
  $|u|-3k-\frac{2^{k+1}}{i-2k-1} \leq i$. Therefore
  $i-A-2k \geq \sqrt{s} - k$, which is large against $\gamma
  k$.
  Consider the words of $\p(M)$ containing $x[i-A-2k..i-A]$: they
  must start at a position $\leq i-A-2k$ and end at a position
  $\in [i - A, i-1]$. The number of such words is at least the
  product of the number of blocks in $M$ starting at position
  $\leq i-A - 2k$ and of the number of possible ending points, that
  is, at least
  $$\biggl(\bigl(|u| - 3k - \frac{2^{k+1}}{i-2k-1}\bigr) - \bigl(A + 2k\bigr)\biggr)A.$$
  Remark that these words contain a part of a regular green block of
  size at least $2k+1$ starting at position $i-A-2k > \gamma k$.
  Hence, by the same case analysis as before, for these words, the
  part corresponding to the factor $x[i-A-2k..i-A-k-1]$ must appear
  included in a regular green block, so that two such words cannot
  appear in the same regular green block by Property~\pstar.  But
  there are at most $s$ distinct regular green blocks, thus:
  $$\left(|u| - 5k - \frac{2^{k+1}}{i-2k-1} - A\right)A \leq s.$$
  The value of $A$ gives:
  \begin{align*}
    \left(\frac{|u| - 5k - \frac{2^{k+1}}{i-2k-1}}{2}\right)^2 &\leq s\\
    |u| &\leq 2\sqrt{s} + 5k + \frac{2^{k+1}}{i-2k-1}.
  \end{align*}
\end{proof}

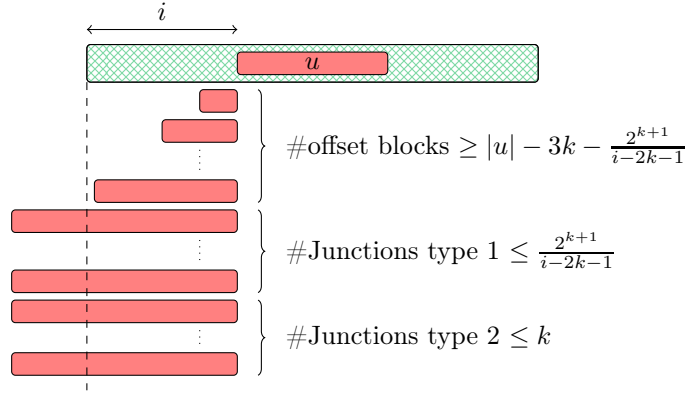
\begin{figure}[h]
  \centering
  \begin{tikzpicture}[scale=1, every node/.style={scale=1}]

  \draw [<->] (0,0.7) --node[above] {$i$} (2,0.7);

  \draw [pattern=north east lines,pattern color = green!70!blue!50,rounded corners=0.5mm] (0,0) rectangle (6,0.5);
  \draw [pattern=north west lines,pattern color = green!70!blue!50,rounded corners=0.5mm] (0,0) rectangle (6,0.5);

  \draw [fill=red!50,rounded corners=0.5mm] (2,0.1) rectangle (4,0.4) node[pos=.5] {$u$};

  \draw [fill=red!50,rounded corners=0.5mm] (1.5,-0.4) rectangle (2,-0.1); 
  \draw [fill=red!50,rounded corners=0.5mm] (1,-0.8) rectangle (2,-0.5); 

  \draw [dotted] (1.5,-0.9) -- (1.5,-1.2);

  \draw [fill=red!50,rounded corners=0.5mm] (0.1,-1.6) rectangle (2,-1.3);

  \draw [fill=red!50,rounded corners=0.5mm] (-1,-1.7) rectangle
  (2,-2);

  \draw [dotted] (1.5,-2.1) -- (1.5,-2.4);

  \draw [fill=red!50,rounded corners=0.5mm] (-1,-2.5) rectangle
  (2,-2.8);

  \draw [fill=red!50,rounded corners=0.5mm] (-1,-3.2) rectangle
  (2,-2.9);

  \draw [dotted] (1.5,-3.3) -- (1.5,-3.5);

  \draw [fill=red!50,rounded corners=0.5mm] (-1,-3.6) rectangle
  (2,-3.9);

  \draw [dashed] (0,0) -- (0,-4.1);

  \draw[decoration={brace,raise=5pt},decorate]  (2.1,-0.1) --
    node[right=12pt] {\#offset blocks $\geq |u|-3k - \frac{2^{k+1}}{i-2k-1}$} (2.1,-1.6);

  \draw[decoration={brace,raise=5pt},decorate]  (2.1,-1.7) --
    node[right=12pt] {\#Junctions type 1 $\leq \frac{2^{k+1}}{i-2k-1}$} (2.1,-2.8);

 \draw[decoration={brace,raise=5pt},decorate]  (2.1,-2.9) --
    node[right=12pt] {\#Junctions type 2 $\leq k$} (2.1,-3.9);

  \end{tikzpicture}
  \caption{Blocks ending at position $i-1$ for the proof of
    Proposition~\ref{prop:smallblocks}.}
  \label{fig:antichains}
\end{figure}

We are ready for the proof of the main theorem of this section.
\begin{proof}[Proof of Theorem~\ref{thm:toy}]

  The intuition is the following: by Proposition~\ref{prop:smallblocks}, the
  red blocks starting at position $j$, for $j = \Omega(\sqrt{s})$, are of size
  $\Theta(\sqrt{s}) = \Theta(|w|^{1/4})$, so if we prove that a portion of
  size $\Theta(|w|)$ of the word $0w$ is covered by offset-$j$ blocks for $j$
  large enough, then the compression speed will be $\Omega(|w|^{3/4})$. To
  that purpose, we prove that for large enough regular green blocks, there is
  an interval of positions $[2s/3-l,2s/3]$ (with $l = 2\sqrt{s} + 5k + 3$),
  such that there is at least one offset-$i$ block for $i\in [2s/3-l,2s/3]$.

  In every regular green block of size larger than $2s/3$, let us show that
  there is an offset-$i$ red block, for $i \in [2s/3-l,2s/3]$. Indeed, for
  every $i < 2s/3 - l$, the maximal size $f(i)$ of a red block starting at
  position $i$ satisfies $i + f(i) \leq 2s/3$: in the case where $i> \gamma k$
  we use the bound given by Proposition~\ref{prop:smallblocks}, and in the
  case where $i\leq\gamma k$, the $(t-\gamma k)$ predecessors of size
  $\geq \gamma k+1$ of a red block of size $t$ starting at position $i$ start
  at position $i$ as well (since $x_{\leq\gamma k}$ is not a factor of a
  gadget, it cannot be seen anywhere by a red block except at position $i$ in
  a regular green block), hence the size of an offset-$i$ block in that case
  is at most $\gamma k$ plus the number of $i$-violations. Therefore by
  Lemma~\ref{lemma:fewviolations} red blocks starting at position $i$ have
  their size upper bounded by $\gamma k+s/2 +1 + (1 + \gamma)k$.

  Therefore, since a red block starting at position $i \geq 2s/3-l$ is of size
  at most $B=2\sqrt{s}+5k+\frac{2^{k+1}}{2s/3-l-2k-1}$ by
  Proposition~\ref{prop:smallblocks}, each green block of size $h \geq 2s/3$
  is covered by at least
  $$\frac{h-2s/3}{B} \geq \frac{h-2s/3}{2\sqrt{s} + O(k)}$$
  red blocks. Thus, the total number of red blocks is at least
  $$\frac{1}{2\sqrt{s} + O(k)}\sum_{h=2s/3}^s (h-2s/3) =
  \frac{1}{36}s^{3/2} + o(s^{3/2}).$$
  With the gadgets, the size of $w$ is at most $(5/8)s^2 + o(s^2)$,
  therefore the total number of red blocks is at least:
  $$\frac{1}{36}\left(\frac{8}{5}\right)^{3/4} |w|^{3/4}-o(|w|^{3/4})\geq
  0.039|w|^{3/4}.$$
\end{proof}

\begin{remark}\label{rem:gadgets}
  Despite the fact that $1\pref(x)$ compresses optimally, this is not
  at all the case with the gadgets, since Theorem~\ref{thm:toy}
  remains valid with the new word $w$ output by the algorithm even
  when we put $1$ instead of $0$ in front of $w$.
\end{remark}

\section{General case}
\label{sec:general}

In this section we prove Theorem~\ref{thm:lowerbound}. The proof first goes
through the existence of a family $F$ of ``independent'' de Bruijn-style words
which will play a role similar to the de Bruijn word $x$ in the proof of
Theorem~\ref{thm:toy}. The existence of this family is shown using the
probabilistic method in Section~\ref{sec:family-de-bruijn}: with high
probability, a family of random words satisfies a relaxed version (P1) of the
``local'' Property~\pstar, together with a global property (P2) that forbids
repetitions of large factors throughout the whole family.

The word $w$ that we will consider is the concatenation of ``chains'' roughly
equal to $\pref(x)$ for all words $x\in F$, with gadgets inserted if necessary
as in Section~\ref{sec:toy}. (The construction is actually slightly more
complicated because in each chain we must avoid the first few prefixes of $x$
in order to synchronise the parsing of $w$; and the gadgets are also more
complex.) Properties~(P1) and~(P2) guarantee that each of the chains of $w$
are ``independent'', so that the same kind of argument as in
Section~\ref{sec:toy} will apply individually. By choosing appropriately the
number of chains and their length, we can obtain any compression speed for $w$
up to $\Theta(n/\log^2n)$ and the matching bound for $0w$ (see
Theorem~\ref{thm:lowerbound}).

The organisation of the section is as follows:
Section~\ref{sec:family-de-bruijn} is devoted to the proof of existence of the
required family of words. Section~\ref{sec:construction} defines the gadgets,
describes the construction of $w$ thanks to an algorithm, and gives the upper bound
on the compression speed of $w$. Finally, Section~\ref{sec:proofs} shows the
lower bound on the compression speed of $0w$ thanks to a series of results in
the spirit of Section~\ref{sec:weak}.

Throughout the present section, we use parameters with some relations between
them that are worth being stated once and for all in
Figure~\ref{fig:parameters} for reference.

\begin{figure}[h]
  \centering
  \boxed{\def\arraystretch{1.6}
    \begin{array}{l}
      n \text{ sufficiently large (the size of $w$)}\\
      \gamma \geq 10 \text{ (an absolute constant)}\\
      l\in[(9\gamma)^2\log^2n, \sqrt{n}] \text{ (size of the $x^j$)}\\
      p = \log\frac{n}{l^2} \text{ ($2^p$ is the number of chains)}\\
      k = \frac{\log l}{2} \text{ (parameter in (P1))}\\
      m =\max(\gamma p, \gamma\log l) \text{ (parameter in (P2))}.
    \end{array}}
  \caption{Parameters used throughout Section~\ref{sec:general}.}
  \label{fig:parameters}
\end{figure}

In particular, note that we have the following relations:
$$0\leq p\leq\sqrt{l}\quad\text{and}\quad \frac{\gamma\log n}{3}\leq m\leq\frac{\sqrt{l}}{9}.$$

\subsection{Family of de Bruijn-type words}
\label{sec:family-de-bruijn}

We need two properties for a family $F$ of $2^p$ words $x^1,\dots,x^{2^p}$ of
size $l$ (the parameters $n$, $l$, $p$, $k$, $\gamma$ and $m$ are those given
in Figure~\ref{fig:parameters}): the first is a relaxed version of
Property~\pstar on ``true'' de Bruijn words; the second guarantees that the
words of $F$ are ``independent''.
\begin{itemize}
\item (P1) For all $x\in F$, for all words $u$ of size $\leq k$,
  $$\occ_x(u)\leq\frac{kl}{2^{|u|}}.$$
\item (P2) Any factor $u$ of size $m$
  appears in at most one word of the family $F$, and within that word at
  only one position.
\end{itemize}
Note that in Section~\ref{sec:toy}, we did not need (P2) since only one word
was concerned, but still (P2) was true for the same value $k$ as in~\pstar,
instead of $m$ here.

The following lemmas show that (P1) and (P2) hold with high probability
for a random family $F$. We first recall the well-known Chernoff bound.

\begin{theorem}[Chernoff bound]\label{thm:chernoff}
  Let $X_1,\dots,X_n$ be independent random variables over $\{0,1\}$, and
  $X=\sum X_i$. Denote by $\mu$ the expectation of $X$. Let $\delta>1$. Then:
  $$\pr(X>\delta\mu)
  <2^{-\frac{(\delta-1)\mu\log\delta}{2}}.$$
\end{theorem}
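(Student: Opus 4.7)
The plan is to follow the classical moment generating function (Chernoff) approach. For any $t>0$, Markov's inequality applied to $e^{tX}$ gives $\pr(X>\delta\mu)\leq E[e^{tX}]/e^{t\delta\mu}$. Independence of the $X_i$ turns the expectation into a product, and for each Bernoulli $X_i$ with $\pr(X_i=1)=p_i$ one uses $E[e^{tX_i}] = 1 + p_i(e^t-1) \leq \exp(p_i(e^t-1))$. Multiplying and using $\sum p_i = \mu$ yields $E[e^{tX}] \leq \exp(\mu(e^t-1))$, hence
$$\pr(X>\delta\mu) \leq \exp\bigl(\mu(e^t - 1 - t\delta)\bigr).$$

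Optimizing in $t$ by choosing $t=\ln\delta$ (which is positive because $\delta>1$) yields the standard form
$$\pr(X>\delta\mu) \leq \left(\frac{e^{\delta-1}}{\delta^\delta}\right)^{\!\mu}.$$
To get the shape claimed in the statement, I would take $\log_2$ of both sides and compare. After dividing out $\mu$, the required inequality reduces to showing that for $\delta>1$,
$$\delta\ln\delta - (\delta-1) \geq \frac{(\delta-1)\ln\delta}{2},$$
equivalently $\ln\delta \geq 2(\delta-1)/(\delta+1)$. This is a standard elementary inequality (sometimes called Napier's inequality): it is an equality at $\delta=1$, and differentiating both sides shows the left derivative dominates for $\delta\geq 1$, so the inequality propagates upward.

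The main obstacle, if any, is purely the bookkeeping in the last step: translating the textbook bound $(e^{\delta-1}/\delta^\delta)^\mu$ into the slightly weaker form $2^{-(\delta-1)\mu\log\delta/2}$ that the authors prefer (presumably because it has a cleaner product structure when used inside union bounds in Section~\ref{sec:family-de-bruijn}). Everything else — the Markov step, the MGF bound for Bernoulli variables, and the choice $t=\ln\delta$ — is routine. Strict inequality in the conclusion follows from the fact that the Napier-style inequality is strict for $\delta>1$, giving a strict gap between the two exponents.
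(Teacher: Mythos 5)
The paper states this Chernoff bound without proof (``We first recall the well-known Chernoff bound''), so there is no in-paper argument to compare against. Your derivation is a correct, self-contained proof: the Markov/MGF step and the optimization $t=\ln\delta$ give the textbook bound $(e^{\delta-1}/\delta^\delta)^\mu$, and the reduction to $\ln\delta \geq 2(\delta-1)/(\delta+1)$ is algebraically right; that last inequality holds with equality at $\delta=1$ and is strict for $\delta>1$ since the derivative comparison reduces to $(\delta-1)^2\geq 0$, which gives the strict inequality claimed in the theorem (the degenerate case $\mu=0$ holds trivially since the left side is then $0<1$). So the proposal correctly fills in a proof the authors chose to omit.
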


For (P1), we need to consider positions separated by a distance $k$ in order
to obtain the independence required for the Chernoff bound; then a union bound
will complete the argument for the other positions.
\begin{lemma}[(P1) holds whp]
  Let $p$ and $l$ be positive integers such that $p\leq\sqrt{l}$. Let $F$ be
  a family of $2^p$ words $x^1,\dots,x^{2^p}$ of size $l$ chosen uniformly and
  independently at random. Then $F$ satisfies Property~(P1) with probability
  $2^{-\Omega(\sqrt{l}\log\log l)}$.
\end{lemma}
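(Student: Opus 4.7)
The plan is to fix $x\in F$ and a pattern $u$ of size $j\leq k$, bound $\pr(\occ_x(u)>kl/2^j)$ very tightly, and then conclude by a union bound over $x$ and $u$. Since the indicator variables for $u$ occurring at two close positions of $x$ are correlated, I would first partition the $l-j+1$ candidate starting positions into $k$ residue classes modulo $k$. Within a single class any two positions are at distance $\geq k\geq j$, so the corresponding indicators are independent Bernoullis of parameter $2^{-j}$, and the expected number of occurrences in class $g$ is $\mu_g\leq l/(k\cdot 2^j)$.

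If $\occ_x(u)>kl/2^j$, pigeonhole forces at least one class to contribute more than $l/2^j = k\mu_g$ occurrences. Applying Theorem~\ref{thm:chernoff} with $\delta=k$ inside that class gives
$$\pr(X_g>k\mu_g)<2^{-(k-1)\mu_g\log k/2}.$$
The bound is weakest when $|u|=k$, where $\mu_g\geq l/(k\cdot 2^k)=\sqrt{l}/k$ (using $2^k=\sqrt{l}$); the exponent then becomes of order $\sqrt{l}\log k=\Theta(\sqrt{l}\log\log l)$, and for smaller $|u|$ the bound is strictly better since $\mu_g$ is larger. Hence for each fixed $(x,u)$ the failure probability is at most $2^{-\Omega(\sqrt{l}\log\log l)}$.

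I would then union-bound over the $k$ residue classes, the at most $2^{k+1}\leq 2\sqrt{l}$ choices of $u$ with $|u|\leq k$, and the $2^p\leq 2^{\sqrt{l}}$ words $x\in F$. This introduces a factor of $2^{O(\sqrt{l})}$, which is dominated by $2^{-\Omega(\sqrt{l}\log\log l)}$ since $\log\log l$ tends to infinity with $l$, yielding the claimed bound. The delicate point is precisely this competition: the family can be exponentially large in $\sqrt{l}$, so the extra $\log\log l$ factor produced by Chernoff at the hardest scale $|u|=k$ is exactly what lets the argument survive the union bound over $F$. A naive Markov-type estimate, without first breaking the dependence between overlapping positions via residue classes, would fail to produce this gap.
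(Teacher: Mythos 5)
Your proposal is correct and follows essentially the same route as the paper: partition positions into $k$ residue classes modulo $k$ to make the indicator variables within each class jointly independent, apply the Chernoff bound with $\delta=k$ in each class, and then union bound over the $k$ classes, the at most $2^{k+1}$ patterns $u$, and the $2^p$ words of $F$, observing that the extra $\log\log l$ factor in the exponent coming from $\log\delta=\log k$ absorbs the $2^p\leq 2^{\sqrt{l}}$ factor from the family. The paper states the pigeonhole step implicitly (by folding the modulus into the union bound), but the content is identical.
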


\begin{proof}
  Fix $x\in F$ and a word $u$ of size $\leq k$. For $i\in[0,k-1]$ and
  $j\in[0,l/k-1]$, let
  $$X_j^i=\left\{
    \begin{array}{ll}
      1 & \text{if $u$ occurs at position $i+jk$ in $x$}\\
      0 & \text{otherwise.}
    \end{array}\right.$$
  For a fixed $i$, the $X_j^i$ are independent. Let $\mu_i=E(\sum_j X_j^i)$. We have:
  $$\mu_i=\frac{l}{k2^{|u|}}.$$
  By the Chernoff bound (Theorem~\ref{thm:chernoff}):
  $$\pr\biggl(\sum_j X_j^i > k\mu_i\biggr) < 2^{-\frac{(k-1)(\log k)\mu_i}{2}},$$
  that is,
  $$\log\pr\biggl(\sum_j X_j^i > \frac{l}{2^{|u|}}\biggr) < -(k-1)(\log k)\frac{l}{k2^{k+1}}.$$
  By union bound over all the words $u$ of size at most $k$, all the words of
  $F$ and all the moduli $i\in[0,k-1]$, we have:
  \begin{align*}
  \log\pr\biggl(\sum_{i,j} X_j^i > \frac{kl}{2^{|u|}}\biggr) & < (k+1)+p+\log k-(k-1)(\log
  k)\frac{l}{k2^{k+1}}\\
    & =-\Omega\bigl(\sqrt{l}\log\log l\bigr).
  \end{align*}
\end{proof}

The analysis for (P2) does not use Chernoff bounds, but instead it uses a
slight ``independence'' on the occurrences of a factor $u$ obtained by showing
that $u$ can be supposed ``self-avoiding'' (the precise meaning of these ideas
will be clear in the proof).

\begin{lemma}[(P2) holds whp]\label{lem:p2whp}
  Let $p$ and $l$ be positive. Recall that $m=\max(\gamma p,\gamma\log l)$ in
  (P2). Let $F$ be a family of $2^p$ words $x^1,\dots,x^{2^p}$ of size $l$
  chosen uniformly and independently at random. Then $F$ satisfies
  Property~(P2) with probability a least
  $1-2/l$.
\end{lemma}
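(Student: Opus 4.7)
I prove (P2) via the first moment method. Call a \emph{collision} an unordered pair $\{(x,i),(x',i')\}$ of distinct occurrences of a common length-$m$ factor: $x,x' \in F$, $i$ (resp.\ $i'$) is a starting position of a length-$m$ window in $x$ (resp.\ $x'$), $(x,i) \neq (x',i')$, and $x[i..i+m-1] = x'[i'..i'+m-1]$. Property~(P2) fails iff at least one collision exists, so by Markov's inequality it suffices to show that the expected number of collisions is at most $2/l$.

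Split the collisions into three classes: (i)~$x \neq x'$; (ii)~$x=x'$ with $|i-i'| \geq m$; (iii)~$x=x'$ with $d := |i-i'| < m$. In cases (i) and (ii) the two length-$m$ windows lie in disjoint uniform independent bits, so each such collision materialises with probability exactly $2^{-m}$. Case~(iii) is the only delicate step and the place where I expect the main difficulty: equality is equivalent to $x$ having period $d$ on an interval of length $m+d$ bits, and sampling those bits from left to right (the first $d$ being free, each subsequent bit constrained to equal the bit $d$ positions earlier) shows that these are $m$ independent fair-coin events. The collision probability is therefore again exactly $2^{-m}$.

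A union bound completes the argument. Case~(i) contributes at most $\binom{2^p}{2}(l-m+1)^2 \leq 2^{2p-1} l^2$ candidate pairs, and cases~(ii) and~(iii) together contribute at most $2^p\binom{l-m+1}{2} \leq 2^{p-1} l^2$, so the expected number of collisions is at most $2^{2p} l^2 \cdot 2^{-m}$. Using $m \geq \gamma p$ and $m \geq \gamma\log l$ from Figure~\ref{fig:parameters}, with $\gamma \geq 10$, one has
\[
2^m \;\geq\; 2^{(\gamma/2)(p+\log l)} \;\geq\; 2^{5p}\, l^{5},
\]
so the expected number of collisions is at most $2^{-3p}\, l^{-3} \leq l^{-3} \leq 2/l$ for every $l\geq 1$ and $p \geq 0$. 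Markov's inequality then gives $\pr[\text{some collision}] \leq 2/l$, hence (P2) holds with probability at least $1-2/l$, as claimed.
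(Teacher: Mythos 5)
Your proof is correct and follows a genuinely different route from the paper's. The paper first isolates ``bad'' length-$m$ words (those overlapping themselves by at least $m/2$), bounds their fraction by $2^{-m/2+1}$, and conditions on the event that no $x^j$ contains a bad factor; for the remaining ``good'' words $u$ it estimates $\pr(X_u\geq 2)\leq (l2^p)^2\,2^{-3m/2}$, since two occurrences of a good word can share fewer than $m/2$ bits. You avoid the good/bad split altogether through the observation that, for two windows of the same word at shift $d<m$, the matching event imposes exactly $m$ constraints on $m$ distinct positions (each forcing a bit to equal the one revealed $d$ places earlier), so the collision probability is \emph{exactly} $2^{-m}$ regardless of overlap. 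This turns the whole proof into a single first-moment computation, gives a sharper numeric bound ($l^{-3}$), and is also more self-contained: the paper's final inequality tacitly invokes $p=\log(n/l^2)$ from Figure~\ref{fig:parameters} to split into the cases $l\leq n^{1/3}$ and $l\geq n^{1/3}$, whereas your argument only needs $m\geq(\gamma/2)(p+\log l)$ and $\gamma\geq 10$, i.e.\ the lemma's hypotheses as literally stated.
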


\begin{proof}
  Let us first show that we can assume with high probability that factors of
  $w_i$ are not overlapping too much. We say that a word $u$ of size
  $m$ is ``bad'' if it overlaps itself at least by half, that is:
  $$\exists i\in [|u|/2, |u|-1]: u[0..i-1]=u[|u|-i..|u|-1]\text{
    (we say that $u$ is $i$-bad)}.$$
  (Remark that a word $u$ can be both $i$-bad and $j$-bad for $i\neq j$.) Let
  us first bound the number of bad words. If $u$ is $i$-bad, then for each
  $j<i$, $u_{j+|u|-i}=u_j$. Therefore, specifying the $|u|-i$ first bits
  specifies the whole word $u$, meaning that there are at most $2^{|u|-i}$
  $i$-bad words. In total, there are at most
  $$\sum_{i=|u|/2}^{|u|-1}2^{|u|-i}=2^{1+|u|/2}-2$$
  bad words, that is, a fraction $<2^{-m/2+1}$ of all words of size $m$.

  Now, we say that a word $x^j$ of size $l$ is ``good'' if it contains no bad
  factor. Let us show that, with high probability, all the words $x^j\in F$
  are good (Property~(G)). Fix $j\in [1,2^p]$. If $x^j$ is not good, then
  there is at least one position where a bad factor $u$ occurs:
  $$\pr(x^j \text{ is not good})\leq |x^j|\pr_{|u|=m}(u\text{ is
    bad})\leq l2^{-m/2+1}.$$
  We use the union bound over all $2^p$ words $x^j\in F$ to obtain:
  $$\pr(G)\geq 1-l2^{p-m/2+1}.$$

  Since property $G$ has very high probability, we will only show that
  (P2) holds with high probability when $G$ is satisfied. Let
  $x = x^1\dots x^{2^p}$ (the size of $x$ is therefore $l2^p$). Let
  $u$ be a word of size $m$, which is not bad. Let $X_u$ be the number
  of occurences of $u$ in $x$. In order to get at least two
  occurrences of $u$, we have to choose two positions, the $|u|$ bits
  of the first occurrence, and the bits of the second occurrence that
  are not contained in the first; but $u$ can't overlap itself by more
  than $m/2$ bits, thus:
  $$\pr(X_u \geq 2) \leq \frac{|w|}{2^{m}}
  \frac{|w|}{2^{m/2}} \leq l^22^{2p-\frac{3}{2}m}.$$
  Using the union bound over all good words $u$ of size $m$, of which
  there are at most $2^{m}$, we get:
  $$\pr(\forall\ \mathrm{good}\ u, X_u \leq 1) \geq
  1-2^{m}l^22^{2p-\frac{3}{2}m} = 1-l^2 2^{2p-m/2}.$$

  Now, the probability that $F$ respects Property (P2) can be lower bounded by
  the probability that $F$ contains no bad words, and that the number of
  occurences of good words is at most $1$, which gives:
  \begin{align*}
  \pr(F \text{ satisfies (P2)}) &\geq \pr(G\land\forall\ \mathrm{good}\ u,
  X_u \leq 1)\\
    &\geq 1-l^22^{2p-m/2} - l2^{p-m/2+1}\\
    &>1-\frac{2}{l}\text{ since }\gamma\geq 10
  \end{align*}
  (for the last line, consider the two following cases: $p\geq\log l$ where
  $m=\gamma p$ and $l\leq n^{1/3}$; and $p\leq\log l$ where $m=\gamma\log l$
  and $l\geq n^{1/3}$).
\end{proof}

\begin{corollary}\label{cor:family}
  For all sufficiently large $l$ and $p\leq\sqrt{l}$, there exists a
  family $F$ of $2^p$ words $x^1,\dots,x^{2^p}$ of size $l$ satisfying
  Properties~(P1) and~(P2), and where the first bit of $x^1$ is $1$.
\end{corollary}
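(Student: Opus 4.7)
The plan is a direct application of the probabilistic method, combining the two preceding lemmas via a union bound and handling the first-bit constraint as a third independent (up to lower-order terms) event.

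First I would sample the family $F$ of $2^p$ words $x^1,\dots,x^{2^p}$ of size $l$ uniformly and independently at random. By the preceding lemma on (P1), the probability that (P1) fails is at most $2^{-\Omega(\sqrt{l}\log\log l)}$, and by Lemma~\ref{lem:p2whp} the probability that (P2) fails is at most $2/l$. Thus, by the union bound,
$$\pr\bigl((P1)\text{ and }(P2)\text{ both hold}\bigr) \;\geq\; 1 - \frac{2}{l} - 2^{-\Omega(\sqrt{l}\log\log l)}.$$
For $l$ sufficiently large, this lower bound is strictly greater than $1/2$.

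Next I would incorporate the constraint on the first bit of $x^1$. The event $\{x^1_0=1\}$ has probability exactly $1/2$, so again by the union bound,
$$\pr\bigl((P1),\,(P2),\text{ and }x^1_0=1\text{ all hold}\bigr) \;\geq\; \frac{1}{2} - \frac{2}{l} - 2^{-\Omega(\sqrt{l}\log\log l)} \;>\; 0$$
for all sufficiently large $l$. Hence there exists at least one realisation of $F$ simultaneously satisfying (P1), (P2), and $x^1_0=1$, which is exactly the family claimed by the corollary.

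There is no real obstacle here: the two lemmas already do all the work, and the only thing to verify is that their failure probabilities sum to something strictly less than $1/2$. An equivalent but slightly cleaner route, if preferred, is to condition on $x^1_0=1$ from the start (which amounts to fixing that one bit and sampling the remaining $l\cdot 2^p-1$ bits uniformly); the proofs of the two lemmas go through verbatim under this conditioning since only one bit is fixed and all factor-occurrence bounds change by at most a constant factor, which is absorbed in the $o(1)$ failure probabilities.
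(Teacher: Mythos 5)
Your proof is correct and matches the paper's (implicit) argument: the corollary is stated immediately after the two lemmas and is meant to follow by exactly this union-bound application of the probabilistic method, with the additional event $\{x^1_0=1\}$ handled just as you do (its probability $1/2$ still leaves a strictly positive margin over the $2/l + 2^{-\Omega(\sqrt{l}\log\log l)}$ failure probability for large $l$). You also correctly read the paper's Lemma on (P1) as bounding the \emph{failure} probability by $2^{-\Omega(\sqrt{l}\log\log l)}$ despite its slightly misleading phrasing.
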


\subsection{Construction}
\label{sec:construction}

(Recall the choice of parameters $n$, $l$, $p$, $k$, $\gamma$ and $m$ defined in
Figure~\ref{fig:parameters}.)

For $n$ sufficiently large and $l\in[(9\gamma)^2\log^2n,\sqrt{n}]$, we
will construct a word $w$ of size $n$ whose compression speed is $\Theta(n/l)$
whereas the compression speed of $0w$ is $\Theta(n/\sqrt{l})$ (thus matching
the upper bound of Theorem~\ref{thm:upperbound}). Let $F$ be a family as in
Corollary~\ref{cor:family}. For some integers $q_j$ (defined below), the word
$w$ will merely be the concatenation of $\pref_{>q_j}(x^j)$ (see
Remark~\ref{rem:extremal} for the definition of $\pref_{>q}(x)$) for all the
$2^p$ words $x^j$ of the family $F$, with possibly some gadgets added between
the prefixes of $x^j$ (each $\pref_{>q_j}(x^j)$ together with the possible
gadgets will be denoted $z^j$ and called a ``chain''), and a trailing set of
zeroes so as to ``pad'' the length to exactly $n$. The integer $q_j$ will be
chosen so that the first occurrence of $x^j[0..q_j]$ is parsed in exactly one
green block.

Each chain $z^j$ (with gadgets) is of size $\Theta(l^2)$ and is fully
compressible in $w$ (compressed size $\Theta(l)$) since it is made of prefixes
(plus gadgets that won't impede much the compression ratio). Thus the total
compression size of $w$ is $\Theta(l2^p)$, compared to $|w|=n=\Theta(l^22^p)$
for a compression speed of $\Theta(n/l)$.

On the other hand, due to the properties of $F$ and similarly to
Theorem~\ref{thm:toy}, in $0w$ each chain will compress only to a size
$\Theta(l^{3/2})$, thus the total compression size of $0w$ is
$\Theta(l^{3/2}2^p)$, for a compression speed of $\Theta(n/\sqrt{l})$.

\begin{remark}
  \begin{itemize}
  \item If we take the smallest possible $l$, that is,
    $l=(9\gamma)^2\log^2n$, then we obtain compression speeds of
    $\Theta(n/\log^2n)$ and $\Theta(n/\log n)$, thus showing the one-bit
    catastrophe.
  \item On the other hand, if we take the largest possible $l$, that is,
    $l=\sqrt{n}$, then we obtain $\Theta(\sqrt{n})$ and $\Theta(n^{3/4})$ as in
    Theorem~\ref{thm:toy}.
  \end{itemize}
\end{remark}

Let us now start the formal description of the word $w$. As previously, we
will call green the blocks in the parsing of $w$ and red those in the parsing
of $0w$. The green blocks in each chain $z^j$ that are not gadgets will be
called ``regular blocks'' (they are of the form $x^j[0..q]$ for some $q$).
Recall that the chain $z^j$ will be of the form $\pref_{>q_j}(x^j)$ with
possibly some gadgets between the prefixes. We can already define the integers
$q_j$:
$$q_j=\min\{i\geq 0:x^j[0..i]\text{ is not a prefix of }x^1,\dots,x^{j-1}\}.$$
In that way, we guarantee that the first green block in each $z^j$ is exactly
$x^j[0..q_j]$. Remark that, by Property~(P2), $q_j\in[0,m]$. For all $j$ we
will denote by $s_j=|x^j|-q_j=l-q_j$ the number of regular green blocks in
$z^j$.

Fix $n$ and $l=l(n)\in[(9\gamma)^2\log^2n,\sqrt{n}]$, and let $k=(\log l)/2$
and $p=\log(n/l^2)$. As in Property~(P2), call
$m=\max(\gamma p,\gamma\log l)$. Here are the new gadgets that will (possibly)
be inserted in the chain $z^j$ ($j\in[1,2^p]$), for $i\in[0,2k\sqrt{l}]$:
\begin{itemize}
\item for $c\geq 0$: $g_0^c(j) = ua^c$, where $a=x^j[0]$ is the first letter
  of $x^j$, and $u$ is the smallest word in
  $$\dic(0z^1\dots z^{j-1}\pref_{>q_j}(x^j[0..|x^j|/2]))$$ but not in
  $$\dic(z^1\dots z^{j-1}\pref_{>q_j}(x^j)):$$ this is a word which is in the parsing
  of $0w$ up to the insertion of $g_0^0(j)$ but not in the corresponding
  parsing of $w$ (Lemma~\ref{lem:existence-word} below guarantees the
  existence of such a word and proves it is of size $\leq m$);
\item for $i>0$ and $c\geq 0$, let $m'=\max(i,m)$ and $v=x^j[0..m-1]1^l$. Then:
  $$g_i^c(j) = x^j[0..m'-1]\bar x^j_{m'}v[0..c-1]$$
  where $\bar x^j_{m'}$ denotes the complement of $x^j[m']$.
\end{itemize}

We define $i$-violations in each chain $z^j$ as previously, that is, a regular
green block is $i$-violated if it contains an offset-$i$ red block. The
following lemma is proved in the exact same way as
Lemma~\ref{lemma:violations}.
\begin{lemma}\label{lemma:violations2}
  For $j\in[1,2^p]$ and $i\in[0,s_j-1]$, let $l_i^j$ be the number of
  $i$-violated blocks in $z^j$. Then for
  all $j$ and all $i\neq i'$, $l_i^j+l_{i'}^j\leq s_j$.

  In particular, for each $z^j$ there can be at most one $i$ such that the
  number of $i$-violated blocks is $>s_j/2$.
\end{lemma}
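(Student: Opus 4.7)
The plan is to transcribe the proof of Lemma~\ref{lemma:violations} inside a single chain $z^j$. Fix $j\in[1,2^p]$ and, without loss of generality, $0\leq i<i'\leq s_j-1$. The regular green blocks of $z^j$ are the prefixes $x^j[0..q_j],\dots,x^j[0..l-1]$, of strictly increasing sizes $q_j+1,q_j+2,\dots,l$, so there are exactly $s_j=l-q_j$ of them.

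I would partition these $s_j$ blocks according to whether their size is at most $i'$ (too small to contain within-block position $i'$) or strictly larger than $i'$. Let $a$ be the number of $i$-violated blocks among the small ones; among the large ones, let $b$, $c$, $d$ respectively count the blocks that are both $i$- and $i'$-violated, only $i$-violated, and only $i'$-violated. Then $l_i^j=a+b+c$ and $l_{i'}^j=b+d$, exactly as in the original proof.

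The two inequalities to reproduce are $b+c+d\leq(\text{\# large blocks})$ and $a+b\leq i'-i$. The first is immediate. For the second, I would invoke the distinctness of the offset-$i$ red blocks: they are pairwise distinct LZ-blocks and are simultaneously prefixes of the single word $x^j[i..l-1]$, hence have pairwise distinct sizes; moreover every offset-$i$ block counted by $a$ (living inside a green block of size $\leq i'$) or by $b$ (forced to terminate before the coexisting offset-$i'$ block, which starts at within-block position $i'$) has size at most $i'-i$. There are only $i'-i$ available distinct sizes in $[1,i'-i]$, so $a+b\leq i'-i$. Adding the two bounds gives
\[
l_i^j+l_{i'}^j = (a+b)+(b+c+d)\leq (i'-i)+(\text{\# large blocks})\leq s_j,
\]
in exact parallel with the $(s-i')+(i'-i)\leq s$ computation of the original lemma. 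The ``in particular'' clause then follows immediately: if two distinct offsets each had strictly more than $s_j/2$ violations, their sum would exceed $s_j$, a contradiction.

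The only nonmechanical step is the bookkeeping on $(\text{\# large blocks})$: unlike in Lemma~\ref{lemma:violations}, the regular blocks of $z^j$ do not start at size $1$ but at size $q_j+1$, so I would treat the case $i'\leq q_j$ (every regular block is large) separately from $i'>q_j$ (exactly $l-i'$ are large). Checking that the restriction $i,i'\in[0,s_j-1]$, combined with the identity $s_j=l-q_j$, forces $(i'-i)+(\text{\# large blocks})\leq s_j$ in either case is the one place where I expect to have to do a careful arithmetic verification; once that is dispatched, the rest of the argument is a word-for-word copy of the original.
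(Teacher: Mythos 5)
You are right that the paper proves this lemma simply by pointing to Lemma~\ref{lemma:violations} and declaring it ``proved in the exact same way'', and your two target inequalities $b+c+d\leq(\text{number of large blocks})$ and $a+b\leq i'-i$ are the correct ones. But the arithmetic you defer does not in fact go through. When $i'>q_j$, the regular blocks of size $>i'$ are those of sizes $i'+1,\dots,l$, so there are $l-i'$ of them (not $s_j-i'$), and your decomposition gives $l_i^j+l_{i'}^j\leq(i'-i)+(l-i')=l-i$; this is $\leq s_j=l-q_j$ only when $i\geq q_j$, which the hypothesis $i\in[0,s_j-1]$ does not guarantee (and $q_j$ can be as large as $m$). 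When $i'\leq q_j$, all $s_j$ blocks are large, so the bound becomes $s_j+(i'-i)>s_j$. The original Lemma~\ref{lemma:violations} closes because there the small blocks (size $\leq i'$) number exactly $i'$, hence the large ones number $s-i'$, and adding $i'-i$ cancels $i'$ perfectly; here the size range is truncated at $q_j+1$, which removes $q_j$ small blocks but no large ones, and the cancellation leaves a residual $q_j-i$ whenever $i<q_j$.

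So the ``careful arithmetic verification'' you expected to dispatch is exactly where the word-for-word transcription breaks. Closing the gap needs an extra idea for offsets $i<q_j$ (for instance, that an offset-$i$ block of length $\leq q_j-i$ is a prefix of $x^j[i..q_j-1]$, which by definition of $q_j$ is shared with an earlier chain, limiting how many such prefixes can reappear as fresh red blocks in $z^j$), or else restricting the statement to $i,i'\geq q_j$, or settling for the weaker bound $l_i^j+l_{i'}^j\leq l$ and propagating the $O(m)$ slack through the algorithm's $>s_j/2$ threshold. The paper's terse ``exact same way'' glosses over this; your fuller plan makes the gap visible, and as written the proposal stops short of the stated inequality.
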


The formal construction of the word $w$ is once again best described by an
algorithm taking as parameters $n$ and $l$:
\begin{enumerate}
\item For all $j\in[1, 2^p]$ and $i\in[q_j+1,l]$, $z_i^j\leftarrow x^j[q_j..i-1]$.
  Throughout the algorithm, $z^j$ will denote $z_{q_j+1}^j\dots z_l^j$ (and thus
  will vary if one of the $z_i^j$ varies).
\item For $j=1$ to $2^p$ do:
  \begin{enumerate}
  \item if there is $i\in[0,2k\sqrt{l}]$ (unique by Lemma~\ref{lemma:violations2}) such
    that the number of $i$-violations in the chain $z^j$ is $>s_j/2$, then:
    \begin{enumerate}
    \item let $c=0$ (counter for the number of inserted gadgets in $z^j$) and
      $d=s_j/2+1$ (counter for the place of the gadget to be inserted),
    \item while the number of $i$-violations in the chain $z^j$ is $\geq d$, do:
      \begin{enumerate}
      \item let $r$ be such that $z^j_r$ is the $d$-th $i$-violated green
        block in $z^j$,
      \item $z^j_r\leftarrow g_i^{c}(j)z^j_r$ (we add the gadget $g_i^c(j)$
        before the block $x^j[0..r-1]$),
      \item $c\leftarrow c+1$,
      \item if $z^j_r$ is still $i$-violated, then $d\leftarrow d+1$.
      \end{enumerate}
    \end{enumerate}
  \end{enumerate}
\item Let $w'=z^1.z^2\dots z^{2^p}$. Return $w=w'0^{n-|w'|}$ (padding to
  obtain $|w|=n$).
\end{enumerate}

Remark that we have the following bounds on the size of $w'$. Its size is
minimal if no gadgets are added:
$$|w'|\geq 2^p\biggl(\sum_{i=m}^l i\biggr) =
\frac{n}{l^2}\cdot\frac{(l-m+1)(l+m)}{2}\geq \frac{n}{2} - o(n)$$
and its size is maximal if each chain contains $l/2$ gadgets $g_i^c$
(whose size is at most $2k\sqrt{l}+1+c$):
\begin{align*}
  |w'|&\leq 2^p\biggl(\sum_{i=1}^li+\sum_{c=0}^{l/2-1}(2k\sqrt{l}+1+c)\biggr)\\
      &\leq\frac{n}{l^2}\biggl(\frac{5l^2}{8}+2kl^{3/2}\biggr)\leq n.
\end{align*}
Therefore at the end of the algorithm it is legitimate to
pad $w'$ with at most $n/2+o(n)$ zeroes to obtain the word $w$ of size precisely
$n$.

The following lemma justifies the existence of the gadgets $g_0^c(j)$.
\begin{lemma}\label{lem:existence-word}
  There is a constant $C>0$ such that, for all $n$, for all
  $l\in[(9\gamma)^2\log^2 n,\sqrt{n}]$ with $l>C$, for all $j\in[1,2^p]$
  (where $p=\log(n/l^2)$) there exists a word $u$ of size $\leq m$ in
  $$\dic(0z^1\dots z^{j-1}\pref_{>q_j}(x^j[0..|x^j|/2]))$$ but not in
  $\dic(z^1\dots z^{j-1}\pref_{>q_j}(x^j))$.

  This implies that we can insert the gadgets $g_0^c(j)$ in a chain $z^j$
  whenever we need to.

\end{lemma}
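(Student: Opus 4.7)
The plan is to prove the lemma by a counting argument. Let $A = \dic(0z^1\dots z^{j-1}\pref_{>q_j}(x^j[0..|x^j|/2]))$ and $B = \dic(z^1\dots z^{j-1}\pref_{>q_j}(x^j))$, and write $X_{\leq m} := X \cap \{0,1\}^{\leq m}$. The first observation is that it suffices to show $|A_{\leq m}| > |B_{\leq m}|$: any $u \in A_{\leq m}\setminus B_{\leq m}$ actually lies in $A\setminus B$ (since a word of $B$ of size $\leq m$ necessarily belongs to $B_{\leq m}$) and is of size $\leq m$. The whole argument is by induction on $j$, so that the chains $z^1,\ldots,z^{j-1}$ are already well-defined using earlier instances of the lemma to produce the gadgets $g_0^c(i)$ for $i<j$.

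For the upper bound on $|B_{\leq m}|$, I would split the green blocks into regular blocks (prefixes $x^i[0..q]$ with $q \geq q_i$) and gadget-related blocks. The minimality of each $q_i$ guarantees that distinct chains contribute distinct regular blocks; those of size $\leq m$ contribute at most $\sum_{i\leq j}\max(0,m-q_i) \leq 2^p m$. For gadget-related blocks, each chain has at most $l/2$ gadgets of individual size at most $2k\sqrt{l}+1+c \leq l$, and each is LZ-parsed into $O(\sqrt{l})$ blocks. A careful bookkeeping then yields $|B_{\leq m}| = O(2^p m\sqrt{l})$.

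For the lower bound on $|A_{\leq m}|$, the idea is to exploit the shift caused by the leading $0$: the red parsing cannot attain the optimal compression of the green one, so it must produce many more, and smaller, blocks. Mimicking the quantitative analysis of Section~\ref{sec:weak} applied chain by chain (invoking Properties~(P1) and~(P2) of $F$ to bound factor repetitions within and across chains), each complete chain $z^i$ contributes $\Omega(l^{3/2})$ red blocks, nearly all of size $O(\sqrt{l})\leq m$; together with the half-chain $\pref_{>q_j}(x^j[0..|x^j|/2])$ this gives $|A_{\leq m}| = \Omega(jl^{3/2})$. Combined with the upper bound above and the parameter relations $m\leq\gamma\sqrt{l}$ and $2^p=n/l^2$ from Figure~\ref{fig:parameters}, the ratio $|A_{\leq m}|/|B_{\leq m}| = \Omega(l/m) = \Omega(\sqrt{l}/\gamma)$ exceeds $1$ for $l$ beyond a suitable absolute constant $C$, concluding the proof.

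The main obstacle is establishing the lower bound $|A_{\leq m}|=\Omega(jl^{3/2})$, which is exactly the quantitative content of the weak-catastrophe analysis and which forces us to thread the induction carefully: $A$ and $B$ already depend on $z^1,\ldots,z^{j-1}$, which in turn rely on earlier $u_i$'s, so one has to run both the construction and the counting argument in lockstep. The counting of gadget-derived blocks contributing to $B_{\leq m}$ is a secondary but fiddly point, and the argument needs the absolute constant $C$ precisely to absorb the transient behaviour for very small $l$.
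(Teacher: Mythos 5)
Your counting strategy has a fatal quantitative error. The $\Omega(l^{3/2})$ red blocks per chain, which come from the weak-catastrophe analysis, have size $\Theta(\sqrt{l})$: indeed Proposition~\ref{proposition:size-blocks} gives $f(i)\approx 2\sqrt{l}$ for the relevant offsets, and the lower bound precisely counts blocks of that size covering a constant fraction of each chain. But from Figure~\ref{fig:parameters} one has $m\leq\sqrt{l}/9$, so the inequality you wrote, ``$O(\sqrt{l})\leq m$'', is reversed: the vast majority of those $\Omega(l^{3/2})$ red blocks have size \emph{strictly greater} than $m$ and therefore contribute nothing to $|A_{\leq m}|$. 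In fact $|A_{\leq m}|$ is trivially at most $2^{m+1}-1$, which for most parameter choices is far below your claimed $\Omega(jl^{3/2})$; the two are simply incomparable and your ratio computation collapses.

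The paper proceeds entirely differently and constructively, not by counting. For $j=1$ the word $0$ is the first red block but every regular green block of $z^1$ begins with $1$ (Corollary~\ref{cor:family}), so $u=0$ works. For $j>1$ the paper applies the lower-bound machinery to $z^{j-1}$ to exhibit one offset position $i_0$ at which at least $\sqrt{l}/9\geq m$ distinct red offset-$i_0$ blocks arise; these form a branch in the parsing tree of the red parsing, so some red block has size exactly $m$ and starts at a nonzero position inside a regular green block. Property~(P2) then guarantees this length-$m$ factor of $x^{j-1}$ never occurs at position $0$ of any $x^i$, so it is not a regular green block, and a short inspection rules out the gadget blocks as well. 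This single explicit witness is what the lemma needs; a global inequality $|A_{\leq m}|>|B_{\leq m}|$ is neither needed nor established by your bounds.
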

\begin{proof}
  For $j=1$: the first red block in $z^1$ is $0$, but all the
  regular green blocks in $z^1$ begin with $1$ (cf.
  Corollary~\ref{cor:family}). Therefore there exists a word $u$ of size
  $1$ in $$\dic(0\pref_{>q_1}(x^1[0..|x^1|/2]))$$ not in
  $\dic(\pref_{>q_1}(x^1))$.

  For $j>1$: as we shall see in the proof of Theorem~\ref{thm:lowerbound}
  below\footnote{That is not a circular argument because we only need the
    result up to $z^{j-1}$ to claim the existence of gadgets for $z^j$.}, in
  $z^{j-1}$ there is an interval of $A=3\sqrt{l}$ positions that contains
  the starting position of a red block in at least $l/3$ regular green blocks.
  One of the
  positions of the interval is the starting point of a branch of at least
  $$\frac{l}{3A}= \frac{\sqrt{l}}{9}\geq m$$ red
  blocks. Thus there is a red block of size $m$ which appears nowhere in
  $z^1,\dots,z^{j-1}$ by (P2) and is not a prefix of $z^j$, thus it is a word
  of $\dic(0z^1\dots z^{j-1})$ not in
  $\dic(z^1\dots z^{j-1}\pref_{>q_j}(x^j))$.
\end{proof}

We can now show that $w$ has compression speed $O(|w|/l)$ by giving an upper
bound on the size of the dictionary of $w$.
\begin{lemma}\label{lem:general-upper-bound}
  The compression speed $|\dic(w)|$ of $w$ is at most
  $$\frac{3+\sqrt{3}}{2}\cdot\frac{|w|}{l}.$$
\end{lemma}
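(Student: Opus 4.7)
The plan is to split $\dic(w)$ into three contributions---regular blocks inside the chains, gadget blocks, and blocks in the zero-padding---and bound each. First, in each chain $z^j$, the definition of $q_j$ guarantees that $x^j[0..q_j-1]$ is in the dictionary when $z^j$ begins whereas $x^j[0..q_j]$ is not; hence the regular part of $z^j$ parses as the successive prefixes $x^j[0..q_j], x^j[0..q_j+1], \ldots, x^j[0..l-1]$, contributing $s_j = l-q_j \leq l$ blocks. For the gadgets, the sequence $(g_i^c(j))_{c\geq 0}$ is nested (each is a prefix of the next), so once $g_i^0(j)$ is parsed an easy induction shows every subsequent $g_i^c(j)$ adds exactly one new green block. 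The base case $g_i^0(j)$ for $i>0$ reduces to one block since $x^j[0..m'-1]$ already lies in the dictionary (gadgets are inserted only past position $s_j/2 \gg m'$) and Property~(P2) prevents $x^j[0..m'-1]\bar x^j_{m'}$ from having appeared before; the case $i=0$ costs at most $m=O(\log n)$ extra blocks per chain, which is lower order. In total the chains contribute at most $\sum_j (s_j + c_j) \leq 2^p l + T$ blocks, where $T = \sum_j c_j \leq 2^p l/2 = n/(2l)$ because the algorithm inserts at most $s_j/2$ gadgets in $z^j$.

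Second, the trailing padding $0^N$ (with $N = n-|w'|$) is parsed into at most $\sqrt{2N}$ blocks: the green blocks produced on a monochromatic run of zeros are pairwise distinct, hence have distinct positive lengths, and therefore at most $K \leq \sqrt{2N}$ of them fit. To bound $N$ from above I lower-bound $|w'|$: the regular part of $z^j$ has size $\sum_{i=q_j+1}^l i \geq s_j^2/2$, and the gadgets of $z^j$ have sizes at least $1,2,\ldots,c_j$, so $|z^j| \geq s_j^2/2 + c_j^2/2$. Summing and applying Cauchy--Schwarz to $\sum_j c_j^2 \geq T^2/2^p$ yields $|w'| \geq n/2 + T^2 l^2/(2n) - o(n)$, whence $N \leq n/2 - T^2 l^2/(2n) + o(n)$.

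Putting everything together and writing $T = x\,n/l$ with $x \in [0,1/2]$, using $\sqrt{n} \leq n/l$ (which holds because $l \leq \sqrt{n}$), we obtain
$$|\dic(w)| \leq \frac{n}{l}\bigl(1 + x + \sqrt{1-x^2}\bigr)\cdot(1+o(1)).$$
The function $x \mapsto 1+x+\sqrt{1-x^2}$ is increasing on $[0,1/\sqrt{2}]$, so on $[0,1/2]$ it attains its maximum at $x = 1/2$, giving the value $(3+\sqrt{3})/2$ and the desired bound.

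The main obstacle is the first step: establishing rigorously that gadgets contribute only one green block each. This demands a careful case analysis of how every gadget interacts with the current green dictionary (in particular the delicate seed $u$ of the $i=0$ gadgets), together with repeated use of Property~(P2) to forbid accidental reappearances of gadget patterns across chains. Once that parsing structure is secured, the remainder of the argument is pure counting combined with one instance of Cauchy--Schwarz and a one-variable optimisation.
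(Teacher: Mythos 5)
Your proof is correct and takes essentially the same decomposition as the paper's: regular blocks, gadget blocks, and padding blocks, with the key preliminary observation that each gadget costs exactly one green block (justified via the definitions of $q_j$, of $u = g_0^0(j)$, and Property~(P2)). The difference is in how the worst case is identified. The paper simply declares that ``the compression speed cannot be worse than in the (hypothetical) case where these two paths begin at depth 0, for all $j$,'' takes $c_j = l/2$ in every chain, computes $|z^j| \geq 5l^2/8$, deduces $N \leq 3n/8$ and $K \leq \sqrt{3n}/2 \leq (\sqrt{3}/2)\,n/l$. You instead parametrize the total gadget count as $T = x\,n/l$ with $x \in [0,1/2]$, lower-bound $|w'|$ by Cauchy--Schwarz ($\sum c_j^2 \geq T^2/2^p$), obtain $K \leq (n/l)\sqrt{1-x^2}(1+o(1))$, and maximize $1 + x + \sqrt{1-x^2}$ over $[0,1/2]$. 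This is a more explicit and arguably cleaner justification of the paper's unproven ``cannot be worse than'' claim, and the optimization lands exactly on the paper's hypothetical worst case $x = 1/2$. The price you pay is a $(1+o(1))$ factor (from $\sum_j s_j^2/2 = n/2 - O(nm/l)$ rather than $n/2$ exactly); if one wants the clean constant of the lemma one should absorb that correction as the paper implicitly does by noting $q_j \leq m \leq \sqrt{l}/9$, which makes $q_j(q_j+1) \leq 3l/2$ and hence the per-chain lower bound $|z^j| \geq 5l^2/8$ exact. Both arguments share the same slight informality about whether the identified extremal configuration is truly extremal; yours at least makes the one-parameter optimization explicit.
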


\begin{proof}
  The definition of the integers $q_j$ guarantees that the parsing
  resynchronizes at each beginning of a new chain $z^j$.

  In a chain $z^j$, the definition of $g_0^0(j)$ guarantees that this gadget,
  if present, will be parsed in exactly one green block, and after that the
  subsequent gadgets $g_0^c(j)$ also.

  Similarly, (P2) together with the fact that gadgets are only inserted in the
  second half of a chain (thus, after more than $m$ green blocks) imply that
  the possible gadgets $g_i^c(j)$ for $i>0$ are also parsed in exactly one
  green block.

  For each chain $z^j$, the parsing tree consists in a main path of size $l$
  (regular green blocks) together with another path of size $\leq l/2$
  corresponding to the gadgets $g_i^c(j)$. The compression speed cannot be
  worse than in the (hypothetical) case where these two paths begins at depth
  $0$, for all $j$. In that case, there are $\leq (3/2)l$ green blocks for
  each chain, and a size
  $$|z^j|\geq \frac{l(l+1)}{2}+\frac{\frac{l}{2}(1+\frac{l}{2})}{2}\geq
  \frac{5}{8}l^2.$$
  Since the number of chains is $2^p=n/l^2$, in that (hypothetical) worst
  case the number of green blocks in $w'$ is at most $3n/2l$ and
  $|w'|\geq 5n/8$. The $\leq 3n/8$ trailing zeroes of $w$ are parsed in at most
  $\sqrt{3n}/2\leq (\sqrt{3}/2)n/l$ green blocks. Hence the compression speed of
  $w$ is at most $(3/2+\sqrt{3}/2)(n/l)$.
\end{proof}

\subsection{Proof of the main theorem}
\label{sec:proofs}

(Recall the choice of parameters $n$, $l$, $p$, $k$, $\gamma$ and $m$ defined in
Figure~\ref{fig:parameters}.)

We now prove the lower bound of Theorem~\ref{thm:lowerbound}. Recall that
$z^j$ denote the $j$-th chain of $w$. We will write $w_{i}^j$ the $i$-th
regular block of the chain $z^j$. As in the previous section, we will
distinguish junctions over two consecutive regular blocks (type~1); junctions
starting in a gadget and ending in a regular block (type~2); and junctions
starting in a regular block and ending in a gadget (type~3).

The next proposition is the core of the argument, and
Theorem~\ref{thm:lowerbound} will follow easily. The proposition is a
corollary of lemmas that we will show afterwards.

\begin{proposition}
  \label{proposition:size-blocks}
  Let $f(i)$ be the maximal size of an offset-$i$ (red) block included in a
  regular green block.
  \begin{itemize}
  \item If $i \leq 2k\sqrt{l}$ then $f(i) \leq \frac{l}{2} + 4k\sqrt{l} + 2m+ 1$.
  \item Otherwise, $f(i) \leq 2\sqrt{l} + 3k+ 7m + \frac{2kl}{i - 4m - 2}$.
  \end{itemize}
\end{proposition}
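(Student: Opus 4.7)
The plan is to mirror the structure of Proposition~\ref{prop:smallblocks} from Section~\ref{sec:toy}, adapted to the multi-chain structure of $w$ and to the relaxed Properties~(P1)--(P2) of $F$ in place of the exact Property~\pstar. Fix an offset-$i$ red block $u$ lying inside a regular green block of some chain $z^j$. Every prefix of $u$ belongs to $\dic(0w)$, so each such prefix arises as a red block appearing somewhere earlier in the parsing; the whole argument consists in counting the possible locations of these prefixes.

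The common first step is a localization lemma: every prefix of $u$ of size at least some threshold $\tau = \Theta(m)$ must be an offset-$i$ red block inside a regular green block. I would carry out a case analysis ruling out the alternatives: inclusion in a gadget, inclusion in a regular block at a position $\neq i$, and each type of junction not split at the expected place. Occurrences inside a regular block at the wrong offset are ruled out by~(P1), which limits the number of times a factor of size $\leq k$ can appear in any $x^j$. Inclusion in a gadget is ruled out by the explicit shape of $g_0^c(j)=ua^c$ or $g_i^c(j)=x^j[0..m'-1]\bar{x}^j_{m'}v[0..c-1]$: a prefix of size $\gtrsim m$ inside such a gadget would either force a factor of $x^j$ to appear at two different positions, or contain a long monochromatic run that does not appear in $x^j$. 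This localization gives at least $|u|-\tau$ predecessors of $u$ starting at position $i$ in a regular green block.

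For Case 1 ($i \leq 2k\sqrt{l}$), Property~(P2) further confines all predecessors of size $\geq m$ to a single chain, and by the chain-by-chain order of the algorithm that chain must be $z^j$ itself. Each such predecessor is witnessed by an $i$-violation of a regular green block of $z^j$. I would then prove the analog of Lemma~\ref{lemma:fewviolations}: since $i\leq 2k\sqrt{l}$ lies in the range where the algorithm may insert gadgets, the total number of $i$-violations in $z^j$ is at most $s_j/2 + O(k\sqrt{l}+m)$, the error term counting the first few gadgets that may fail to kill the targeted position before the parsing stabilises. Combined with the $\tau$ small prefixes that escape control, this yields $|u| \leq l/2 + 4k\sqrt{l}+2m+1$.

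For Case 2 ($i > 2k\sqrt{l}$), I would follow the strategy of Proposition~\ref{prop:smallblocks} and classify red blocks ending at position $i-1$: at least $|u|-\tau$ such blocks exist, type~3 junctions are excluded by definition (they end inside a gadget rather than in a regular block), type~2 junctions number at most $O(m)$ by the same gadget-parsing argument as in Lemma~\ref{lemma:fewviolations}, and type~1 junctions require an analog of Lemma~\ref{lem:junctions} in which~(P1) replaces~\pstar, giving a bound of order $2kl/(i-O(m))$. The remaining blocks are offset-$i$ blocks in regular green blocks, and an anti-chain argument as in Proposition~\ref{prop:smallblocks} --- using~(P1) inside a chain and~(P2) across chains to force prefixes containing a specific $k$-factor to lie in distinct green blocks --- yields the $2\sqrt{l}$ term. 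The main obstacle I anticipate is the bookkeeping in the localization step: because~(P1) gives only a $kl/2^{|u|}$ bound rather than the exact $2^{k-|u|}$ of~\pstar, the threshold $\tau$ must be pushed from $\Theta(k)$ (as in the toy case) up to $\Theta(m)$, which is what accounts for the $2m$ term in Case~1 and the $4m+2$ term inside the denominator in Case~2.
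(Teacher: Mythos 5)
Your proposal follows essentially the same decomposition as the paper: for small $i$ you bound the number of $i$-violations (the paper's Lemma~\ref{lemma:number-violations}) and then translate a violation count into a block-size bound via the localization step (the paper's Lemma~\ref{lemma:violation-size}); for large $i$ you count the red blocks ending at $i-1$ by type, bound type-$1$ junctions via a (P1)-based analog of Lemma~\ref{lem:junctions} (the paper's Lemma~\ref{lemma:fewjunction1}), and finish with the anti-chain argument (the paper's Lemma~\ref{lemma:size-offset}). The main ideas and structure are correct.

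Two small points worth flagging. First, the localization threshold $\tau$ cannot be $\Theta(m)$ uniformly: in Case~1 the paper uses $\tau = 2k\sqrt{l}$, because the first part $x^j[0..b-1]$ of a gadget $g_b^c(j)$ has length up to $\max(b,m)\leq 2k\sqrt{l}$, so a predecessor of $u$ of length between $m$ and $2k\sqrt{l}$ cannot be excluded from gadgets by~(P2) alone. Your own arithmetic betrays this: with $\tau=\Theta(m)$ and the violation bound $s_j/2 + O(k\sqrt{l}+m)$, you would not recover the full $4k\sqrt{l}$ in the stated bound $l/2+4k\sqrt{l}+2m+1$; you need $\tau$ to contribute $2k\sqrt{l}$ as well (the paper's Lemma~\ref{lemma:violation-size} gives exactly $f(i)\leq b+2k\sqrt{l}$). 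In Case~2, $\tau=3m$ does suffice, as in the paper. Second, in the anti-chain step of Case~2 it is~(P2), not~(P1), that forces the relevant prefixes (each containing a fixed factor of length $\geq m$) to lie in distinct regular green blocks, both within and across chains; (P1) is used only in the type-$1$ junction count. These are bookkeeping corrections rather than structural gaps.
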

\begin{proof}
  The first point is a consequence of Lemmas~\ref{lemma:number-violations}
  and~\ref{lemma:violation-size}. The second point is exactly
  Lemma~\ref{lemma:size-offset}.
\end{proof}

With Proposition~\ref{proposition:size-blocks} in hand, let us prove the main
theorem.
\begin{proof}[Proof of Theorem~\ref{thm:lowerbound}]
  We will show that each chain $z^j$ in $0w$ is parsed in at least
  $\frac{1}{54} l^{3/2}$ blocks, thus
  $$|\dic(0w)|\geq 2^p \frac{1}{54} l^{3/2} =
  \frac{1}{54}\cdot\frac{|w|}{\sqrt{l}}.$$

  Fix an index $j$. In order to prove that the chain $z^j$ is parsed in at
  least $\frac{1}{54}l^{3/2}$ red blocks, we first prove that in every regular
  green block of size larger than $2l/3$ in the chain $z^j$, there is an
  interval of positions $[2l/3-A,2l/3]$ (with $A = 3\sqrt{l}$), such
  that there is at least one offset-$i$ (red) block for $i \in [2l/3-A,2l/3]$.
  Indeed, by Proposition~\ref{proposition:size-blocks}, for any $i<2l/3-A$,
  the maximal size $f(i)$ of a red block starting at position $i$ satisfies
  $i + f(i) \leq 2l/3$.

  Therefore, since the red blocks starting at position $i \geq 2l/3-A$ are of
  size at most $f(2l/3-A)\leq 3\sqrt{l}$, a regular green block of $z^j$ of
  size $h$ is covered by at least $(h-2l/3)/(3\sqrt{l})$ red blocks. Thus the
  number of red blocks in the parsing of $z^j$ is at least
  $$\sum_{h=2l/3}^l \frac{h-2l/3}{3\sqrt{l}} \geq
  \frac{1}{54}l^{3/2}.$$
\end{proof}

Now we prove Proposition~\ref{proposition:size-blocks} thanks to the next four
lemmas. The first two show that the gadgets do their job: indeed, for small
$i$ (the indices $i$ covered by the gadgets), the offset-$i$ blocks are not
too large. For that, we first bound the number of violations.
\begin{lemma}
\label{lemma:number-violations}
  For any $i\in [0,2k\sqrt{l}]$ and $j\in[1,2^p]$, the number of $i$-violations
  in the chain $z^j$ is at most $\frac{l}{2} + 2m + 1 + 2k\sqrt{l}$. 
\end{lemma}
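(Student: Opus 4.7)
The plan is to mimic Lemma~\ref{lemma:fewviolations} from the toy setting, adapting it to the more intricate gadgets $g_i^c(j)$ and the larger admissible range for~$i$. Fix $j$ and~$i\in[0,2k\sqrt{l}]$. If the algorithm never enters step~2(a)(ii) for this chain and this index, then by its triggering condition the number of $i$-violations in $z^j$ is already $\leq s_j/2\leq l/2$, and the bound is immediate. So assume the while loop runs, i.e.\ that $i$ is the (unique by Lemma~\ref{lemma:violations2}) index for which the initial number of $i$-violations exceeds $s_j/2$. The task then reduces to counting the number of \emph{failed} gadget insertions, i.e.\ iterations after which $z_r^j$ remains $i$-violated; by construction, any unfailed gadget decreases the $i$-violation count of the chain.

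Case $i=0$. The gadget is $g_0^c(j)=ua^c$ with $a=x^j[0]$ and $u$ the word supplied by Lemma~\ref{lem:existence-word}, which already lies in the red dictionary at the moment it is first inserted but not in the green one. The plan is an induction on~$c$: assuming $g_0^{c-1}(j)$ was added to the red dictionary at the previous insertion, the word $g_0^c(j)=g_0^{c-1}(j)\,a$ is available when $g_0^c(j)$ is inserted, so the red parsing consumes it in one block and starts the next block at the first letter of the following regular block; this absorbs one letter beyond position~$0$ and kills the $0$-violation there. Only a bounded number of the very first gadgets can fail (during the phase where $u$ itself is still being built into the dictionary), giving $O(1)\leq 2m+1+2k\sqrt{l}$ failures.

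Case $i>0$. Now $g_i^c(j)=x^j[0..m'-1]\,\bar x^j_{m'}\,v[0..c-1]$ with $m'=\max(i,m)$ and $v=x^j[0..m-1]\,1^l$. When the gadget is inserted before an $i$-violated block $w_r^j$, the red parsing reaches position~$i-1$ inside the gadget's prefix $x^j[0..m'-1]$ and splits the gadget between $x^j[0..i-1]$ and the rest. I would then argue that the tail $\bar x^j_{m'}\,v[0..c-1]$ is parsed in a single new red block as soon as~$c$ is large enough: after at most $O(m)$ insertions, the suffix $v[0..c-1]=x^j[0..m-1]\,1^{c-m}$ contains a factor of length~$m$ that, by Property~(P2), appears nowhere else in $z^1\dots z^{j-1}$ nor in the rest of the chain; the letter $\bar x^j_{m'}$ further guarantees via Property~(P1) that the block cannot be extended leftward into a previously seen word. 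From that point on, the gadget restarts the red parsing at position~$0$ of the following regular block, converting the $i$-violation into a $0$-violation. A failure can then occur only when the next regular block is simultaneously $0$-violated and $i$-violated; a direct overlap argument (a red block starting at position~$0$ can be extended at most $i$ times before covering position~$i$) bounds such double violations by~$i\leq 2k\sqrt{l}$. Summing the three sources, the total number of $i$-violations is at most $s_j/2+O(m)+i\leq l/2+2m+1+2k\sqrt{l}$.

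The main obstacle is the middle step of Case~$i>0$: proving that the gadget's tail really is parsed as one fresh red block after $O(m)$ insertions. This is where Properties~(P1) and~(P2) of the family~$F$ carry the weight, as they prevent both intra-chain repetition of the length-$m$ factor $x^j[0..m-1]$ after a $\bar x^j_{m'}$, and any inter-chain collision coming from the earlier chains $z^1,\dots,z^{j-1}$.
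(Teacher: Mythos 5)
Your proof follows the same overall strategy as the paper's — split on whether gadgets for index $i$ are actually inserted in chain $z^j$, and if so distinguish $i=0$ from $i>0$ and bound the number of failed insertions — but there is a real gap in the first branch.

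When you write that if the algorithm never enters step~2(a)(ii) for $z^j$ and $i$, then ``by its triggering condition the number of $i$-violations in $z^j$ is already $\leq s_j/2$,'' you are implicitly treating the $i$-violation count as frozen at the moment the check in step~2(a) is made, i.e.\ before any gadgets are inserted. But if the algorithm does enter the while loop for some other index $i_0\neq i$, the gadgets $g_{i_0}^c(j)$ change the red parsing of everything downstream of them, and new $i$-violations can be created. (The paper makes this point explicit in the toy case: ``Observe that violations for $i>0$ have been created.'') To close the branch you must apply Lemma~\ref{lemma:violations2} to the \emph{final} chain $z^j$ with gadgets: since the algorithm only inserts gadgets before the $(s_j/2+1)$-th, $(s_j/2+2)$-th, \dots{} $i_0$-violated blocks, the first $\lfloor s_j/2\rfloor$ $i_0$-violations are never touched, so $l_{i_0}^j\geq s_j/2$ persists, and then $l_i^j\leq s_j-l_{i_0}^j\leq s_j/2\leq l/2$ for any $i\neq i_0$. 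Without this step your bound for the ``untouched'' indices is not justified.

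Two smaller remarks. In the $i=0$ branch you allow ``a bounded number of the very first gadgets'' to fail ``during the phase where $u$ is still being built into the dictionary,'' but Lemma~\ref{lem:existence-word} is designed precisely so that $u$ is \emph{already} a red block when $g_0^0(j)$ is first inserted; the induction then shows no $g_0^c(j)$ ever fails, exactly as in Case~1 of Lemma~\ref{lemma:fewviolations}. (Your weaker claim does not hurt the bound, it is just a misreading of the role of Lemma~\ref{lem:existence-word}.) In the $i>0$ branch, the appeal to Property~(P1) to forbid ``leftward extension'' is neither needed nor quite the right mechanism: the paper's argument that the tail $x^j[i..m'-1]\bar x^j_{m'}x^j[0..m-1]1^{m+1}$ is parsed as a single fresh block relies only on the inductive non-splitting of earlier tails together with Property~(P2) applied to the length-$m$ factors $1^{m+1}$ and $x^j[0..m-1]$, which rule out any other occurrence in a regular block or in an earlier chain's gadgets.
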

\begin{proof}
  We fix $j$ and focus on the number of $i$-violations in the chain $z^j$.
  Recall that $s_j$ denotes the number of regular blocks in the chain $z^j$
  ($s_j\leq l$).

  If no gadgets have been added during the execution of the algorithm,
  then for all $i\in[0,2k\sqrt{l}]$, the number of $i$-violations is $\leq s_j/2
  \leq l/2$.

  Otherwise, we distinguish on the type of the most frequent violation
  ($i = 0$ or $i>0$).

  \begin{itemize}
  \item Case 1: the most frequent violations are $0$-violations. In that case,
    a proof similar to the case 1 of Lemma~\ref{lemma:fewviolations} (with $0$
    replaced by $a=x^j[0]$ the first letter of $x^j$) shows that the number of
    $i$-violations for any $i\in[0,2k\sqrt{l}]$ is $\leq s_j/2 \leq l/2$.
  \item Case 2: the most frequent violations are $i$-violations for some
    $i>0$. Let us see how the parsing of $0w$ splits the gadgets. As in the
    definition of the gadgets, let $m'=\max(i,m)$ and $v=x^j[0..m-1]1^l$. When
    the first gadget $g_i^0(j) = x^j[0..m'-1]\bar x^j_{m'}$ is added, the red
    parsing splits the gadget $g_i^0(j)$ between $x^j_{<i}$ and
    $x^j[i..m'-1]\bar x^j_{m'}$, because the gadget is added before a regular
    block with an $i$-violation. Furthermore, $x^j[i..m'-1]\bar x^j_{m'}$ is not
    split by the parsing, because at that moment in the algorithm, the number
    of $i$-violations in the previous regular green blocks is $s_j/2 \geq m'-i$,
    so that, as the position $i$ has been seen $\geq m'-i$ times, the word
    $x^j[i..m'-1]$ is already in the dictionary of $0w$. Similarly, the gadget
    $g_i^c(j) = x^j[0..m'-1]\bar x^j_{m'}v[0..c-1]$ is split by the red parsing
    between $x^j_{<i}$ and $x^j[i..m'-1]\bar x^j_{m'}v[0..c-1]$, with the additional
    property that this second part is not split by the parsing.

    But for the gadget $g^{2m+1}_i(j)$, the second part
    $x^j[i..m'-1]\bar x^j_{m'}x^j[0..m-1]1^{m+1}$ is parsed in exactly one
    block because this factor does not appear anywhere in a regular block
    because of $1^{m+1}$ (cf. (P2)) nor in a gadget of a preceding chain
    because of $x^j[0..m-1]$ (cf. (P2) again). From that moment on, each
    $i$-violation creates a $0$-violation. The number of green blocks that are
    both $0$-violated and $i$-violated is at most $i \leq 2k\sqrt{l}$. Thus,
    at most $2k\sqrt{l}$ more gadgets fail to kill the corresponding
    $i$-violation. The total number of ``failing'' gadgets in the chain $z^j$
    is at most $2m + 1 + 2k\sqrt{l}$.
  \end{itemize}
\end{proof}

If the number of $i$-violations is not too large, then the same is true for
the size of offset-$i$ blocks, as the following easy result states.
\begin{lemma}
\label{lemma:violation-size}
If the number of $i$-violations in the chain $z^j$ is $b$, then
any offset-$i$ block $u$ in a regular block of $z^j$ is of size at most
$b + 2k\sqrt{l}$.
\end{lemma}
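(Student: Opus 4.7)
The plan is to analyze where in $0w$ the LZ-predecessors of $u$ sit. Write $u=x^j[i..i+|u|-1]$. Since each LZ-block is obtained from an earlier one by appending a single letter, for every $t=1,\dots,|u|-1$ the prefix $x^j[i..i+t-1]$ is itself a red block, occupying a unique position in $0w$. I split these $|u|-1$ positions into those at offset $i$ inside a regular green block of $z^j$ (the ``inner'' positions, at most $b-1$ of them since $u$ itself accounts for one of the $b$ $i$-violations) and all the remaining ones (the ``outer'' positions). The claim reduces to bounding the outer positions by $2k\sqrt l$.

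To carry out this bound I would go through the possible locations of an outer predecessor case by case. Any predecessor of size $t\geq k$ starts with the $k$-factor $\sigma:=x^j[i..i+k-1]$; by Property~(P1) the factor $\sigma$ has at most $kl/2^k=k\sqrt l$ occurrences in $x^j$, which caps the number of outer positions lying inside regular green blocks of $z^j$. Inside regular green blocks of a distinct chain $z^{j'}$, Property~(P2) rules out any predecessor of size $\geq m$, because its initial $m$-factor would coincide with $x^j[i..i+m-1]$, which by~(P2) cannot appear in another word of the family; so this case contributes at most $m$ predecessors (those of size $<m$). Positions inside a gadget are eliminated for large $t$ because every gadget contains a unary run of length $\gg k$, whereas (P1) forbids $x^j$ from harbouring such runs. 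Positions straddling a block boundary are handled by the same kind of reasoning as junction blocks in Lemma~\ref{lem:junctions}. Finally, predecessors of size $<k$ contribute at most $k-1$ by distinctness of sizes. Summing, the total number of outer positions is of order $k\sqrt l+O(k+m)$, which is at most $2k\sqrt l$ thanks to the parameter relations in Figure~\ref{fig:parameters} (in particular $m\leq\sqrt l/9$).

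The step I expect to require the most care is the exhaustive case analysis for gadgets and junctions: the two gadget families $g_0^c(j')$ and $g_i^c(j')$ have intricate internal structure and widely varying sizes, and a matching prefix of $u$ could in principle straddle a gadget and its neighbours. However, because every gadget is built from prefixes of some $x^{j'}$ together with a long unary run, each subcase collapses quickly via~(P1) or~(P2), exactly in the spirit of Proposition~\ref{prop:smallblocks}. Combining all cases yields $|u|-1\leq (b-1)+2k\sqrt l$, hence $|u|\leq b+2k\sqrt l$.
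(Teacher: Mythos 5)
Your decomposition of the $|u|-1$ predecessors into ``inner'' (at offset $i$ in a regular green block of $z^j$) and ``outer'', with inner $\leq b-1$, is a valid reorganization of the argument. The paper uses the equivalent but cleaner formulation: every predecessor of size at least $2k\sqrt{l}+1$ is shown to be inner, which gives $|u|-2k\sqrt{l}\leq b$ immediately. So the structure is the same; the difference is how the $2k\sqrt{l}$ bound on the outer count is obtained, and that is where your argument breaks.

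Two of your case bounds rest on invalid invocations of (P1). First, for predecessors of size at least $k$ sitting inside a regular green block of $z^j$ at an offset $p\neq i$: you claim the $\leq k\sqrt{l}$ occurrences of $\sigma=x^j[i..i+k-1]$ in $x^j$ cap this count. They do not. A single position $p$ of $x^j$ lies inside many regular green blocks (all prefixes $x^j[0..r]$ with $r\geq p+t$), so several of your predecessors, of different lengths, could start at the same $p$ in different green blocks; (P1) only caps the number of distinct start positions, not the number of predecessors. What actually bounds this case is (P2), which you do use for foreign chains but not here: a predecessor of size $\geq m$ located inside $z^j$ must start at offset $i$ by (P2), so this case is confined to sizes in $[k,m)$ and contributes at most $m-k$. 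Second, for predecessors inside gadgets you write that ``every gadget contains a unary run of length $\gg k$, whereas (P1) forbids $x^j$ from harbouring such runs.'' Neither half is correct: (P1) only bounds occurrence counts of short factors and forbids nothing; it is (P2) that forbids $a^{m+1}$ in $x^j$ (two occurrences of $a^m$). Moreover $g_0^0(j')$ may have no run at all, and the non-unary part of $g_b^c(j')$ has length up to $m'+m+1$, which can exceed $2k\sqrt{l}$; a predecessor can sit there without touching any run. Ruling that out requires (P2) again (a prefix of $u$ of length $\geq m$ starting at offset $i>0$ of $x^j$ cannot occur at a position consistent with being inside $x^{j'}[0..m'-1]$ or $x^{j'}[0..m-1]$). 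Because the loose numerical slack you leave is large, the final inequality $|u|\leq b+2k\sqrt{l}$ still survives once these steps are repaired with (P2), but as written the argument for the outer bound does not go through.
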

\begin{proof}
  The $|u|-2k\sqrt{l}$ predecessors of $u$ of size at least $2k\sqrt{l} + 1$ cannot
  appear in gadgets, hence by Property~(P2) they must appear at
  position $i$ in the regular green blocks of the chain $z^j$. Therefore, each
  such predecessor contributes to an $i$-violation in $z^j$, so that
  $|u| - 2k\sqrt{l} \leq b$.
\end{proof}

Now, the next two results show that, for large $i$, the size of offset-$i$
blocks is small. First, we need to bound the number of junction blocks ending
at position $i-1$.
\begin{lemma} 
  \label{lemma:fewjunction1}
  Let $j$ be fixed and $i > 2k\sqrt{l}$. Let $uu'$ be a junction block of
  type $1$ between two regular green blocks $w_{a}^j$ and $w_{a+1}^j$, ending
  at position $i-1$ in $w_{a+1}^j$ (thus $|u'|=i$). Then
  $|u| \leq \log(kl) - \log(i - 4m - 2)$.

  In particular, the number of such blocks is upper bounded
  by $\frac{2kl}{i - 4m - 2}$.
\end{lemma}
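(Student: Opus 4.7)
The plan is to mirror the proof of Lemma~\ref{lem:junctions}, using Properties~(P1) and~(P2) in place of Property~\pstar. First I would let $v$ be the prefix of $u'$ of size $2m$; this is well-defined because $i > 2k\sqrt{l} \geq 2m$ for sufficiently large~$l$. Since $u'$ is a prefix of the regular block $w_{a+1}^j$, which is itself a prefix of~$x^j$, we have $v = x^j[0..2m-1]$, and in particular $v[0..m-1] = x^j[0..m-1]$ serves as a \emph{fingerprint} that by~(P2) occurs in $F$ only at position~$0$ in $x^j$. Let $M$ denote the set of prefixes of $uu'$ of size at least $|u|+2m$, so $|M|\geq i-2m+1$; each $y\in M$ must be a block in $\dic(0w)$.

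The core step is to classify, for each $y\in M$, the location in $0w$ where $y$ is parsed, and to show that either $y$ is a type-$1$ junction in chain $z^j$ whose split falls exactly at position $|u|$ (so the left side ends with $u$ and the right side starts with $v$), or $y$ belongs to a small ``exceptional'' set. Following Lemma~\ref{lem:junctions}: the prefix $y$ cannot lie entirely inside a regular green block, else the occurrence of $v$ at position $|u|\geq 1$ inside some $x^{j'}\in F$ would violate~(P2); it cannot lie entirely inside a gadget $g_0^c(j')$ or $g_i^c(j')$, which I would rule out by a sub-case analysis on how $v$ could fall among the gadget's components (a prefix of $x^{j'}$, a single flipped bit, and a tail of the form $a^c$ or a prefix of $x^{j'}[0..m-1]1^l$), using~(P2) and the impossibility of $v$ containing long monochromatic runs; and for any junction whose split point differs from $|u|$, the position of the fingerprint $x^j[0..m-1]$ inside $y$ contradicts~(P2). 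Type-$2$ and type-$3$ junctions contribute at most $O(m)$ extra exceptions because the left part lies in a gadget whose red parsing (as described in the proof of Lemma~\ref{lemma:number-violations}) only admits a bounded number of split positions compatible with~(P2). A careful accounting caps the total number of exceptional $y$'s by $2m+2$.

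Consequently, at least $i-4m-2$ prefixes of $M$ are type-$1$ junctions in $z^j$ with left side $u$; distinct $y$'s correspond to distinct regular-block boundaries, so $u$ occurs as a suffix of at least $i-4m-2$ distinct regular blocks of $z^j$, yielding $\occ_{x^j}(u)\geq i-4m-2$. This already forces $|u|\leq k$ (otherwise~(P1) applied to the size-$k$ prefix of $u$ would bound $\occ_{x^j}(u)$ by $kl/2^k = k\sqrt{l}$, incompatible with $i>2k\sqrt{l}$), so~(P1) itself yields
$$i-4m-2 \;\leq\; \occ_{x^j}(u) \;\leq\; \frac{kl}{2^{|u|}},$$
whence $|u|\leq \log(kl)-\log(i-4m-2)$. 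Since $u'$ is entirely determined by $i$ (it is $x^j[0..i-1]$), distinct type-$1$ junction blocks ending at position $i-1$ correspond to distinct left parts $u$; counting binary words of size at most $\log(kl)-\log(i-4m-2)$ gives at most $2kl/(i-4m-2)$ such blocks.

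The main obstacle is the gadget case analysis, which is noticeably heavier than in the toy case: the general gadgets $g_i^c(j')$ have three distinct pieces, and ruling out $v$ straddling any pair of them requires systematically combining the uniqueness statements of~(P2) with the occurrence bound of~(P1). A related delicate point is the $2m+2$ cap on exceptional $y$'s, which must be sharp enough to produce exactly the denominator $i-4m-2$ in the statement; handling the type-$2$ and type-$3$ junctions carefully, in the spirit of the analysis used for Lemma~\ref{lemma:number-violations}, will be the bookkeeping bottleneck.
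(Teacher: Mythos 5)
Your proposal mirrors the paper's own argument almost step for step: take a short prefix $v$ of $u'$ as a fingerprint, rule out (via (P1)/(P2)) every location for the predecessors of $uu'$ other than a junction splitting exactly at $u$, conclude that $u$ is a suffix of $\Omega(i)$ regular blocks of $z^j$, bootstrap $|u|\leq k$ so that (P1) applies directly to $u$, and finally count words of that length. The only real divergence is quantitative: the paper takes $|v|=4m+1$ precisely so that every case closes with exactly one exceptional prefix (a single type-2 junction), whereas you take $|v|=2m$ and absorb the slack into an asserted cap of $2m+2$ exceptions. That choice is tighter than it looks — for instance, when $y=uv$ lies inside a gadget $g_0^c(j')=u_{j'}a^c$ with $|u_{j'}|=m$, the part of $v$ landing in $a^c$ has length only $\geq m$, not $m+1$, and you need the extra observation that $u\neq\ew$ forces $v$ to start at offset $\geq 1$ in the gadget to recover the $a^{m+1}$ contradiction. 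The paper's choice of $4m+1$ avoids all such borderline bookkeeping; your version likely works but the $2m+2$ bound and the gadget sub-cases are the parts you would actually have to nail down, as you rightly flag.
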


\begin{proof}
  Let $v$ be the prefix of size $4m+1$ of $u'$ (which is also the prefix of
  $x^j$). We claim that all the prefixes of $uu'$ of size $\geq |uv|$ are
  junction blocks of type 1 or 3 only (except possibly for one of type 2),
  with only $u$ on the left side of the junction. Indeed, recalling the red
  parsing of gadgets explained in the proof of
  Lemma~\ref{lemma:number-violations} and Property~(P2), we distinguish the
  following cases:
  \begin{enumerate}
  \item $uv$ cannot be completely included in a regular block, otherwise
    $v[0..m-1]$ would appear both at positions $0$ and $p>0$ in $x^j$, which
    contradicts Property~(P2);
  \item $uv$ cannot be completely included in a gadget:
    \begin{itemize}
    \item if the gadget is $g_0^c(j)$, then $v$ would contain $a^{m+1}$ for
      some $a\in\{0,1\}$,
    \item if the gadget is $g_b^c(j)$ for $b>0$, let $m'=\max(b,m)$: the red
      parsing splits this gadget between $x^j[0..b-1]$ and
      $x^j[b..m'-1]\bar x_{m'}^jx^j[0..m-1]1^d$. Then $uv$ is not contained in
      the first part by (P2), nor in the second part since it cannot contain
      $1^{m+1}$;
    \end{itemize}
  \item if $uv$ is a type 1 junction but not split between $u$ and $v$, it is
    impossible because the three possible cases lead to a contradiction:
    \begin{itemize}
    \item if $u$ goes on the right, then $v$ would appear at another position
      $p>0$ in $x^j$,
    \item if $v$ goes on the left by at least $m$, then $v[0..m-1]$ would
      again appear at two different positions in $x^j$,
    \item if $v$ goes on the left by less than $m$, then it goes on the right
      by more than $m$ and $v[3m+1..4m]$ would again appear at two different
      positions in $x^j$;
    \end{itemize}
  \item if $uv$ is a type 2 junction but not split between $u$ and $v$, it is
    again impossible:
    \begin{itemize}
    \item if $u$ goes on the right, then $v$ would appear at another position
      $p>0$ in $x^j$,
    \item if $v$ goes on the left by at least $3m+1$, in case of $g_0^c(j)$
      then $v$ would contain $a^{2m+1}$ and in case of $g_b^c(j)$ then $v$
      would contain $1^{m+1}$ (recall where the red parsing splits this gadget),
    \item otherwise, $v$ goes on the right by at least $m+1$, and
      $v[3m+1..4m]$ would appear at two different positions $x^j$;
    \end{itemize}
  \item if $uv$ is a type 3 junction, first remark that the gadget is of the
    form $g_b^c(j)$ for $b>0$ (because, for gadgets of the form $g_0^c(j)$,
    the red parsing starts at position $0$ of the gadget). If $uv$ is not
    split between $u$ and $v$, it is once again impossible:
    \begin{itemize}
    \item if $u$ goes on the right, the red parsing of the gadget stops after
      $x^j_{<b}$ and $v$ would appear in $x^j$ at a non-zero position,
    \item similarly, if $v$ goes on the left by less than $m$, then
      $v[m..2m-1]$ would again appear at two different positions in $x^j$,
    \item if $v$ goes on the left by at least $m$, then $v[0..m-1]$ would
      again appear at two different positions in $x^j$.
    \end{itemize}
  \end{enumerate}
  Remark finally that all parsings of type 2 junctions have different sizes on
  the left. Therefore, at most one can contain $u$ on the left. The claim is
  proved. Thus $u$ appears at least $i-4m-2$
  times as a suffix of a regular green block.

  Remark that Property~(P1) implies that factors of size more than $k$ appear
  at most $k\sqrt{l}$ times in $x^j$. Thus, since $i-4m-2>k\sqrt{l}$, we have
  $|u|\leq k$. Hence by Property~(P1), the number of occurrences of $u$ is
  upper bounded by $kl/2^{|u|}$. Therefore
  \begin{align*}
    i - 4m - 2 &\leq  \frac{kl}{2^{|u|}}\\
    |u| &\leq \log(kl) - \log(i - 4m - 2),
  \end{align*}
  which proves the first part of the lemma.

  The number of such blocks is then upper bounded by the number of words of
  size $\leq \log(kl) - \log(i-4m -2)$, that is, $\frac{2kl}{i - 4m - 2}$.
\end{proof}

The last lemma completes the preceding one: if an offset-$i$ block is large,
then a lot of blocks have to end at position $i-1$ and too many of their prefixes
would have to be in different green blocks.
\begin{lemma}
\label{lemma:size-offset}
For any $j \in[1,2^p]$ and any $i > 2k\sqrt{l}$, the size of an offset-$i$
block included in a regular green block of the chain $z^j$ is at most
$$2\sqrt{l} + 3k+ 7m + \frac{2kl}{i - 4m - 2}.$$
\end{lemma}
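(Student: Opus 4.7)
The plan is to mimic the proof of Proposition~\ref{prop:smallblocks} from Section~\ref{sec:toy}, with Property~(P2) of Section~\ref{sec:family-de-bruijn} playing the role of Property~\pstar and with Lemma~\ref{lemma:fewjunction1} playing the role of Lemma~\ref{lem:junctions}.

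Fix $j$ and $i>2k\sqrt{l}$, and let $u$ be an offset-$i$ red block inside a regular green block of chain $z^j$. The first step is to show, by a case analysis essentially identical to the one performed for $uv$ in the proof of Lemma~\ref{lemma:fewjunction1}, that every prefix of $u$ of size at least $2m+1$ must begin at position $i$ in some regular green block of $z^j$: any other lodging would either put two disjoint copies of a length-$m$ factor of $x^j$ inside the family $F$ (contradicting~(P2)), or force a forbidden pattern such as $a^{m+1}$ or $1^{m+1}$ to appear inside a gadget of the form $g_0^c(j)$ or $g_b^c(j)$. Consequently at least $|u|-2m$ distinct red blocks of $0w$ end at position $i-1$ inside a regular green block of $z^j$. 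Call $M$ the set of those that are themselves offset-$i$ blocks: Lemma~\ref{lemma:fewjunction1} rules out more than $\tfrac{2kl}{i-4m-2}$ type~1 junctions ending at position $i-1$; the analysis of the gadgets done in Lemma~\ref{lemma:number-violations} caps the number of type~2 junctions ending at a position $>2k\sqrt{l}$ by a constant of order $m$; and no type~3 junction ends inside a regular block. This gives
$$|M|\;\geq\;|u|\;-\;O(m)\;-\;\tfrac{2kl}{i-4m-2},$$
with an explicit constant chosen so that the correction term becomes the $4m+1$ that appears below.

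Because the blocks of $M$ share the same ending position they have pairwise distinct sizes, hence $|M|\leq i$. For $i\leq 2\sqrt{l}$ this already yields $|u|\leq 2\sqrt{l}+O(m)+\tfrac{2kl}{i-4m-2}$, comfortably within the claimed estimate. For the remaining range $i>2\sqrt{l}$, I would set
$$A\;=\;\frac{|u|-3k-7m-\tfrac{2kl}{i-4m-2}}{2}$$
and repeat the counting trick of Proposition~\ref{prop:smallblocks}: the blocks of $M$ of size at least $A+2m$ together contribute at least $(|M|-A-2m)\,A$ distinct dictionary entries, each of which contains the specific factor $x^j[i-A-2m..i-A-m-1]$ of length $m$. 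By~(P2) this factor is pinned to a unique position in the entire family $F$, so (by the same case analysis used above) each such dictionary entry must appear inside a regular green block of $z^j$, and two of them inside the same green block would have overlapping images. Since $z^j$ contains at most $l$ regular green blocks, $(|M|-A-2m)\,A\leq l$, and solving this quadratic inequality for $|u|$ gives exactly $|u|\leq 2\sqrt{l}+3k+7m+\tfrac{2kl}{i-4m-2}$.

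The main obstacle is the exhaustive case analysis invoked in the first step (and again to justify that the dictionary entries of the last step live inside regular green blocks). One has to rule out, for each gadget shape ($g_0^c(j)$ and $g_b^c(j)$ with $b>0$) and each junction type, that a long prefix could lodge in the ``wrong'' place, using only Properties~(P1) and~(P2) and the deliberate avoidance of $a^{m+1}$ and $1^{m+1}$ patterns built into the gadgets. Once that bookkeeping is carried out the quadratic estimate is routine, and the constants $3k+7m$ are exactly what is needed to absorb the various $O(m)$ overheads from the junction/gadget accounting.
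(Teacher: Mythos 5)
Your proposal tracks the paper's argument closely in structure (case analysis to locate the predecessors of $u$, then a counting argument over prefixes of $M$ containing a pinned factor), but the final counting step contains a genuine gap. You assert that, ``by the same case analysis used above,'' each dictionary entry in $S$ must appear inside a regular green block of $z^j$ and conclude $(|M|-A-2m)A\leq l$. That assertion is false. The first case analysis concerns predecessors of $u$, i.e.\ red blocks carrying a long factor of $x^j$ starting at position exactly $i$; the entries of $S$, by contrast, are prefixes of elements of $M$ that carry the factor $x^j[i-A-2m..i-A-m-1]$ at a position well to the left of $i$, and such a red block \emph{can} lodge its relevant factor in a gadget or across a junction. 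The paper isolates these as ``problematic'' blocks and devotes a separate four-part enumeration (gadgets $g_0^c(j)$, both halves of $g_b^c(j)$, and the junction types) to bound their number by $2k\sqrt{l}+2m+1$; the correct inequality is then $|S|-(2k\sqrt{l}+2m+1)\leq l$, not $|S|\leq l$.

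Your choice of $A$ with the extra $-3k$ (absent from the paper's $A$) would, as it happens, numerically absorb the $2k\sqrt{l}+2m+1$ correction and still yield $A\leq\sqrt{l}$ hence the stated bound --- but that works only once the problematic-block count is actually established, which your write-up omits while attributing the $3k$ to generic ``$O(m)$ overheads.'' Two smaller issues: your threshold $2m+1$ for the first case analysis is below the paper's $3m$ (and below the $4m+1$ used in the Lemma~\ref{lemma:fewjunction1} analysis you say you are mimicking), so the sub-cases involving factors like $v[m-1..2m-2]$ need to be re-checked at your weaker threshold; and your preliminary case $i\leq 2\sqrt{l}$ is vacuous here since the hypothesis already gives $i>2k\sqrt{l}\geq 2\sqrt{l}$.
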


\begin{proof}
  We argue as in Proposition~\ref{prop:smallblocks}. Let $u$ be an offset-$i$
  block included in a regular green block of the chain $z^j$. We show as
  before that the $|u|-3m$ predecessors of $u$ of size $\geq 3m$ have to
  start at position $i$ in regular green blocks. Indeed, let $v$ be a
  prefix of size $\geq 3m$ of $u$; let us analyse the
  different cases:
  \begin{itemize}
  \item If $v$ is included in a regular green block, then it has to start at
    position $i$ by Property~(P2);
  \item $v$ cannot be included in a gadget since it would lead to a contradiction:
    \begin{itemize}
    \item in gadgets of type $g_0^c(j)$, $v$ would contain $a^{m+1}$ for some
      letter $a\in\{0,1\}$,
    \item in gadgets of type $g_b^c(j)$ (for
      $b\in]0,2k\sqrt{l}]$), either $v$ would contain $1^{m+1}$ or a factor of
      $x^j$ of size $m$ and at a position different from $i$;
    \end{itemize}
  \item If $v$ is included in a junction block of type 1, then $v$ starts at
    position $i$ in the left regular block, otherwise either $v[0..m-1]$ would
    be in the left regular block at a position different from $i$, or
    $v[m-1..2m-2]$ would be in the right regular block at a position
    $\leq 3m<i$;
  \item $v$ cannot be included in a junction block of type 2. Indeed, it
    cannot go by $\geq m$ on the right (by~(P2)), thus it goes on the left by
    at least $2m+1$: for $g_0^c(j)$ it would contain $a^{m+1}$ (for some
    $a\in\{0,1\}$), and for $g_b^c(j)$ ($b>0$), it would either contain
    $1^{m+1}$, or a factor of $x^j$ of size $m$ at a position $< m'\leq i$;
  \item If $v$ is included in a junction block of type 3, then $v$ starts at
    position $i$ in the left regular block, otherwise either $v[0..m-1]$ would
    be in the left (regular) block at a position different from $i$, or, by the
    red parsing, the gadget is of type $g_b^c(j)$ (for $b>0$) and $v[m-1..2m-2]$
    would be included in $x^j$ at a position $\leq 3m<i$.
  \end{itemize}
  Thus, at least $|u|-3m$ red blocks end at position $i-1$ in the regular
  blocks of $z^j$. Among them:
  \begin{itemize}
  \item By Lemma~\ref{lemma:fewjunction1}, at most $\frac{2kl}{i - 4m - 2}$
    of them are junctions of type 1.
  \item At most $2m$ of them are junctions of type 2, since from the
    $(2m+1)$-th gadget on, the type 2 junctions end at position $0$ (in case
    of gadgets $g_b^c(j)$ for $b>0$, see the proof of
    Lemma~\ref{lemma:number-violations}) or $\leq m+1\leq i-1$ (in case of
    gadgets $g_0^c(j)$).
  \item There is no junction of type 3 by definition.
  \end{itemize}

  Overall, at least $|u| - 5m - \frac{2kl}{i - 4m - 1}$ of them are
  offset blocks, ending at position $i-1$. We call the set of such
  blocks $M$. 
  Let
  $$A = \frac{|u|-7m-\frac{2kl}{i-4m-2}}{2}$$
  and
  $$S=\{u\in\p(M)\text{ containing } x^j[i-A-2m..i-A-1]\}.$$ We
  say that a red block $w$ is \emph{problematic} if $w\in S$ but the
  part of $w$ corresponding to the factor $x^j[i-A-2m..i-A-m-1]$ is
  not completely included in a regular green block. We show that the
  number of problematic blocks is at most $2k\sqrt{l}+ 2m + 1$.

  \begin{enumerate}
  \item The number of problematic blocks that overlap a gadget
    $g_0^c(j)=ua^c$ in the red parsing is at most $m+1$. Indeed,
    $ua^c$ is never split by the red parsing, therefore for
    $c\geq m+1$, a red block that overlaps $g_0^c(j)$ would contain
    $a^{m+1}$, which is not a factor of $x^j$.
  \item For $b > 0$ (recall that $b\leq 2k\sqrt{l}$), note that the red
    parsing splits the gadget $g_b^c(j)$ after $x^j[0..b-1]$.
    \begin{itemize}
    \item Observe first that the number of problematic blocks that
      overlap the second part of the gadget ($g_b^c(j)_{\geq b}$) is
      at most $m$. Indeed, this part is not split by the red parsing,
      therefore for $c \geq m$ a red block that overlaps this part
      would contain $\bar x^j_{m'}x^j[0..m-1]$, which is not possible
      since the position of the word $x^j[0..m-1]$ should be $0$ by
      Property~(P2)
    \item The number of problematic blocks that appear completely included in
      the first part of a gadget ($g_b^c(j)_{<b}$) is at most $b$. Otherwise,
      the red parsing creates at least one red block completely included in
      the first part $x^j[0..b-1]$ of the gadget, and we claim that this can
      happen at most $b$ times. Indeed, each time the parsing falls in this
      case, the last red block included in the first part of the gadget has to
      end at position $b-1$, but the size of this block has to be different
      each time, so that this second case can occur at most $b$ times.
      Finally, each time a gadget is parsed, at most one of the red blocks
      included in the first part of the gadget can be a word of $S$ by
      Property~(P2).
    \end{itemize}
  \item There is no problematic blocks that are junction blocks of
    type 1 or 3. Indeed, if it were the case, the right part of the
    junction would be of size $\leq m-1$ since otherwise the problematic
    block would contain $ax^j[0..m-1]$ for some letter $a$, which is
    not possible. Therefore, within the problematic block, the factor
    $x^j[i-A-2m..i-A-m-1]$ appears on the left side of the junction
    and is thus included in a regular block.
  \item The number of problematic blocks that are junction blocks of
    type 2 has already been considered when considering the gadgets.
  \end{enumerate}
  All the red blocks corresponding to words of $S$ and that are not
  problematic have to appear in distinct regular green blocks by
  Property~(P2). As before, a word of $S$ is obtained by choosing its
  beginning before the interval and its end after, so that
  $$|S|\geq \bigl(|u|-5m-\frac{2kl}{i-4m-2}-(A+2m)\bigr)\cdot A.$$
  Therefore:
  \begin{align*}
  |S| - (2k\sqrt{l}+ 2m + 1) & \leq l\\
  \left(\frac{|u|-7m-\frac{2kl}{i-4m-2}}{2}\right)^2 -
  (2k\sqrt{l}+ 2m + 1) & \leq l,
  \end{align*}
  so that
  \begin{align*}
  |u| & \leq 2\sqrt{l+ 2k\sqrt{l}+ 2m + 1} + 7m + \frac{2kl}{i -
    4m - 2}\\
    & \leq 2\sqrt{l} + 3k+ 7m + \frac{2kl}{i - 4m - 2}.
  \end{align*}
\end{proof}

\section{Infinite words}
\label{sec:infinite}

The techniques on finite words developed in the preceding sections can almost
be used as a black box to prove the one-bit catastrophe for infinite words
(Theorem~\ref{thm:infinite}). Our aim is to design an infinite word
$w\in\{0,1\}^\nn$ for which the compression ratios of the prefixes tend to
zero, whereas the compression ratios of the prefixes of $0w$ tend to
$\epsilon >0$. In Section~\ref{sec:general}, we concatenated the bricks
obtained in Section~\ref{sec:toy}; now, we concatenate an infinite number of
bricks of Section~\ref{sec:general} of increasing size (with the parameters
that gave the one-bit catastrophe on finite words). As before, each chain of
size $l$ will be parsed in $\Theta(l)$ green blocks and $\Theta(l^{3/2})$ red
blocks. To guarantee that the compression ratio always remains close to zero in
$w$ and never goes close to zero in $0w$, the size of the bricks mentioned
above will be adjusted to grow neither too fast nor too slow, so that the
compression speed will be locally the same everywhere.

We will need an infinite sequence of families $(F_i)_{i\geq 0}$ of words
similar to that of Section~\ref{sec:general}: thus we will need infinite
sequences of parameters to specify them.
\begin{itemize}
\item For $i\geq 0$, the size of words in $F_i$ will be $l_i=l_0.2^i$, for
  $l_0$ sufficiently large.
\item Let $p_i=\sqrt{l_i}/(9\gamma)-2\log l_i$, where $\gamma\geq 10$ is a
  constant. For $i>0$, the number of words in $F_i$ will be
  $|F_i|=2^{p_i}-2^{p_{i-1}}$ (and $|F_0|=2^{p_0}$). Remark that
  $\sum_{j=0}^i|F_j|=2^{p_i}$ and $|F_{i+1}|\sim|F_i|^{\sqrt{2}}$.
\item The parameter $k_i=(\log l_i)/2$ will be the maximal size of words in
  Property~P1$(i)$ below.
\item The parameter $m_i=\gamma p_i$ will be the size of words in
  Property~P2$(i)$ below.
\end{itemize}

We shall later show that there exists an infinite sequence
$\F=(F_i)_{i\geq 0}$ matching these parameters and satisfying some desired
properties (generalized versions of Properties~(P1) and~(P2), see below). But
from an arbitrary sequence $(F_i)_{i\geq 0}$, let us first define the ``base''
word from which $w$ will be constructed.
\begin{definition}\label{def:wf}
  Given a sequence $\F = (F_i)_{i\geq 0}$ where each $F_i$
  is a family of words, we denote by $w_{\F}$ the word
  $$w_{\F}=\prod_{i=0}^{\infty} \prod_{x \in F_i} \pref_{> q_x^i}(x) $$
  where
  $q_x^i = \max \{ a\,:\,x_{< a} \text{ is a prefix of a word
    in }\cup_{j<i} F_j \}$.
\end{definition}
For a particular sequence $\F=(F_i)$, the word $w$ will be equal to $w_\F$
with some gadgets inserted between the prefixes as in the previous sections.
The sequence $\F$ that we shall consider will be a sequence of families of
random words which will satisfy the following properties
(Lemma~\ref{lemma:infinite-family} below shows that these properties are true
with high probability).
\renewcommand{\descriptionlabel}[1]{\hspace{\labelsep}{#1}:} 
\begin{description}
\item[P1$(i)$] For all $x\in F_i$, for all words $u$ of size at most
  $k_i$, $\occ_x(u)\leq k_il_i/2^{|u|}$.
\item[(P1')] For all $i\geq 0$, P1$(i)$.
\item[P2$(i)$] Any factor $u$ of size $m_i$ appears in at most
  one word of $\cup_{j\leq i} F_j$, and within that word at only one position.
\item[(P2')] For all $i\geq 0$, P2$(i)$.
\end{description}
Again, (P2') guarantees a kind of ``independence'' of the families $F_0$,
$F_1$, $\dots$, whereas (P1') is a de Bruijn-style ``local'' property on each
word of each family $F_i$.

Our first lemma shows that there exists a sequence $\F=(F_i)_{i\geq 0}$
satisfying (P1') and (P2').
\begin{lemma}\label{lemma:infinite-family}
  For every $i\geq 0$, let $F_i$ be a set of $2^{p_i}-2^{p_{i-1}}$ words of
  size $l_i$ (and $2^{p_0}$ words of size $l_0$ for $F_0$) taken uniformly and
  independently at random. Then the probability that $\F$ satisfies
  Properties~(P1') and~(P2') is non-zero.
\end{lemma}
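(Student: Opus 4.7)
The strategy is the probabilistic method via a union bound over all levels $i\ge 0$, adapting the finite-case arguments developed for Properties~(P1) and~(P2) in Section~\ref{sec:family-de-bruijn}. Concretely, I will bound each of $\pr(\neg\text{P1}(i))$ and $\pr(\neg\text{P2}(i))$ and show that their total sum over $i\ge 0$ is strictly less than $1$ once $l_0$ is chosen large enough; the union bound then yields positive probability that $\F$ simultaneously satisfies every P1$(i)$ and P2$(i)$, i.e.\ (P1') and (P2').

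For P1$(i)$, which is a property of each individual word of $F_i$ taken separately, the Chernoff-plus-union-bound argument used in Section~\ref{sec:family-de-bruijn} to establish (P1) applies essentially verbatim and gives $\pr(\neg\text{P1}(i)) = 2^{-\Omega(\sqrt{l_i}\log\log l_i)}$. Since $l_i = l_0\cdot 2^i$ grows geometrically in $i$, this quantity is doubly exponentially small, and the sum over $i$ can be made arbitrarily small by taking $l_0$ large.

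For P2$(i)$ the subtlety is that the property concerns the union $\cup_{j\le i}F_j$, so the proof of Lemma~\ref{lem:p2whp} has to be adapted with the total length $L_i = \sum_{j\le i}|F_j|\cdot l_j$ in place of the single-family length $l\cdot 2^p$. I would follow the same two-step argument: first bound by $L_i\cdot 2^{-m_i/2+1}$ the probability that some word of $\cup_{j\le i}F_j$ contains a self-overlapping (``bad'') factor of size $m_i$; then, conditionally on all words being good, bound by $L_i^2\cdot 2^{-3m_i/2}$ the probability that some good factor of size $m_i$ occurs twice in $\cup_{j\le i}F_j$. Since $l_j=l_0\cdot 2^j$ doubles with $j$, one has $L_i\le 2\cdot|F_i|\cdot l_i\le 2\cdot 2^{p_i}l_i$; combined with $m_i=\gamma p_i$, $\gamma\ge 10$, and $p_i=\Theta(\sqrt{l_i})$, both terms become of the form $l_i^{O(1)}\cdot 2^{-\Omega(\sqrt{l_i})}$, so they too decay doubly exponentially in $i$.

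Summing the bounds on $\pr(\neg\text{P1}(i))$ and $\pr(\neg\text{P2}(i))$ over $i\ge 0$ then produces a total failure probability that can be made arbitrarily small by enlarging $l_0$, in particular $<1$, whence the lemma follows from the union bound. The only real obstacle is the bookkeeping: checking that $L_i$ is indeed dominated by the contribution of the last family (which is immediate because the $l_j$ grow geometrically), and verifying that the constants chosen in $p_i=\sqrt{l_i}/(9\gamma)-2\log l_i$ and in $\gamma\ge 10$ produce exponents of the correct sign in each probability estimate. All the substantive probabilistic content is already contained in the finite-case arguments of Section~\ref{sec:family-de-bruijn}, so the work here is mostly quantitative.
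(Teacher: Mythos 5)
Your proposal is correct and follows essentially the same route as the paper: a union bound over $i$ combined with the finite-case probability estimates of Section~\ref{sec:family-de-bruijn}. The paper's own proof is slightly more economical — it invokes Lemma~\ref{lem:p2whp} directly with parameters $l=l_i$, $p=p_i$ (using that $\sum_{j\le i}|F_j|=2^{p_i}$ and that shorter words only lower the collision probability), obtaining $\pr(\neg\text{P2}(i))<2/l_i=2^{1-i}/l_0$ which sums to $4/l_0$ — whereas you re-derive the underlying estimate with the total length $L_i$, which is more explicit about the varying word sizes but quantitatively equivalent (your intermediate bound $L_i\le 2|F_i|l_i$ could also be replaced by the more immediate $L_i\le l_i\sum_{j\le i}|F_j|=l_i 2^{p_i}$).
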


\begin{proof}
  Let us show that the probability that $\F$ satisfies (P1') is $>1/2$, and
  similarly for (P2'). We only show it for (P2'), as an analogous (and easier)
  proof gives the result for (P1') as well.

  By Lemma~\ref{lem:p2whp}, the probability that $\F$ does not satisfy
  P2$(i)$ is less than $2/l_i=2^{1-i}/l_0$. Thus, by union bound, the
  probability that all P2$(i)$ are satisfied is larger than
  $$1 -\sum_{i=0}^{\infty} \frac{2^{1-i}}{l_0}=1-\frac{4}{l_0}.$$
\end{proof}

From now on, we consider a sequence of families $\F=(F_i)_{i\geq 0}$, with
parameters $(l_i)$ and $(p_i)$, that has both Properties~(P1') and~(P2') for
the parameters $(m_i)$ and $(k_i)$ defined above. Remark that the integers
$q_x^i$ defined in Definition~\ref{def:wf} satisfy $q_x^i\leq m_i$ thanks to
Property~P2$(i)$.

The word $w$ that we consider is the word $w_{\F}$ (Definition~\ref{def:wf})
where gadgets have possibly been added between the regular green blocks
exactly as in the algorithm of Section~\ref{sec:general}. Since $\F$ satisfies
(P1') and (P2'), and the parameters $l_i$, $p_i$ fall within the range of
Theorem~\ref{thm:lowerbound}, it can be shown as in Section~\ref{sec:general}
that a chain of $w$ coming from $F_i$ will be parsed in $\geq l_i^{3/2}/54$
red blocks in $0w$ but in only $\leq 3l_i/2$ green blocks. The following two
lemmas show Theorem~\ref{thm:infinite}, i.e. that $w$ satisfies the one-bit
catastrophe. We begin with the upper bound on the compression ratio of $w$,
before proving the lower bound for $0w$ in Lemma~\ref{lem:compinf}.
\begin{lemma}\label{lem:compsup}
  $\compsup(w)=0$.
\end{lemma}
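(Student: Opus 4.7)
The plan is to show that $\comp(w_{<n}) \to 0$ as $n \to \infty$, which immediately gives $\compsup(w)=0$. The intuition is that a chain of $w$ originating from a word $x\in F_i$ is essentially $\pref_{>q}(x)$, hence nearly optimally compressible: it contributes $\Theta(l_i)$ green blocks over a chunk of length $\Theta(l_i^2)$, for a local compression ratio of $O((\log l_i)/l_i)$, which goes to $0$ as $i\to\infty$. The only work is to check that partial prefixes and transitions between families behave well.

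First I would bound the number of green blocks in $w_{<n}$. Thanks to Property~(P2') and the definition of the $q_x^i$, the green parsing resynchronises at each chain boundary, so the argument of Lemma~\ref{lem:general-upper-bound} applies to each chain individually: a chain coming from a word of $F_j$ has at most $l_j$ regular blocks plus at most $l_j/2$ gadgets, so at most $\tfrac{3}{2}l_j$ green blocks. Let $i$ be such that $w_{<n}$ ends within chains of $F_i$, and let $r$ be the number of chains of $F_i$ fully contained in $w_{<n}$; then
$$|\dic(w_{<n})| \;\leq\; \sum_{j<i}|F_j|\cdot\tfrac{3l_j}{2} + (r+1)\cdot\tfrac{3l_i}{2}.$$

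Second, I would lower bound $n$ by the lengths of the chains already laid down. Since $q_x^j\leq m_j=o(l_j)$, a chain from $F_j$ has length at least $l_j^2/4$ for $l_0$ large, so
$$n \;\geq\; \sum_{j<i}|F_j|\cdot\tfrac{l_j^2}{4} + r\cdot\tfrac{l_i^2}{4}.$$

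Finally I would combine the two bounds using the super-exponential growth of the family sizes. Since $l_j=l_0\,2^j$ and $p_j\sim\sqrt{l_j}/(9\gamma)$, we have $p_{j+1}-p_j\to\infty$, so both $\sum_{j<i}|F_j|l_j^2$ and $\sum_{j<i}|F_j|l_j$ are $\Theta(|F_{i-1}|l_{i-1}^2)$ and $\Theta(|F_{i-1}|l_{i-1})$ respectively. Two subcases then arise. If $r\geq 1$, the $F_i$-contributions dominate both bounds, and the ratio reduces to $O((\log|F_i|+\log l_i)/l_i)=O(p_i/l_i)=O(1/\sqrt{l_i})\to 0$. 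If $r=0$, i.e.\ $w_{<n}$ lies inside the very first chain of $F_i$, then $n\geq\Omega(|F_{i-1}|l_{i-1}^2)$ is dwarfed only by the block count $O(|F_{i-1}|l_{i-1})+\tfrac{3}{2}l_i$, where the term $l_i$ is negligible against $|F_{i-1}|l_{i-1}$; this gives a ratio of $O((\log n)/l_{i-1})\to 0$. In either case $\comp(w_{<n})\to 0$, and since this holds for all sequences $n\to\infty$, $\compsup(w)=0$.

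The only slightly delicate point is the transition case $r=0$, where the current chain has just begun and its own length is too small to dominate the denominator; the super-exponential growth of $|F_j|$ is precisely what rescues the argument there, since the accumulated length of the older families already exceeds many single chains of $F_i$. All remaining estimates are routine arithmetic on the parameters of Figure~\ref{fig:parameters} adapted to the $l_i,p_i$ chosen here.
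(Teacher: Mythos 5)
Your proof follows the same decomposition as the paper's: locate $w_{<n}$ within the chain structure, upper-bound the green block count chain by chain via Lemma~\ref{lem:general-upper-bound}, lower-bound $n$ by chain lengths, exploit the super-exponential growth of $|F_i|$ so that only the most recent family matters, and conclude that the ratio is $O(p_j/l_j)\to 0$. The case split you make on $r$ matches the paper's split on $q$ (essentially $r = q-1$).

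One sentence is factually incorrect: when $r\geq 1$ it is \emph{not} generally true that ``the $F_i$-contributions dominate both bounds''. For small $r$ (say $r=1$), the accumulated old-family terms $\Theta(|F_{i-1}|l_{i-1}^2)$ and $\Theta(|F_{i-1}|l_{i-1})$ dwarf $r\,l_i^2/4$ and $(r+1)\,\tfrac{3}{2}l_i$ respectively, since $|F_{i-1}|$ is astronomically large. What saves the argument is that the ratio is still controlled: applying $(a+b)/(c+d)\leq\max(a/c,\,b/d)$ termwise gives $G/n = O(\max(1/l_{i-1},\,(r+1)/(r\,l_i))) = O(1/l_{i-1})$, and combined with $\log G = O(p_i)$ this yields $O(p_i/l_{i-1}) = O(1/\sqrt{l_i})\to 0$. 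The paper sidesteps this entirely by writing both $n$ and $G$ as multiples of $(|F_{j-1}|+q)$, which cancels cleanly; your argument reaches the same place but the dominance claim should be replaced by a termwise ratio bound. The $r=0$ case is handled correctly.
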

\begin{proof}
  By definition (Definition~\ref{def:infinite-compression-ratio}),
  $$\compsup(w)=\limsup_{n\to\infty}\comp(w_{<n}),$$
  therefore we need to show that $\comp(w_{<n})=G(\log G)/n$ tends to zero,
  where $G=|\dic(w_{<n})|$ is the number of green blocks in the parsing of
  $w_{<n}$. Let us evaluate this quantity for a fixed $n$.

  Let $j$ and $q$ be the integers such that the $n$-th bit of $w$ belongs to
  the $q$-th chain of the $j$-th family, or in other terms, that $w_{<n}$ is
  the concatenation of the chains coming from $\cup_{i<j} F_i$ and of the
  first $q-1$ chains of $F_j$, together with a piece of the $q$-th chain of
  $F_j$.

  We first give a lower bound on $n$ as a function of the different
  parameters. A chain coming from Family $F_i$ is of the form
  $\pref_{>q_x^i}(x)$ together with possible gadgets, where $q_x^i\leq
  m_i\leq\sqrt{l_i}$. Therefore, the size of such a chain is at least
  $$\sum_{j=m_i}^{l_i}j\geq \frac{l_i^2-m_i^2}{2}=\frac{l_i^2}{2}-o(l_i^2).$$
  Thus, using $l_j=2l_{j-1}$, we get:
  $$ n + o(n)\geq \sum_{i=0}^{j-1}|F_i|\frac{l_i^2}{2} + (q-1)\frac{l_j^2}{2}\geq
  \frac{l_{j-1}^2}{2}(|F_{j-1}|+q)$$
  as soon as $2(q-1)\geq q/2$, that is, $q\geq 2$. (We shall take care of the
  case $q=1$ below.)

  On the other hand, when all possible gadgets are added, each chain has a
  size at most $5l_i^2/8+o(l_i^2)$ (see Section~\ref{sec:construction}). Using
  the fact that $|F_{i+1}|\sim |F_i|^{\sqrt{2}}$ (i.e. the growth of the
  sequence $(|F_i|)_{i\geq 0}$ is more than exponential), we obtain the
  following upper bound:
  $$n-o(n)\leq \frac{5}{8}\biggl(\sum_{i=0}^{j-1}|F_i|l_i^2 + ql_j^2\biggr)\leq
  \frac{5}{8}(2|F_{j-1}|l_{j-1}^2+ql_j^2)=\frac{5}{4}(|F_{j-1}|+2q)l_{j-1}^2.$$
  In particular,
  $$\log G\leq\log n\leq 2\log|F_{j-1}|.$$

  Let us now bound the number of green  blocks. A chain coming from $F_i$,
  with gadgets, is parsed in at most $3l_i/2$ blocks. Hence
  $$G\leq \frac{3}{2}\biggl(\sum_{i=0}^{j-1}|F_i|l_i+ql_j\biggr)\leq
  \frac{3}{2}(2|F_{j-1}|l_{j-1}+2ql_{j-1})=3l_{j-1}(|F_{j-1}|+q).$$

  We can now bound the compression ratio of $w_{<n}$:
  $$\comp(w_{<n})=\frac{G\log G}{n}\leq
  \frac{6}{l_{j-1}}\cdot 2\log|F_{j-1}|\xrightarrow[n\to\infty]{}0.$$

  Finally, for the case $q=1$, looking back at the inequalities above we have:
  $n+o(n)\geq|F_{j-1}|l_{j-1}^2/2$ and $G\leq 3l_{j-1}(|F_{j-1}|+1)$, thus
  $\comp(w_{<n})$ again tends to zero.
\end{proof}

Finally we turn to the lower bound on the compression ratio of $0w$.
\begin{lemma}\label{lem:compinf}
  $\compinf(0w)\geq 2/(1215\gamma)$.
\end{lemma}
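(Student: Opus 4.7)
The plan is to mirror the proof of Lemma~\ref{lem:compsup}: fix a large $n$, let $j$ and $q$ be the indices such that the $n$-th bit of $0w$ lies in the $q$-th chain coming from $F_j$, and denote by $R$ the number of red blocks in the parsing of $(0w)_{<n}$. I will show $\comp((0w)_{<n})=(R\log R)/n \geq 2/(1215\gamma)-o(1)$ uniformly in $n$, which immediately gives the $\liminf$.

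The upper bound on $n$ is obtained exactly as in Lemma~\ref{lem:compsup}: each chain has size at most $(5/8)l_i^2$ (even when all gadgets are inserted), and the growth $|F_{i+1}|\sim|F_i|^{\sqrt{2}}$ forces the sum over $i<j$ to be dominated by its last term, giving
$$n \leq \frac{5}{4}(|F_{j-1}|+2q)\,l_{j-1}^2 + o(n).$$
The new ingredient is the lower bound on $R$: each completed chain of $F_i$ matches the construction of Section~\ref{sec:construction} on a family satisfying (P1) and (P2) through (P1') and (P2'), so by Theorem~\ref{thm:lowerbound} it contributes at least $l_i^{3/2}/54$ red blocks. Summing over the completed chains and using $l_j^{3/2}=2\sqrt{2}\,l_{j-1}^{3/2}$ together with the dominance of the last term of $\sum_{i<j}|F_i|l_i^{3/2}$,
$$R \geq \frac{1}{54}\bigl(|F_{j-1}|+2\sqrt{2}(q-1)\bigr)\,l_{j-1}^{3/2}.$$

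Dividing, the ratio $(|F_{j-1}|+2\sqrt{2}(q-1))/(|F_{j-1}|+2q)$ is at least $1-o(1)$ uniformly in $q\geq 1$ (for $q\leq 3$ the $|F_{j-1}|$ terms dominate since $|F_{j-1}|\to\infty$; for $q\geq 4$ the inequality $2\sqrt{2}(q-1)\geq 2q$ already holds), so
$$\frac{R}{n} \geq \frac{2}{135\sqrt{l_{j-1}}}\,\bigl(1-o(1)\bigr).$$
From $R \geq |F_{j-1}|l_{j-1}^{3/2}/54$ we deduce $\log R \geq \log|F_{j-1}|-O(\log l_{j-1})$, and since $\log|F_{j-1}|\sim p_{j-1}=\sqrt{l_{j-1}}/(9\gamma)-2\log l_{j-1}$, we obtain $\log R \geq \sqrt{l_{j-1}}/(9\gamma)\cdot(1-o(1))$. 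Multiplying the two estimates cancels the $\sqrt{l_{j-1}}$ factor and yields
$$\comp\bigl((0w)_{<n}\bigr) \geq \frac{2}{135\cdot 9\gamma}\,\bigl(1-o(1)\bigr) = \frac{2}{1215\gamma}\,\bigl(1-o(1)\bigr).$$
Taking the $\liminf$ as $n\to\infty$ proves the claim; note that the choice $\gamma=10$ recovers the constant $1/6075$ appearing in Theorem~\ref{thm:infinite}.

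The main obstacle I expect is that this bound is tight precisely at the boundary case $q=1$, where a new family has just started: $R$ benefits only from completed chains of $F_{j-1}$, while $n$ is already of order $|F_{j-1}|l_{j-1}^2$. For moderate or large $q$ the factor $\sqrt{2}$ in front of $q-1$ in the lower bound on $R$ produces a better constant of order $4/(1215\gamma)$, so it is only the transitions between consecutive families that control the $\liminf$. One must also dispose of the first few values of $j$ where $|F_{j-1}|$ is still too small for the $o(1)$ terms to be really small, but those concern only a bounded initial segment of $0w$ and are absorbed into the $\liminf$.
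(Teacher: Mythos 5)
Your proof is correct and follows essentially the same route as the paper's: you pair the upper bound on $n$ from Lemma~\ref{lem:compsup} with a per-chain lower bound of $l_i^{3/2}/54$ red blocks coming from Theorem~\ref{thm:lowerbound}, handle the partially-completed family via $l_j^{3/2}=2\sqrt{2}\,l_{j-1}^{3/2}$ and the case split $q\geq4$ versus $q\leq3$, and finish by estimating $\log R$ through $\log|F_{j-1}|\sim p_{j-1}\sim\sqrt{l_{j-1}}/(9\gamma)$. The algebra and the resulting constant $2/(1215\gamma)$ match the paper exactly.
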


\begin{proof}
  Define $j$ and $q$ as in the proof of Lemma~\ref{lem:compsup}: we want to
  give a lower bound on $(R\log R)/n$, where $R=|\dic(0w_{<n})|$ is the number
  of red blocks in the parsing of $0w_{<n}$. The upper 
  bound for $n$ given there still hold:
  $$n-o(n)\leq \frac{5}{4}(|F_{j-1}|+2q)l_{j-1}^2.$$

  Let us now give a lower bound on $R$. Suppose for now that $q\geq 4$, so
  that $2\sqrt{2}(q-1)\geq 2q$. The proof of Theorem~\ref{thm:lowerbound} in
  Section~\ref{sec:general} shows that each chain coming from a family $F_i$
  is parsed in at least $\epsilon l_i^{3/2}$ red blocks, where
  $\epsilon=1/54$. Hence:
  \begin{align*}
  R&\geq \sum_{i=0}^{j-1} \epsilon |F_i|l_i^{3/2}+\epsilon(q-1)l_j^{3/2}\\
  &\geq \epsilon (|F_{j-1}|l_{j-1}^{3/2}+2\sqrt{2}(q-1)l_{j-1}^{3/2})\\
  &\geq \epsilon (|F_{j-1}|+2q)l_{j-1}^{3/2}.
  \end{align*}
  Therefore,
  $$\frac{R\log R}{n}\geq
  \frac{4\epsilon}{5}\cdot\frac{\log|F_{j-1}|}{\sqrt{l_{j-1}}}\sim
  \frac{4\epsilon}{5\times 9\gamma}.$$

  In the case $q\leq 3$, we have $q<<|F_{j-1}|$ and the same bound holds.
\end{proof}

\section{Future work}
\label{sec:future}

A word on what comes next. As mentioned in the introduction, we have
privileged ``clarity'' over optimality, hence constants can undoubtedly be
improved rather easily. In that direction, a (seemingly harder) question is to
obtain $\compsup(w)=0$ and $\compinf(0w)=1$ in Theorem~\ref{thm:infinite}.

The main challenge though, to our mind, is to remove the gadgets in our
constructions. Remark that the construction of Section~\ref{sec:toy} can also
be performed with high probability with a random word instead of a de Bruijn
sequence (that is what we do in Section~\ref{sec:general} in a more general
way). Thus, if we manage to get rid of the gadgets using the same techniques
presented here, this would mean that the ``weak catastrophe'' is the typical
case for optimally compressible words. Simulations seem to confirm that
conclusion. But, as a hint that removing the gadgets may prove difficult,
Remark~\ref{rem:gadgets} emphasizes the vastly different behaviour of the
LZ-parsings on $1w$ with and without gadgets.

\section{Acknowledgments}
\label{sec:acknowledgments}

We want to thank Elvira Mayordomo for sharing, long time ago, the ``one-bit
catastrophe'' question with us. The long discussions with Sophie Laplante have
helped us present our results in a more readable (less unreadable) way. We
thank Lucas Boczkowski for his close scrutiny of parts of this paper, and
Olivier Carton for useful discussions.

\bibliographystyle{plain}
\bibliography{biblio}
\end{document}